\documentclass[a4paper,12pt,reqno,twoside, titlepage]{amsart}
\usepackage[margin=2cm]{geometry}
\usepackage{setspace}
\usepackage{amsmath, mathtools, calligra}
\usepackage{amsfonts}
\usepackage{amssymb}
\usepackage{xcolor}
\usepackage{graphicx}
\usepackage{pstricks}
\usepackage{array}
\usepackage{pictexwd}
\usepackage{fancybox}
\usepackage{natbib}
\usepackage{newcent}
\usepackage{enumitem}
\usepackage{tikz}
\usepackage{mathrsfs}
\usepackage{bm}
\usepackage{dsfont}
\usepackage{subcaption}

\setlength{\parskip}{\medskipamount}
\setlength{\parindent}{0pt}
\setlist{nolistsep}
\newcolumntype{C}[1]{>{\centering\let\newline\\\arraybackslash\hspace{0pt}}m{#1}}
\setstretch{1.5}
\raggedbottom

\DeclareMathAlphabet{\mathpzc}{OT1}{pzc}{m}{it}

\newtheorem{theorem}{Theorem}
\newtheorem{proposition}{Proposition}

\newtheorem{corollary}{Corollary}
\newtheorem{lemma}{Lemma}

\def \beq{\begin{eqnarray*}}
\def\eeq{\end{eqnarray*}}

\usepackage[textwidth=30mm]{todonotes}

\def \m{\overline{m}_{\overline{q}^1}}
\def \ms{\overline{m}_{q^*}}
\def \ph{\overline{\varphi}}
\def \la{\overline{\lambda}}
\def \q {\underline{q}^1}
\usepackage{mathtools}




\begin{document}
\title{Contracting over persistent information}
\date{\today}
\thanks{Wei Zhao gratefully acknowledges the support of the HEC Foundation. Claudio Mezzetti thankfully acknowledges financial support from the Australian Research Council Discovery grant DP190102904. Ludovic Renou gratefully acknowledges the support of the Agence Nationale pour la Recherche under grant ANR CIGNE (ANR-15-CE38-0007-01) and through the ORA Project ``Ambiguity in Dynamic Environments'' (ANR-18-ORAR-0005). Tristan Tomala gratefully acknowledges the support of the HEC Foundation and ANR/Investissements d'Avenir under grant ANR-11-IDEX-0003/Labex Ecodec/ANR-11-LABX-0047. }
\author{Wei Zhao}
\address{Wei Zhao, HEC Paris and GREGHEC-CNRS, 1 rue de la Lib\'eration, 78351 Jouy-en-Josas, France}
\email{wei.zhao1(at)hec.fr}
\author{Claudio Mezzetti}
\address{Claudio Mezzetti, School of Economics, The University of Queensland, Level 6, Colin Clark BuildingSt Lucia,
Brisbane Qld 4072, Australia}
\email{c.mezzetti(at)uq.edu.au}
\author{Ludovic Renou}
\address{Ludovic Renou, Queen Mary University of London, CEPR and University of Adelaide, Miles End, E1 4NS, London, UK}
\email{lrenou.econ(at)gmail.com}
\author{Tristan Tomala}
\address{Tristan Tomala, HEC Paris and GREGHEC-CNRS, 1 rue de la Lib\'eration, 78351 Jouy-en-Josas, France}
\email{tomala(at)hec.fr}

\begin{abstract}
We consider a dynamic principal-agent problem, where the \emph{sole} instrument the principal has to incentivize the agent is the disclosure of information. The principal aims at maximizing the (discounted) number of times the agent chooses the principal's most preferred action, e.g., to work hard on the principal's task. We show that there exists an optimal contract, where the principal stops disclosing information as soon as its most preferred action is a static best reply for the agent, or else continues disclosing information until the agent \emph{perfectly} learns the principal's private information. If the agent perfectly learns the state, he learns it in finite time with probability one; the more patient the agent, the later he learns it.

 \medskip \noindent \textsc{Keywords}: Dynamic,  contract, information, revelation, disclosure, sender, receiver, persuasion.

\smallskip \noindent \textsc{JEL Classification}: C73, D82.
\end{abstract}

\maketitle

\newpage

\section{Introduction}
We consider a dynamic ``principal-agent'' model, where the sole instrument the principal has is information.\footnote{That is, the principal cannot make transfers, terminate the relationship, choose allocations or constrain the agent's choices.} The principal aims at incentivizing the agent to choose the same action -- the principal's most preferred action -- as often as possible, and can only do so by disclosing information about an unknown (binary) state. E.g., the agent is a customer and the principal a service provider, who discloses information about its services to generate as many sales as possible.  We assume that the principal commits to a disclosure policy and we refer to the principal's commitment as the offer of a ``contract.'' The dynamic contracting problem we study is, therefore, a \emph{dynamic persuasion problem}. \medskip

The standard approach in the study of dynamic contracting models (e.g., \citet{spear-srivastava-87}) is to use the agent's continuation value as a state variable. The principal's Bellman equation is then the fixed point of an operator, which satisfies a promised keeping constraint in addition to incentive constraints. In dynamic persuasion models, there is an additional complication, however. The information the principal commits to disclose to the agent generates a \emph{martingale} of beliefs:   the posterior beliefs of the agent must be equal in expectation to his prior beliefs. We thus need to incorporate the agent's beliefs as an additional state variable and to impose the constraint that the belief process is a martingale.  In spite of the increased dimensionality of the principal's problem, we are able to provide a complete characterization of an optimal contract by simultaneously solving for the evolution of the agent's beliefs and promised utility. To the best of our knowledge, we are the first to tackle this difficulty. 

  \medskip

We illustrate the main properties of our optimal policy -- particularly how beliefs evolve over time -- with the help of Figure \ref{fig:beliefsintro}. Figure \ref{fig:beliefsintro} plots four representative evolutions of the agent's belief about the ``high opportunity cost'' state -- the state where the cost to incentivize the agent relative to the benefit is  the highest.  In each panel, the grey region ``OPT'' indicates the region at which choosing the principal's most preferred action is (statically) optimal for the agent.  An arrow pointing from one belief to another indicates how the agent revised his belief within the period following a signal's realization. Multiple arrows originating from the same point thus represent the information disclosed by the policy. Within a period,  the agent takes a decision after having revised his beliefs. Arrows have different colors/patterns.  At all beliefs at the end of continuous black arrows, the agent follows the principal's recommendation.  At all beliefs at the end of dotted magenta arrows, he does not (and chooses what is best given his current belief).

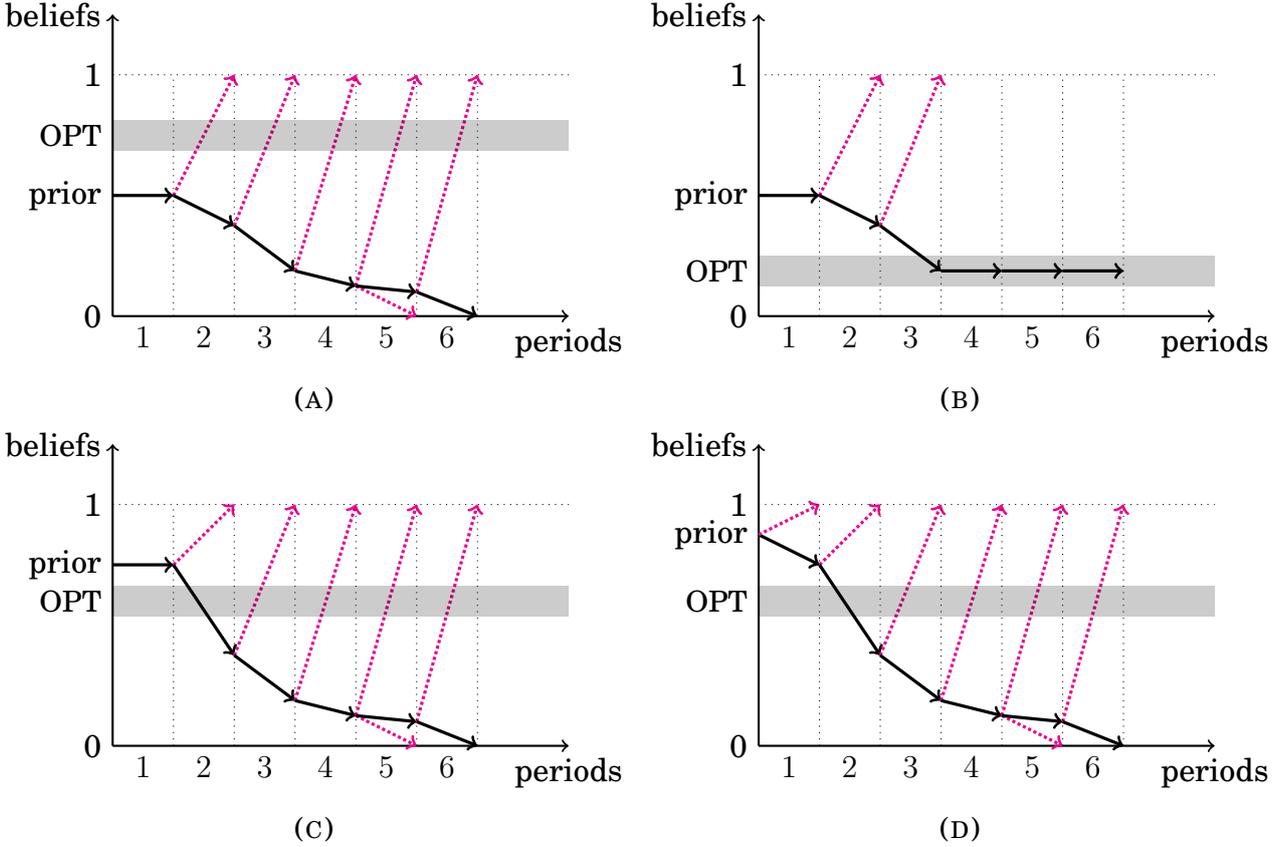
\begin{figure}[h]
\centering
\begin{subfigure}[b]{0.5\textwidth}
\centering
\begin{tikzpicture}[scale = 0.8]
\draw[->, thick](0,0)--(7.5,0) node[below]{periods};
\draw[->,thick](0,0)--(0,5) node[left]{beliefs};
\draw[->, very thick](0,2)--(1,2);
\draw[->,very thick, densely dotted, magenta](1,2)--(2,4);
\draw[->,very thick](1,2)--(2,1.5);
\draw[->,very thick, densely dotted, magenta](2,1.5)--(3,4);
\draw[->,very thick](2,1.5)--(3,0.75);
\draw[->,very thick, densely dotted, magenta](3,0.75)--(4,4);
\draw[->,very thick](3,0.75)--(4,0.5);
\draw[->,very thick, densely dotted, magenta](4,0.5)--(5,0);
\draw[->,very thick,  densely dotted, magenta](4,0.5)--(5,4);
\draw[->,very thick](4,0.5)--(5,0.4);
\draw[->,very thick, densely dotted, magenta](5,0.4)--(6,4);
\draw[->,very  thick](5,0.4)--(6,0);
\node[below] at (0.5,0) {$1$};
\node[below] at (1.5,0) {$2$};
\node[below] at (2.5,0) {$3$};
\node[below] at (3.5,0) {$4$};
\node[below] at (4.5,0) {$5$};
\node[below] at (5.5,0){$6$};
\draw[dotted](0,4)--(7.5,4) node[left] at (0,4){1} ; 
\node[left] at (0,2){prior};
\draw[dotted](3,0)--(3,4);
\draw[dotted](2,0)--(2,4);
\draw[dotted](1,0)--(1,4);
\draw[dotted](4,0)--(4,4);
\draw[dotted](5,0)--(5,4);
\draw[dotted](6,0)--(6,4);
\node[left] at (0,0) {0};
\fill[fill opacity=0.2] (0,2.75)--(0,3.25)--(7.5,3.25)--(7.5,2.75)--cycle;
\node[left] at (0,3){\small{OPT}};
\end{tikzpicture}
\caption{}
\end{subfigure}%
\begin{subfigure}[b]{0.5\textwidth}
\centering
\begin{tikzpicture}[scale=0.8]
\draw[->, thick](0,0)--(7.5,0) node[below]{periods};
\draw[->,thick](0,0)--(0,5) node[left]{beliefs};
\draw[->, very thick](0,2)--(1,2);
\draw[->,very thick, densely dotted, magenta](1,2)--(2,4);
\draw[->,very thick](1,2)--(2,1.5);
\draw[->,very thick, densely dotted, magenta](2,1.5)--(3,4);
\draw[->,very thick](2,1.5)--(3,0.75);
\draw[->,very thick](3,0.75)--(4,0.75);
\draw[->,very thick](4,0.75)--(5,0.75);
\draw[->,very  thick](5,0.75)--(6,0.75);
\node[below] at (0.5,0) {$1$};
\node[below] at (1.5,0) {$2$};
\node[below] at (2.5,0) {$3$};
\node[below] at (3.5,0) {$4$};
\node[below] at (4.5,0) {$5$};
\node[below] at (5.5,0){$6$};
\draw[dotted](0,4)--(7.5,4) node[left] at (0,4){1} ; 
\node[left] at (0,2){prior};
\draw[dotted](3,0)--(3,4);
\draw[dotted](2,0)--(2,4);
\draw[dotted](1,0)--(1,4);
\draw[dotted](4,0)--(4,4);
\draw[dotted](5,0)--(5,4);
\draw[dotted](6,0)--(6,4);
\node[left] at (0,0) {0};
\fill[fill opacity=0.2] (0,0.5)--(0,1)--(7.5,1)--(7.5,0.5)--cycle;
\node[left] at (0,0.75){\small{OPT}};
\end{tikzpicture}
\caption{}
\end{subfigure}

\begin{subfigure}[b]{0.5\textwidth}
\centering
\begin{tikzpicture}[scale = 0.8]
\draw[->, thick](0,0)--(7.5,0) node[below]{periods};
\draw[->,thick](0,0)--(0,5) node[left]{beliefs};
\draw[->, very thick](0,3)--(1,3);
\draw[->,very thick, densely dotted, magenta](1,3)--(2,4);
\draw[->,very thick](1,3)--(2,1.5);
\draw[->,very thick, densely dotted, magenta](2,1.5)--(3,4);
\draw[->,very thick](2,1.5)--(3,0.75);
\draw[->,very thick, densely dotted, magenta](3,0.75)--(4,4);
\draw[->,very thick](3,0.75)--(4,0.5);
\draw[->,very thick, densely dotted, magenta](4,0.5)--(5,0);
\draw[->,very thick,  densely dotted, magenta](4,0.5)--(5,4);
\draw[->,very thick](4,0.5)--(5,0.4);
\draw[->,very thick, densely dotted, magenta](5,0.4)--(6,4);
\draw[->,very  thick](5,0.4)--(6,0);
\node[below] at (0.5,0) {$1$};
\node[below] at (1.5,0) {$2$};
\node[below] at (2.5,0) {$3$};
\node[below] at (3.5,0) {$4$};
\node[below] at (4.5,0) {$5$};
\node[below] at (5.5,0){$6$};
\draw[dotted](0,4)--(7.5,4) node[left] at (0,4){1} ; 
\node[left] at (0,3){prior};
\draw[dotted](3,0)--(3,4);
\draw[dotted](2,0)--(2,4);
\draw[dotted](1,0)--(1,4);
\draw[dotted](4,0)--(4,4);
\draw[dotted](5,0)--(5,4);
\draw[dotted](6,0)--(6,4);
\node[left] at (0,0) {0};
\fill[fill opacity=0.2] (0,2.65)--(0,2.15)--(7.5,2.15)--(7.5,2.65)--cycle;
\node[left] at (0,2.4){\small{OPT}};
\end{tikzpicture}
\caption{}
\end{subfigure}%
\begin{subfigure}[b]{0.5\textwidth}
\centering
\begin{tikzpicture}[scale = 0.8]
\draw[->, thick](0,0)--(7.5,0) node[below]{periods};
\draw[->,thick](0,0)--(0,5) node[left]{beliefs};
\draw[->, very thick](0,3.5)--(1,3);
\draw[->,very thick, densely dotted, magenta](0,3.5)--(1,4);
\draw[->,very thick, densely dotted, magenta](1,3)--(2,4);
\draw[->,very thick](1,3)--(2,1.5);
\draw[->,very thick, densely dotted, magenta](2,1.5)--(3,4);
\draw[->,very thick](2,1.5)--(3,0.75);
\draw[->,very thick, densely dotted, magenta](3,0.75)--(4,4);
\draw[->,very thick](3,0.75)--(4,0.5);
\draw[->,very thick, densely dotted, magenta](4,0.5)--(5,0);
\draw[->,very thick,  densely dotted, magenta](4,0.5)--(5,4);
\draw[->,very thick](4,0.5)--(5,0.4);
\draw[->,very thick, densely dotted, magenta](5,0.4)--(6,4);
\draw[->,very  thick](5,0.4)--(6,0);
\node[below] at (0.5,0) {$1$};
\node[below] at (1.5,0) {$2$};
\node[below] at (2.5,0) {$3$};
\node[below] at (3.5,0) {$4$};
\node[below] at (4.5,0) {$5$};
\node[below] at (5.5,0){$6$};
\draw[dotted](0,4)--(7.5,4) node[left] at (0,4){1} ; 
\node[left] at (0,3.5){prior};
\draw[dotted](3,0)--(3,4);
\draw[dotted](2,0)--(2,4);
\draw[dotted](1,0)--(1,4);
\draw[dotted](4,0)--(4,4);
\draw[dotted](5,0)--(5,4);
\draw[dotted](6,0)--(6,4);
\node[left] at (0,0) {0};
\fill[fill opacity=0.2] (0,2.65)--(0,2.15)--(7.5,2.15)--(7.5,2.65)--cycle;
\node[left] at (0,2.4){\small{OPT}};
\end{tikzpicture}
\caption{}
\end{subfigure}%

\caption{Evolution of actions and beliefs over time}\label{fig:beliefsintro}
\end{figure}

The following are general properties of our optimal policy. The first observation to make is that the agent repeatedly updates his belief until either he perfectly learns the state or choosing the principal's most preferred action becomes (statically) optimal.  Moreover, if the agent learns the state, he learns it in finite time.  We provide an explicit characterization of the priors at which the agent eventually learns the state. 

Second, along the paths at which the agent plays the principal's most preferred action, his beliefs about the ``high'' state are decreasing. Intuitively, the optimal contract exploits the asymmetry in opportunity costs and lowers the perceived opportunity cost -- hence making it easier to incentivize the agent -- by sometimes informing him when the opportunity cost is high.\footnote{To be precise, under our policy, upon receiving the signal ``the opportunity cost is high,'' the agent learns that this is indeed true. However, the signal is not sent with probability one. This corresponds to the (magenta/dotted) arrows ending at 1.} 

Third, with the exception of panel (\textsc{d}), the policy does not disclose information to the agent at the first period. Thus, adopting the definition of persuasion as the act of changing the agent's beliefs prior to him making any decision, information disclosure rewards the agent for following the recommendation, but does not persuade him in panels (\textsc{A}), (\textsc{B}) and (\textsc{C}). Yet, as panel (\textsc{d}) illustrates, the policy sometimes needs to persuade the agent. For instance, if the promise of full information disclosure at the next period wouldn't incentivize the agent, then persuading the agent is necessary, that is, the policy must generate a strictly positive value of information for the agent. There are other circumstances at which persuading the agent may be necessary. Persuasion can reduce the agent's expected opportunity cost of following the principal's recommendation sufficiently enough to compensate for the loss to the principal due to providing information at the start of the relationship. 

Finally, with the exception of panel (\textsc{b}), the policy does not induce the agent to believe that playing the principal's most preferred action is optimal. This is markedly different from what we would expect from the static analysis of \citet{Kamenica2011}.   Intuitively, the ``static'' policy is sub-optimal because it does not extract all the informational surplus it creates, that is, it creates a strictly positive value of information, but does not extract it all. (The participation constraint of the agent does not bind.)  Even in panel (\textsc{b}), the beliefs do not jump immediately to the ``OPT'' region. In fact, the belief process may approach the ``OPT'' region only asymptotically.

 \medskip

\textbf{Related literature.} The paper is part of the literature on Bayesian persuasion, pioneered by \citet{Kamenica2011}, and recently surveyed by \citet{Kamenica-survey}. The three  most closely related papers are \citet{Ball2019dynamic}, \citet{Ely2019},  and \citet{orlov2018persuading}. In common with our paper, these papers study the optimal disclosure of information in dynamic games and show how the disclosure of information can be used as an incentive tool. The observation that information can be used to incentivize agents is not new and dates back to the literature on repeated games with incomplete information, e.g., \citet{Aumann-Maschler-Stearns}. See \citet{GaricanoRayo2017} and \citet{fudenberg-rayo-2019} for some more recent papers exploring the role of information provision as an incentive tool. \medskip

The classes of dynamic games studied differ considerably from one paper to another, which makes comparisons difficult. In \citet{Ely2019}, the agent has to repeatedly decide whether to continue working on a project or to quit (i.e., unlike our paper, there only two actions); quitting ends the game. The principal aims at maximizing the number of periods the agent works on the project and can only do so by disclosing information about its complexity, modeled as the number of periods required to complete it. Thus, their dynamic game is a quitting game, while ours is a repeated game.  When the project is either easy or difficult (i.e., when there are two states), the optimal disclosure policy initially persuades the agent that the task is easy, so that he starts working. (Naturally, if the agent is sufficiently convinced that the project is easy, there is no need to persuade him initially.)  If the project is in fact difficult, the policy then discloses it at a later date, when completing the project is now within reach. A main difference with our optimal disclosure policy is that information comes in lumps in \citet{Ely2019}, i.e., information is disclosed only at the initial period and at a later period, while information is repeatedly disclosed in our model.\footnote{When there are more than two states, the optimal policy discloses information more frequently in \citet{Ely2019}. The frequency of disclosure is thus a consequence of the dimensionality of the state space in their model, while it is not so in our model.} Another  main difference is as follows. In \citeauthor{Ely2019}, only when the promise of full information disclosure at a later date is not enough to   incentivize the agent to start working does the principal persuade the agent initially. This is not so with our policy: the principal persuades the agent in a larger set of circumstances. This initial persuasion reduces the cost of incentivizing the agent in future periods.\medskip

\citet{orlov2018persuading} also consider a quitting game, where the principal aims at delaying the quitting time as far as possible. The quitting time is the time at which the agent decides to exercise an option, which has different values to the principal and the agent. The principal chooses a disclosure policy informing the agent about the option's value. When, as in this paper, the principal  commits to a long-run policy, the optimal policy is to fully reveal the state with some delay. (Note that the principal is referred to as the agent in their work.) This policy is not optimal in \citet{Ely2019}, or in our paper. See \citet{au2015dynamic}, \citet{BizzottoForthcoming}, \citet{Che2020}, \citet{Henry2019} and \citet{smolin2018} for  other papers on information disclosure in quitting games, where the agent either waits and obtains additional information or takes an irreversible action and stops the game.\medskip

\citet{Ball2019dynamic} studies a continuous time model of information provision, where the state changes over time and  payoffs are the ones of the quadratic example of \citet{Crawford1982}. Ball shows that the optimal disclosure policy requires the sender to disclose the current state at a later date, with the delay shrinking over time. The main difference between his work and ours is the persistence of the state (also, we consider two different classes of games). When the state is fully persistent, as in \citet{Ely2019} and our model, full information disclosure with delay is not optimal in general. (See the discussion of Example 1 in Section \ref{sec:final}.)  \medskip 

Finally, there are a few papers on dynamic persuasion, where the agent takes an action repeatedly. However, either the agent is myopic, e.g., \citet{Ely2017} and \citet{Renault2017}, or the principal cannot commit, e.g., \citet{Escude2020}.

\section{The problem}
A principal and an agent interact over an infinite number of periods, indexed by $t \in \{1,2,\dots\}$.  At the first stage, before the interaction starts, the principal learns a payoff-relevant state $\omega \in \Omega=\{\omega_0,\omega_1\}$, while the agent remains uninformed.  The prior probability of $\omega$ is $p_0(\omega)>0$. At each period $t$, the principal sends a signal $s \in S$ and, upon observing the signal $s$, the agent takes decision $a \in A$. The sets $A$ and $S$ are finite. The cardinality of $S$ is as large as necessary for the principal to be unconstrained in his signaling.\footnote{From \citet{makris-renou}, it is enough to have the cardinality of $S$ as large as the cardinality of $A$.}
\medskip

We assume that there exists $a^* \in A$ such that the principal's payoff is strictly positive whenever $a^*$ is chosen and zero, otherwise. The principal's payoff function is thus $v: A \times \Omega \rightarrow \mathbb{R}$, with $v(a^*,\omega_0) >0$,  $v(a^*,\omega_1) >0$ and $v(a,\omega_0)= v(a,\omega_1) =0$ for all $a \in A \setminus \{a^*\}$.  The agent's payoff function is $u: A \times \Omega \rightarrow \mathbb{R}$. The (common) discount factor is $\delta \in (0,1)$. \medskip

We write $A^{t-1}$ for $\underbrace{A \times \dots \times A}_{t-1 \text{\;times\;}}$ and $S^{t-1}$ for $\underbrace{S \times \dots \times S}_{t-1 \text{\;times\;}}$, with generic elements $a^t$ and $s^t$, respectively. A behavioral strategy for the principal is a collection of maps $\tau=(\tau_t)_{t=1}^{\infty}$, with $\tau_t: A^{t-1} \times S^{t-1} \times \Omega \rightarrow \Delta(S)$. Similarly, a behavioral strategy for the agent is a collection of maps $\sigma=(\sigma_t)_{t=1}^{\infty}$ with $\sigma_t: A^{t-1} \times S^{t-1} \times S \rightarrow \Delta(A)$. \medskip

We write $\mathbf{V}(\tau,\sigma)$ for the principal's payoff and $\mathbf{U}(\tau,\sigma)$ for the agent's payoff under the strategy profile $(\sigma,\tau)$. The objective is to characterize the maximal payoff the principal achieves if he commits to a strategy $\tau$, that is,
\[\sup_{(\tau,\sigma)} \mathbf{V}(\tau,\sigma), \]
subject to
\[\mathbf{U}(\tau,\sigma)  \geq \mathbf{U}(\tau,\sigma'),\] for all $\sigma'$.\medskip

Several comments are worth making. First, we interpret the strategy the principal commits to as a \emph{contract} specifying, as a function of the state, the information to be disclosed at each history of realized signals and actions. That is, the contract specifies a statistical experiment at each history of realized signals and states. The principal chooses the contract prior to learning the state. An alternative interpretation is that neither the principal nor the agent know the state, but the principal has the ability to conduct statistical experiments contingent on past signals and actions.  We can partially dispense with the commitment assumption. Indeed, since the choices of statistical experiments are observable, we can construct strategies that incentivize the principal to implement the specified statistical experiments.\footnote{The simplest such strategy is to have the agent play $a \neq a^*$ in all future periods after a deviation.} Second, the only additional information the agent obtains each period is the outcome of the statistical experiment. Third, the state is fully persistent and the principal perfectly monitors the action of the agent. Finally, the only instrument available to the principal is information. The principal can neither remunerate the agent nor terminate the relationship nor allocate different tasks to the agent. We purposefully make all these assumptions to address our main question of interest: what is the optimal way to incentivize the agent with information only? \medskip

\section{Optimal contracts}
This section fully characterizes an optimal contract and discusses its most salient properties. We first start with a recursive formulation of the principal's problem.

\subsection{A recursive formulation}
The first step in deriving an optimal contract is to reformulate the principal's problem as a recursive problem. To do so, we introduce two state variables. The first state variable we introduce is promised continuation payoff. It is well-known that classical dynamic contracting problems admit recursive formulations if we introduce promised continuation payoff as a state variable and  impose promise-keeping constraints, e.g., \citet{spear-srivastava-87}. The second state variable we introduce is beliefs. We now turn to the formal reformulation of the problem.\footnote{A nearly identical reformulation  already appeared in \citet{Ely2015}, one of the working versions of \citet{Ely2017}.  We remind the reader that \citet{Ely2017} analyzes the interaction between a long-run principal and a sequence of short-run agents. (See also \citet{Renault2017}.) While discussing the extension of his model to the interaction between a long-run principal and a long-run agent, \citet{Ely2015} derived a recursive  reformulation nearly identical to ours. However, he didn't study further the reformulated problem. We start from the recursive formulation and use it to derive an optimal policy. See Section \ref{app-recursive-formulation} for a detailed comparison of the two formulations.} \medskip

We first need some additional notation. We write $p(\omega)$ for the probability of $\omega$. Throughout, we abuse notation and denote $p$ the belief that the state is $\omega_1$. We let $u(a,p):= \sum_{\omega} p(\omega)u(a,\omega)$ be the agent's expected payoff of choosing $a$ when his belief is $p$, $m(p):= \max_{a \in A} u(a,p)$ be the agent's best payoff when his belief is $p$, and
$M(p):=\sum_{\omega}p(\omega)\max_{a \in A}u(a,\omega)$ be the agent's expected payoff if he learns the state prior to choosing an action. It is worth noting that $m$ is a piece-wise linear convex function, that $M$ is linear and that $m(p) \leq M(p)$ for all $p$. Similarly, we let $v(a,p)$ be the principal's payoff when the agent chooses $a$ and the principal's belief is $p$. Finally, let $P:=\{p \in [0,1]: m(p)=u(a^*,p)\}$, be the set of beliefs at which $a^*$ is optimal. If non-empty, the set $P$ is a closed interval $[\underline{p},\overline{p}]$. \medskip

Let $\mathcal{W} \subseteq [0,1] \times \mathbb{R}$ such that $(p,w) \in \mathcal{W}$ if and only if $w \in [m(p),M(p)]$. Throughout, we restrict attention to functions $V: \mathcal{W} \rightarrow \mathbb{R}$, with the interpretation that $V(p,w)$ is the principal's payoff if he promises a continuation payoff of $w$ to the agent when the agent's current belief is $p$. \medskip

The principal's maximal payoff is $V^*(p_0,m(p_0))$, where $V^*$ is the unique fixed point of the contraction $T$, defined by 
\begin{eqnarray*}
T(V)(p,w): =
\begin{cases}
\max_{\big((\lambda_s, (p_s,w_s),a_s)  \in [0,1]\times \mathcal{W} \times A\big)_{s \in S}} \sum_{s \in S} \lambda_s [(1-\delta)v(a_s,p_s)+ \delta V (p_s,w_s)], \\
\text{subject to:}  \\
(1-\delta)u(a_s,p_s)+ \delta w_s \geq m(p_s),\\
\sum_{s \in S} \lambda_s[ (1-\delta)u(a_s,p_s)+ \delta w_s] \geq w, \\
\sum_{s \in S} \lambda_s p_s=p,  \sum_{s \in S} \lambda_s=1.\\
\end{cases}
\end{eqnarray*}

We briefly comment on the maximization program. Recalling that $S$ is the set of possible signals $s$, at each $(p,w)$, a policy prescribes the probability $\lambda_s$ that the realized signal is $s$ and conditional on $s$, the belief $p_s$, the promised utility $w_s$, and the recommended action $a_s$. The first constraint is an incentive constraint: the agent must have an incentive to play $a_s$ when $w_s$ is the agent's promised continuation payoff and $p_s$ the agent's belief. To understand the right-hand side, observe that the agent can always play a static best reply to any belief, so that his expected payoff must be at least $m(p_s)$ when his current belief is $p_s$.\footnote{More precisely, if the agent's belief at period $t$ is $p_{t}$, he obtains the payoff $m(p_{t})$ by playing a static best-reply. Since the function $m$ is convex and beliefs follow a martingale, his expected payoff is therefore at least $(1-\delta)\sum_{t' \geq t}\delta^{t'-t}\mathbb{E}[m(\mathbf{p}_{t'})|\mathcal{F}_t] \geq m(p_t)$, where $\mathcal{F}_t$ is the agent's filtration at period $t$.} Conversely, if the contract specifies action $a_s$  and the agent does not execute that action, the contract can specify  no further information revelation, in which case the agent's payoff is at most $m(p_s)$.  Therefore, $m(p_s)$ is the agent's min-max payoff. The second constraint is the  promise-keeping constraint: if the principal promises the continuation payoff $w$ at a period, the contract must honor that promise in subsequent periods.  The third constraint states that the policy selects a splitting of $p$; that is, a distribution over posteriors with expectation $p$. \medskip

Throughout, we slightly abuse notation and denote by $\tau$ a policy, that is, a function from $\mathcal{W}$ to $([0,1] \times \mathcal{W} \times A)^{|S|}$. A policy is feasible if it specifies a feasible tuple  $((\lambda_s, (p_s,w_s),a_s))_{s \in S}$ for each $(p,w)$, i.e., a tuple satisfying the constraints of the maximization problem $T(V)(p,w)$.\medskip

Two important observations are worth making. First, for any function $V$,  $T(V)$ is a concave function in $(p,w)$. Concavity reflects the fact that the more information the principal discloses, the harder it is to reward the agent in the future. Second, $T(V)$ is a decreasing function in $w$, that is, the more the principal promises to the agent, the harder it is to incentivize the agent to play $a^*$.
\footnote{A real-valued function $f$ is increasing (resp., strictly increasing) if $x > y$ implies that $f(x) \geq f(y)$ (resp., $f(x)>f(y)$). The function $f$ is (resp., strictly) decreasing if $-f$ is (resp., strictly) increasing.} We will repeatedly make use of these two properties, which we formally record in the following proposition.\medskip

\begin{proposition}\label{prop:concave}
The value function $V^*$ is concave in both arguments and decreasing in $w$.
\end{proposition}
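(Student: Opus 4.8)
The plan is the textbook ``the operator preserves the property, hence so does its fixed point'' argument. Concretely, I will show that for \emph{every} bounded $V\colon\mathcal{W}\to\mathbb{R}$ the function $T(V)$ is jointly concave in $(p,w)$ and decreasing in $w$; since $V^{*}=T(V^{*})$, the proposition follows at once. Two preliminary facts are worth isolating. First, the domain $\mathcal{W}=\{(p,w):m(p)\le w\le M(p)\}$ is convex, because $m$ is convex and $M$ is affine, so $\mathcal{W}$ is the intersection of the epigraph of $m$ with the hypograph of the affine function $M$. Second, for every $(p,w)\in\mathcal{W}$ the feasible set of the program $T(V)(p,w)$ is non-empty: the full-disclosure splitting sending mass $1-p$ to the belief $0$ and mass $p$ to the belief $1$, with continuation promises $m(0)$, $m(1)$ and static best replies, satisfies all constraints (its promise-keeping value is exactly $M(p)\ge w$). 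Hence $T(V)(p,w)$ is a finite maximum whenever $V$ is bounded, and $\eps$-optimal feasible tuples exist at every $(p,w)\in\mathcal{W}$ for every $\eps>0$.

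Monotonicity in $w$ is the easy half. If $m(p)\le w\le w'\le M(p)$, then any tuple $((\lambda_{s},(p_{s},w_{s}),a_{s}))_{s\in S}$ feasible at $(p,w')$ is also feasible at $(p,w)$: the target promise enters only through the single inequality $\sum_{s}\lambda_{s}[(1-\delta)u(a_{s},p_{s})+\delta w_{s}]\ge w'$, which a fortiori gives $\ge w$, while the incentive, splitting and normalization constraints do not involve $w$. Therefore $T(V)(p,w)\ge T(V)(p,w')$.

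For concavity, fix $(p^{0},w^{0}),(p^{1},w^{1})\in\mathcal{W}$ and $\alpha\in(0,1)$, and set $(p^{\alpha},w^{\alpha})=\alpha(p^{0},w^{0})+(1-\alpha)(p^{1},w^{1})$, which lies in $\mathcal{W}$ by convexity. Pick $\eps$-optimal feasible tuples $((\lambda^{j}_{s},(p^{j}_{s},w^{j}_{s}),a^{j}_{s}))_{s\in S}$ at $(p^{j},w^{j})$ for $j=0,1$ and merge them on a doubled copy of the signal set, assigning probability $\alpha\lambda^{0}_{s}$ to the continuation $((p^{0}_{s},w^{0}_{s}),a^{0}_{s})$ and probability $(1-\alpha)\lambda^{1}_{s}$ to $((p^{1}_{s},w^{1}_{s}),a^{1}_{s})$. (Working on a larger signal set is harmless: the value of the program is unchanged once $|S|\ge|A|$, since $|A|$ posteriors always suffice in an optimal splitting by a Carath\'eodory argument.) The merged tuple is feasible at $(p^{\alpha},w^{\alpha})$: the probabilities sum to $\alpha+(1-\alpha)=1$; each incentive constraint is inherited verbatim from the tuple it came from; the belief-splitting constraint reads $\alpha p^{0}+(1-\alpha)p^{1}=p^{\alpha}$; and the promise-keeping constraint is the $\alpha$-weighted sum of the two original promise-keeping inequalities, hence $\ge\alpha w^{0}+(1-\alpha)w^{1}=w^{\alpha}$. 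Its objective under $V$ equals $\alpha A_{0}+(1-\alpha)A_{1}$, where $A_{j}$ is the objective of tuple $j$ under $V$; note that \emph{no} concavity of $V$ is used here, as both the objective and all binding constraints are linear in the splitting weights. Since $A_{j}\ge T(V)(p^{j},w^{j})-\eps$, feasibility of the merged tuple gives $T(V)(p^{\alpha},w^{\alpha})\ge\alpha T(V)(p^{0},w^{0})+(1-\alpha)T(V)(p^{1},w^{1})-\eps$, and letting $\eps\downarrow0$ concludes.

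I do not expect a genuine obstacle. The only points demanding care are the bookkeeping of the four constraint families under the merge — in particular the observation just made that concavity of $V$ is not needed for the step, everything relevant being affine in the splitting weights — and the minor cardinality technicality for $S$, dispatched by the standard remark that enlarging the signal set leaves the value of $T(V)(p,w)$ unchanged. Feeding the two displayed inequalities into $V^{*}=T(V^{*})$ yields that $V^{*}$ is jointly concave in $(p,w)$ and decreasing in $w$.
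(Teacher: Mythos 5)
Your proof is correct and follows exactly the route the paper intends: the paper records Proposition~\ref{prop:concave} as an immediate consequence of the observation that for \emph{any} $V$ the function $T(V)$ is concave in $(p,w)$ (by merging feasible splittings, all relevant constraints and the objective being affine in the weights) and decreasing in $w$ (the target promise enters only one inequality), so the fixed point $V^*=T(V^*)$ inherits both properties. Your write-up simply makes these two steps explicit, including the harmless enlargement of the signal set, and is fine as it stands.
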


Proposition \ref{prop:concave} together with the recursive formulation has a number of additional implications, which are formally stated in Proposition \ref{prop2}. 

First, if the principal induces the posterior $p_s$ while recommending the action $a_s$ and promising the continuation payoff $w_s$, the principal should not have an incentive to further disclose information in that period. 

Second, if the principal does not recommend $a^*$ at a period, then the principal never recommends $a^*$ at a subsequent period, that is, the principal's continuation value is zero. In other words, as soon as an action other than $a^*$ is played, the principal stops incentivizing the agent to play $a^*$.  The intuition is simple. Suppose to the contrary that the principal were to recommend $a_s \neq a^*$  after the signal $s$ at period $t$ and $a^*$ at the next period. Consider a policy change, where the principal anticipates the disclosure of the information, which incentivizes the agent to play $a^*$ at period $t+1$, to period $t$. This policy change is feasible and increases the principal's payoff, a contradiction. 

Third, there is at most one signal $s^*$ at which the principal recommends the agent to play $a^*$. Moreover, whenever the principal recommends $a^*$, the agent is indifferent between obeying the recommendation  or deviating. In other words, the promised continuation payoff does not leave rents to the agent. Intuitively, if two signals recommended $a^*$, the principal would not lose from merging them into one. Let us call $s^*$ the unique signal at which the agent is recommended $a^*$. If the agent were given a positive rent when signal $s^*$ realizes, the principal would benefit by a change in policy that reduces the agent's promised utility associated with $s^*$ (since the value function is decreasing in promised utility).  For that change in policy to be feasible, the change must increase the promised utility when some other signal $s \neq s^*$ is realized. As we have already seen, this does not affect the principal's payoff (since the principal obtains a zero payoff in all periods, which follows a recommendation different from $a^*$).

\begin{proposition}\label{prop2} For all $(p,w)$, there exists a solution $(\lambda_s, p_s,w_s,a_s)_{s \in S}$ to $T(V^*)(p,w)$ such that
\begin{description}
\item[(i)] 
    For all $s \in S$ such that $\lambda_s >0$, we have
    \[\left( {1 - \delta } \right)v\left( {{a_s},{p_s}} \right) + \delta {V^ * }\left( {{p_s},{w_s}} \right) = {V^ * }\left( {{p_s},\left( {1 - \delta } \right)u\left( {{a_s},{p_s}} \right) + \delta {w_s}} \right).\]
 \item[(ii)]  
 For all $s \in S$ such that $\lambda_s>0$ and $a_s \neq a^*$, $V^*(p_s,w_s)=0$.
 \item[(iii)]
 There exists  at most one signal $s^* \in S$ such that $\lambda_{s^*}>0$ and $a_{s^*}=a^*$.   Moreover,
 \[(1-\delta)u(a_{s^*},p_{s^*})+ \delta w_{s^*} =m(p_{s^*}).\]
 \end{description}
\end{proposition}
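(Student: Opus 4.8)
The strategy is to start from an arbitrary optimal solution $(\lambda_s, p_s, w_s, a_s)_{s\in S}$ of $T(V^*)(p,w)$ (which exists by compactness of the constraint set and continuity, using that $V^*$ is concave hence continuous on the interior, plus a boundary argument) and then perform three successive modifications, each preserving feasibility and optimality, to obtain a solution with the three claimed properties.

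For part \textbf{(i)}, fix a signal $s$ with $\lambda_s>0$ and set $\hat w_s := (1-\delta)u(a_s,p_s)+\delta w_s$, the agent's realized current-period-plus-continuation value along signal $s$. The incentive constraint says $\hat w_s \ge m(p_s)$, so $(p_s,\hat w_s)\in\mathcal W$. The key point is that whatever the principal does in the current period at belief $p_s$ with promised utility $\hat w_s$ is itself a sub-problem of the same form: the term $(1-\delta)v(a_s,p_s)+\delta V^*(p_s,w_s)$ is the payoff from one particular feasible choice at state $(p_s,\hat w_s)$ (namely: send one signal, keep belief at $p_s$, promise $w_s$, recommend $a_s$), so it is at most $V^*(p_s,\hat w_s)$. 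Conversely, replacing the single branch $s$ by the optimal continuation policy at $(p_s,\hat w_s)$ — i.e. splicing in the splitting that achieves $T(V^*)(p_s,\hat w_s)=V^*(p_s,\hat w_s)$, scaled by $\lambda_s$ — is feasible for the original problem (the martingale, promise-keeping and incentive constraints all aggregate correctly) and yields principal payoff $\lambda_s V^*(p_s,\hat w_s)$ in place of $\lambda_s[(1-\delta)v(a_s,p_s)+\delta V^*(p_s,w_s)]$. By optimality these must be equal, giving \textbf{(i)}. Iterating over all active $s$ (they don't interfere) gives a single solution satisfying \textbf{(i)} for every active signal.

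For part \textbf{(ii)}, take the solution from \textbf{(i)} and suppose $\lambda_s>0$, $a_s\ne a^*$, yet $V^*(p_s,w_s)>0$. Since the principal earns a strictly positive continuation value at $(p_s,w_s)$, there must, at some finite future date along some history, be a recommendation of $a^*$ (because $v(a,\cdot)=0$ for $a\ne a^*$, so positive value can only come from eventual $a^*$-play). I would make this precise by unfolding the recursion one step: from $V^*(p_s,w_s)>0$ and \textbf{(i)}, at least one sub-branch recommends $a^*$ or has positive continuation value, and induction/contradiction on the ``first'' such occurrence lets me bring forward the information that incentivizes $a^*$. Concretely: the policy change advances to period $t$ (the current period) the splitting that period $t+1$ would have used to incentivize $a^*$; recommending $a^*$ now instead of $a_s$ raises the current flow payoff from $0$ to something positive and weakly relaxes nothing (the continuation and promise-keeping constraints rearrange feasibly because $m$ is convex and beliefs are a martingale), contradicting optimality. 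Hence $V^*(p_s,w_s)=0$.

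For part \textbf{(iii)}, suppose two signals $s_1\ne s_2$ have $\lambda_{s_i}>0$, $a_{s_i}=a^*$. Merge them into one signal $s'$ with $\lambda_{s'}=\lambda_{s_1}+\lambda_{s_2}$, $p_{s'}=\frac{\lambda_{s_1}p_{s_1}+\lambda_{s_2}p_{s_2}}{\lambda_{s_1}+\lambda_{s_2}}$, $w_{s'}$ the analogous convex combination, $a_{s'}=a^*$; concavity of $V^*$ (Proposition \ref{prop:concave}) gives $\lambda_{s'}V^*(p_{s'},w_{s'})\ge \lambda_{s_1}V^*(p_{s_1},w_{s_1})+\lambda_{s_2}V^*(p_{s_2},w_{s_2})$, and linearity of $v(a^*,\cdot)$ handles the flow term, so merging is weakly profitable and feasible (martingale and promise-keeping are affine, and the incentive constraint at $s'$ follows from convexity of $m$ applied to the two original incentive inequalities — here I would check the inequality direction carefully). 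Iterating leaves at most one signal $s^*$ recommending $a^*$. For the no-rent claim, on the solution with a unique $s^*$: if $(1-\delta)u(a_{s^*},p_{s^*})+\delta w_{s^*}>m(p_{s^*})$, lower $w_{s^*}$ slightly to $w_{s^*}-\eta$; this strictly increases $V^*(p_{s^*},w_{s^*})$ since $V^*$ is decreasing in $w$, and to restore the promise-keeping constraint $\sum\lambda_s[(1-\delta)u(a_s,p_s)+\delta w_s]\ge w$ I raise $w_s$ on some other active signal $s\ne s^*$ (one exists, or $p_{s^*}=p$ and the splitting is trivial, handled separately) — by part \textbf{(ii)} that branch has $V^*=0$ and the argument in \textbf{(ii)}'s discussion shows its value is unaffected — so the principal is strictly better off, a contradiction. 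The only subtlety is staying inside $\mathcal W$ when I perturb, i.e. ensuring $w_{s^*}-\eta\ge m(p_{s^*})$ and $w_s+\eta'\le M(p_s)$, which holds for small perturbations precisely because the rent was strictly positive and, if $M(p_s)$ binds, one can instead split the compensation across several non-$a^*$ signals or re-split $p_s$ itself.

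The \textbf{main obstacle} I anticipate is making the ``bring forward the information'' argument in \textbf{(ii)} and the compensating perturbation in \textbf{(iii)} fully rigorous while respecting the constraint $w_s\in[m(p_s),M(p_s)]$ at every branch: one must verify that the reshuffled promised utilities and advanced splittings stay in $\mathcal W$ and still form a martingale, which is where the convexity of $m$, linearity of $M$, and the martingale property of beliefs all have to be used in concert rather than invoked loosely.
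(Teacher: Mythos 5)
Your treatments of parts (i) and (iii) follow essentially the same route as the paper: for (i), the splicing-in of the optimal continuation at $(p_s,(1-\delta)u(a_s,p_s)+\delta w_s)$ combined with the trivial reverse inequality; for (iii), merging all $a^*$-signals via concavity of $V^*$, linearity of $v(a^*,\cdot)$, and convexity of $m$ for the incentive constraint, followed by the $\varepsilon$-transfer of promised utility from $s^*$ to some other active signal (whose continuation value is already zero by (ii), so the transfer is costless). These parts are fine.

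The genuine gap is in part (ii). You propose to formalize the heuristic ``bring forward the future $a^*$-splitting'' argument by an induction on the first date at which $a^*$ is recommended, and you yourself flag this as the main unresolved obstacle. That route is not carried out, and it is also unnecessary: part (ii) is an immediate consequence of part (i) together with the monotonicity of $V^*$ in $w$. Indeed, for an active $s$ with $a_s\neq a^*$ one has $v(a_s,p_s)=0$, so (i) reads $\delta V^*(p_s,w_s)=V^*\bigl(p_s,(1-\delta)u(a_s,p_s)+\delta w_s\bigr)$; since $u(a_s,p_s)\le m(p_s)\le w_s$, the second argument on the right is at most $w_s$, so $V^*$ decreasing in $w$ gives $\delta V^*(p_s,w_s)\ge V^*(p_s,w_s)$, and nonnegativity of $V^*$ forces $V^*(p_s,w_s)=0$. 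This is the paper's argument. As written, your proof of (ii) is incomplete, and since (iii) relies on (ii) to justify that raising $w_{\tilde s}$ on a non-$a^*$ branch does not hurt the principal, the gap propagates. Replacing your sketch of (ii) by the two-line monotonicity argument closes the proof.
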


Proposition \ref{prop2} states key properties that an optimal policy possesses. We conclude this section with a partial converse, that is, we state properties that guarantee optimality of a policy. To do so, we need some additional notation.  We first let $Q^1$ be the set of beliefs at which the agent has an incentive to play $a^*$ if he is promised full information disclosure at the next period, that is,
\[Q^1:=\{p \in [0,1]: (1-\delta)u(a^*,p)+\delta M(p) \geq m(p)\}. \]
If $Q^1$ is empty, then all policies are optimal as the principal can never incentivize the agent to play $a^*$. If $Q^1$ is non-empty, then it is a closed interval $[\underline{q}^1,\overline{q}^1]$. Note that $\q=0$ if and only if $a^*$ is optimal at $p=0$. For a graphical illustration, see Figure \ref{fig:Q^1}.
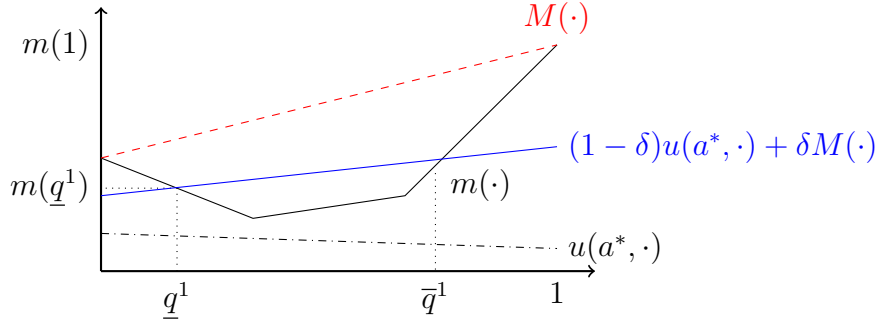
\begin{figure}[h]
\begin{center}
\begin{tikzpicture}
\draw[->,thick] (0,0)--(0,3.5);
\draw[->,thick] (0,0) -- (6.5,0);
\draw (0,1.5)--(2,0.7)--(4,1)--(6,3);
\node[below] at (5,1.5) {$m(\cdot)$};
\draw[dashed, red] (0,1.5)--(6,3) node[above] {$M(\cdot)$};
\draw[dashdotted] (0,0.5)--(6,0.3) node[right] {$u(a^*,\cdot)$};
\node[below] at (1,0) {$\underline{q}^1$};
\node[left] at (0,1.1) {$m(\underline{q}^1)$};
\draw[dotted] (1,0)--(1,1.1)--(0,1.1);
\node[below] at (6,0) {$1$};
\node[left] at (0,3) {$m(1)$};
\draw[blue] (0,1)--(6,1.65) node[right] {$(1-\delta) u(a^*,\cdot) + \delta M(\cdot)$};
\draw[dotted] (4.4,1.3)--(4.4,0) node[below] {$\overline{q}^{1}$};
\end{tikzpicture}
\end{center}
\caption{Construction of the set $Q^1$}\label{fig:Q^1}
\end{figure}
\medskip

Second, for all $p \in Q^1$, we write $\bold{w}(p) \in [m(p),M(p)]$ for the continuation payoff that makes the agent indifferent between playing action $a^*$ and receiving the continuation payoff $\bold{w}(p)$ in the future, and playing a best reply to the belief $p$ forever, that is, $\bold{w}(p)$ solves:
\[(1-\delta)u(a^*,p) + \delta \bold{w}(p) = m(p).\]

\begin{theorem}\label{theo1}
Consider any feasible policy inducing the value function $\tilde{V}$.  If $\tilde{V}$ is concave in both arguments, decreasing in $w$ and satisfies
 \[ \tilde{V}(p,m(p)) \geq (1-\delta)v(a^*,p)+ \delta \tilde{V}(p,\bold{w}(p)),\] 
for all $p \in Q^1$, then the policy is optimal.
\end{theorem}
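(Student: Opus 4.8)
The plan is to show that the value function $\tilde V$ generated by the policy is itself a fixed point of the operator $T$. Since $T$ is a contraction with unique fixed point $V^*$ (among bounded functions $\mathcal W\to\mathbb R$, and $\tilde V$ is bounded because stage payoffs are), this forces $\tilde V=V^*$, hence $\tilde V(p_0,m(p_0))=V^*(p_0,m(p_0))$, i.e. the policy is optimal. One inequality is immediate: the tuple the policy prescribes at each $(p,w)$ is feasible in the program defining $T(\tilde V)(p,w)$ and $\tilde V$ solves the corresponding policy-specific Bellman equation, so $\tilde V\le T(\tilde V)$ pointwise; I also record that $\tilde V\ge 0$, since every flow payoff $v(\cdot,\cdot)$ is nonnegative. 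The entire task is thus the reverse inequality $T(\tilde V)\le\tilde V$.

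To prove $T(\tilde V)(p,w)\le\tilde V(p,w)$ I would fix $(p,w)\in\mathcal W$ and an arbitrary feasible tuple $((\lambda_s,(p_s,w_s),a_s))_{s\in S}$, and introduce $w_s':=(1-\delta)u(a_s,p_s)+\delta w_s$, the utility actually delivered to the agent after signal $s$. The incentive constraint gives $w_s'\ge m(p_s)$, and $u(a_s,p_s)\le m(p_s)\le w_s\le M(p_s)$ gives $w_s'\le M(p_s)$, so $(p_s,w_s')\in\mathcal W$. The crux is the termwise bound $(1-\delta)v(a_s,p_s)+\delta\tilde V(p_s,w_s)\le\tilde V(p_s,w_s')$ for every $s$. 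Granting it, the conclusion is a short computation: summing against the $\lambda_s$, applying Jensen (joint concavity of $\tilde V$ on the convex set $\mathcal W$, with $\sum_s\lambda_s p_s=p$), and finishing with the promise-keeping constraint $\sum_s\lambda_s w_s'\ge w$ and monotonicity of $\tilde V$ in $w$ gives $\sum_s\lambda_s[(1-\delta)v(a_s,p_s)+\delta\tilde V(p_s,w_s)]\le\tilde V(p,\sum_s\lambda_s w_s')\le\tilde V(p,w)$.

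It remains to establish the termwise bound, and this is where the hypotheses enter. If $a_s\neq a^*$ it is easy: $v(a_s,p_s)=0$ and $w_s'\le w_s$, so $\delta\tilde V(p_s,w_s)\le\tilde V(p_s,w_s)\le\tilde V(p_s,w_s')$ by $\tilde V\ge 0$ and monotonicity. If $a_s=a^*$, I would first note that $w_s\le M(p_s)$ together with the incentive constraint forces $p_s\in Q^1$, so the hypothesis is available at $p_s$; writing $w_s=\bold{w}(p_s)+\theta$ with $\theta\ge 0$ (this is exactly the incentive constraint), one gets $w_s'=m(p_s)+\delta\theta$, and the desired inequality becomes $(1-\delta)v(a^*,p_s)+\delta\tilde V(p_s,\bold{w}(p_s)+\theta)\le\tilde V(p_s,m(p_s)+\delta\theta)$. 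At $\theta=0$ this is verbatim the hypothesis of the theorem. The hard part is extending it to all $\theta>0$: I would show that $\theta\mapsto\tilde V(p_s,m(p_s)+\delta\theta)-\delta\tilde V(p_s,\bold{w}(p_s)+\theta)$ is nondecreasing, which follows from concavity of $w\mapsto\tilde V(p_s,w)$ and the elementary fact $m(p_s)+\delta\theta\le\bold{w}(p_s)+\theta$ (so chords of $\tilde V$ near $m(p_s)+\delta\theta$ are at least $\delta$ times as steep as those near $\bold{w}(p_s)+\theta$). I expect this ``bootstrapping'' of the $Q^1$ inequality — imposed only on the lower boundary $w=m(p)$ — to every incentive-feasible promised utility to be the main obstacle; everything else is convexity bookkeeping, and it is precisely concavity and monotonicity of $\tilde V$ that power the bootstrap, which is also why assuming the inequality only at $w=m(p)$ suffices.
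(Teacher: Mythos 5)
Your proposal is correct and follows essentially the same route as the paper's proof: show $\tilde V$ is a fixed point of $T$ via the termwise bound $(1-\delta)v(a_s,p_s)+\delta\tilde V(p_s,w_s)\le\tilde V(p_s,(1-\delta)u(a_s,p_s)+\delta w_s)$, split on $a_s=a^*$ versus $a_s\neq a^*$, and for the $a^*$ case combine the $Q^1$ hypothesis at $w=m(p_s)$ with concavity of $\tilde V(p_s,\cdot)$ to propagate it to all incentive-feasible $w_s$. Your parametrization $w_s=\bold{w}(p_s)+\theta$ and the monotonicity of $\theta\mapsto\tilde V(p_s,m(p_s)+\delta\theta)-\delta\tilde V(p_s,\bold{w}(p_s)+\theta)$ is exactly the paper's chord-slope inequality built on the identity $(1-\delta)u(a^*,p_s)+\delta w_s-m(p_s)=\delta(w_s-\bold{w}(p_s))$.
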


\begin{proof} We argue that $\tilde{V}$ is the fixed point of the operator $T$, hence $\tilde{V}=V^*$. Let $(\lambda_s, p_s,w_s,a_s)_{s \in S}$ be a solution to the  maximization problem $T(\tilde{V})(p,w)$. We first start with the following observation. Consider any $s$ such that $a_s \neq a^*$. We have
\begin{eqnarray*}
(1-\delta)v(a_s,p_s)+\delta \tilde{V}(p_s,w_s)=\delta \tilde{V}(p_s,w_s) \leq  \tilde{V}(p_s,w_s) \leq \tilde{V}(p_s, (1-\delta)u(a_s,p_s)+\delta w_s),
\end{eqnarray*}
where the last inequality follows from the fact that $\tilde{V}$ is decreasing in $w$ and $m(p_s) \leq  (1-\delta)u(a_s,p_s)+\delta w_s \leq (1-\delta) m(p_s) + \delta w_s \leq w_s$. 

Consider now any $s$ such that $a_s=a^*$. Since $(\lambda_s, p_s,w_s,a_s)_{s \in S}$ is feasible, we have that 
\[(1-\delta)u(a^*,p_s)+\delta w_s \geq m(p_s), \]
hence $p_s \in Q^1$ and, therefore, 
\[ \tilde{V}(p_s,m(p_s)) \geq (1-\delta)v(a^*,p_s)+ \delta \tilde{V}\Big(p_s,\underbrace{\frac{-(1-\delta)u(a^*,p_s)+m(p_s)}{\delta}}_{\bold{w}(p_s)}\Big). \]
The concavity of $\tilde{V}$ implies that 
\[\tilde{V}(p_s,(1-\delta)u(a^*,p_s)+\delta w_s) -\tilde{V}(p_s,m(p_s)) \geq \delta \Big[\tilde{V}(p_s,w_s)-\tilde{V}\Big(p_s,\bold{w}(p_s)\Big)\Big],\]
where we use the identity $(1-\delta)u(a^*,p_s)+\delta w_s - m(p_s) = \delta (w_s -\bold{w}(p_s))$ and observation (a) about concave functions in Section \ref{math:prelim}.

Combining the above two inequalities implies that   
\[\tilde{V}(p_s,(1-\delta)u(a^*,p_s)+ \delta w_s) \geq (1-\delta)v(a^*,p_s)+ \delta \tilde{V}(p_s,w_s). \]

It follows that
\begin{eqnarray*}
T( \tilde{V} )(p,w) & = & \sum_{s \in S} \lambda_s \left[(1 - \delta)v(a_s,p_s) + \delta \tilde{V} (p_s,w_s) \right] \\
 & \leq & \sum_{s \in S} \lambda_s \left[\tilde{V} ( p_s,(1-\delta)u(a_s,p_s) + \delta w_s) \right]\\
 & \leq & \tilde{V}\left(\sum_{s \in S}\lambda_s p_s, \sum_{s \in S}\lambda_{s}((1-\delta)u(a_s,p_s) + \delta w_s))\right)\\
 & \leq & \tilde{V}(p,w),
\end{eqnarray*}
where the second  inequality follows from the concavity of $\tilde{V}$ and the third inequality from $\tilde{V}$ being decreasing in $w$.\medskip

Conversely, since the policy inducing $\tilde{V}$ is feasible, we must have that $T(\tilde{V})(p,w) \geq \tilde{V}(p,w)$ for all $(p,w)$. This completes the proof. 
\end{proof}

\subsection{An optimal policy}
The objective of this section is to define a policy, which we later prove to be optimal. We denote  by $a^{p}$  a maximizer of $u(\cdot,p)$.
Without loss of generality, assume that $\frac{m(1) -u(a^*,1)}{v(a^*,1)} \geq \frac{m(0) -u(a^*,0)}{v(a^*,0)}$. (A symmetric argument applies if the reverse inequality holds.)  Note that if $a^*$ is optimal for the agent at $p=1$, i.e., $m(1)=u(a^*,1)$, then $a^*$ is also optimal at $p=0$ and, consequently, $a^*$ is optimal at all beliefs, i.e., $P=[0,1]$. In what follows, we exclude this trivial case and assume that $1 \notin P$. We can restate the condition as $\frac{v(a^*,0)}{v(a^*,1)} \geq \frac{m(0) -u(a^*,0)}{m(1) -u(a^*,1)}$, that is, the principal's benefit of $a^*$ in state $\omega_0$ relative to state $\omega_1$ is higher than the agent's opportunity cost in state $\omega_0$ relative to state $\omega_1$. 
 \medskip

Define the functions $\lambda: \mathcal{W} \rightarrow [0,1]$ and $\varphi: \mathcal{W} \rightarrow [0,1]$, with $(\lambda(p,w), \varphi(p,w))$ the unique solution to:
\begin{eqnarray}
\begin{pmatrix}
p \\ w
\end{pmatrix}
=
\lambda(p,w)
\begin{pmatrix}
\varphi(p,w) \\ m(\varphi(p,w))
\end{pmatrix}
+
(1-\lambda(p,w))
\begin{pmatrix}
1 \\ m(1)
\end{pmatrix}.\label{eq1}
\end{eqnarray}
for all $w>m(p)$, and $\lambda(p,m(p)), \varphi(p,m(p)))=(1,p)$. Geometrically, the solution \[(\varphi(p,w),m(\varphi(p,w))\] is the unique intersection between the line connecting $(p,w)$ and $(1,m(1))$ and the graph of the piecewise linear function $m(p)$.\footnote{Each piece corresponds to an optimal action.} See Figure \ref{fig:varphi} for an illustration. Intuitively, at all $(p,w)$, this corresponds to a policy which induces the belief and continuation payoff $(\varphi(p,w),m(\varphi(p,w)))$ with probability $\lambda(p,w)$ and the belief and continuation payoff $(1,m(1))$ with the complementary probability (in which case the agent learns that the state is $\omega_1$).\medskip

\begin{figure}[h]
\centering
\begin{tikzpicture}
\draw[->,thick] (0,0)--(0,3.5);
\draw[->,thick] (0,0) -- (6.5,0);
\draw (0,1.5)--(2,0.7)--(4,1)--(6,3);
\node[below] at (5,1.5) {$m(\cdot)$};
\draw[dashed] (0,1.5)--(6,3) node[above] {$M(\cdot)$};
\draw[dashed] (3,0)--(3,1.5)--(0,1.5);
\node[below] at (3,0) {$p$};
\node[left] at (0,1.5) {$w$};
\draw[dashed] (6,3)--(3,1.5)--(0.5,0.25);
\draw[dashed] (6,0)--(6,3)--(0,3);
\node[below] at (6,0) {$1$};
\node[left] at (0,3) {$m(1)$};
\draw[dashed] (1.7,0)--(1.7,.85)--(0,.85);
\node[below] at (1.7,0) {$\varphi(p,w)$};
\node[left] at (0,0.85) {$m(\varphi(p,w))$};
\end{tikzpicture}
\caption{Construction of $\lambda$ and $\varphi$}\label{fig:varphi}
\end{figure}
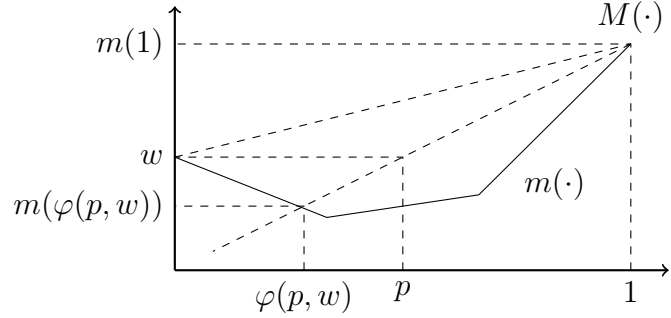
\medskip

We now define a family of policies $(\tau_q)_{q \in [\underline{q}^1,\overline{q}^1]}$ and show later the existence of $q^* \in [\underline{q}^1,\overline{q}^1]$ such that the policy $\tau_{q^*}$ is optimal. 
 
 For each $q \in [\underline{q}^1,\overline{q}^1]$, there are four regions to consider:
\begin{eqnarray*}
\mathcal{W}^1_q&:=&\Big\{(p,w): p \in [0,\underline{q}^1), w\leq \frac{\underline{q}^1-p}{\underline{q}^1}m(0) + \frac{p}{\underline{q}^1} m(\underline{q}^1)\Big\},\\
\mathcal{W}^2_q&:=&\Big\{(p,w): p \in (q,1], \frac{1-p}{1-q}m(q) + \frac{p-q}{1-q}m(1)<w \leq
\frac{1-p}{1-\underline{q}^1}m(\underline{q}^1) + \frac{p-\underline{q}^1}{1-\underline{q}^1}m(1) \Big\}\\
& & \bigcup \Big\{(p,w): p \in [\underline{q}^1,q], w \leq
\frac{1-p}{1-\underline{q}^1}m(\underline{q}^1) + \frac{p-\underline{q}^1}{1-\underline{q}^1}m(1) \Big\},\\
\mathcal{W}^3_q&:=&\Big\{(p,w): p \in (q,1], w  \leq \frac{1-p}{1-q}m(q) + \frac{p-q}{1-q}m(1)\Big\},\\
\mathcal{W}^4_q &:=& \mathcal{W} \setminus (\mathcal{W}^{1}_q \cup \mathcal{W}^2_q \cup \mathcal{W}^3_q).
\end{eqnarray*}

The four regions partition the set $\mathcal{W}$. Figure \ref{fig:4regions} illustrates the four regions with $\mathcal{W}_q^1$ the black region, $\mathcal{W}^2_q$ the region with vertical lines, $\mathcal{W}_q^3$ the gray region, and $\mathcal{W}_q^4$ the region with northwest lines. 
 It is worth observing that the regions $\mathcal{W}_q^1$ and $\mathcal{W}_q^4$ do not depend on the parameter $q$, while the other two do. The policy $\tau_q$ differs from one region to another.

\begin{figure}[h]
\begin{center}
\begin{tikzpicture}
\draw[->,thick] (0,0)--(0,3.5);
\draw[->,thick] (0,0) -- (6.5,0);
\draw (0,1.5)--(0.5,1)--(2,0.7)--(4,1)--(6,3);
\node[below] at (5,1.5) {$m(\cdot)$};
\draw[dashed] (0,1.5)--(6,3) node[above] {$M(\cdot)$};
\node[below] at (1,0) {$\underline{q}^1$};
\draw[dotted] (0.9,0)--(0.9,0.9);
\node[below] at (6,0) {$p$};
\node[left] at (0,3) {$w$};
\draw[dotted] (4.4,1.3)--(4.4,0) node[below] {$\overline{q}^{1}$};


\usetikzlibrary{patterns}
\draw[pattern={north west lines}] (0,1.5)--(0.9,0.9)--(6,3);
\filldraw[black](0,1.5)--(0.5,1)--(0.9,0.9);
\draw[pattern={vertical lines}](0.9,0.9)--(2,0.7)--(3.5,0.93)--(6,3);
\filldraw[gray!85!](3.5,0.93)--(4,1)-- (6,3);

\draw[dotted] (3.5,0.95)--(3.5,0) node[below] {$q$};
\end{tikzpicture}
\end{center}
\caption{The regions $\mathcal{W}_1$, $\mathcal{W}_2$, $\mathcal{W}_3$ and $\mathcal{W}_4$.}\label{fig:4regions}
\end{figure}
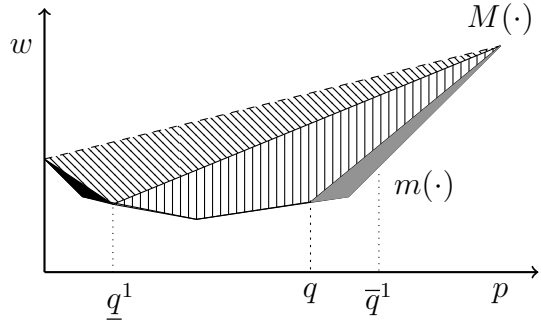
\medskip

Region $\mathcal{W}_q^1$  corresponds to the points $(p,w)$ below the line connecting $(0,m(0))$ to $(\q,m(\q))$. When $(p,w)$ is in this region, the policy splits $p$ into $0$ (i.e., discloses that the state is $\omega_0$) and $\underline{q}^1$ with probability $\frac{\q -p}{\underline{q}^1}$ and $\frac{p}{\underline{q}^1}$, respectively. Conditional on $0$, the policy recommends $a^{0}$ and promises a continuation payoff of $m(0)$. Conditional on $\underline{q}^1$, the policy recommends $a^*$ and promises a continuation payoff of $\bold{w}(\underline{q}^1)$. The agent is thus made indifferent between playing $a^*$ and receiving $\bold{w}(\underline{q}^1)$ in the future, and playing a best reply to the belief $\underline{q}^1$ forever.  Note that the  leaves rents to the agent -- since $\frac{\q -p}{\underline{q}^1}m(0)+\frac{p}{\underline{q}^1}m(\underline{q}^1)>w$. Intuitively, since $\underline{q}^1$ is the lowest belief at which the agent is willing to take action $a^*$ at the current period in exchange for full information at the next period, the principal must create a strictly positive information value at any belief $p< \underline{q}^1$.\medskip

Region $\mathcal{W}_q^2$ corresponds to the points $(p,w)$ below the line connecting $(\q,m(\q))$ and $(1,m(1))$ but above the line connecting $(q,m(q))$ and $(1,m(1))$. When $(p,w)$ is in this region, the policy splits $p$ into $1$ (i.e., reveals that the state is $\omega_1$) and $\varphi(p,w)$ with probability $1-\lambda(p,w)$ and $\lambda(p,w)$, respectively. Conditional on $\varphi(p,w)$, the policy recommends action $a^*$ and promises a continuation payoff of $\bold{w}(\varphi(p,w))$. Conditional on $1$, the policy recommends action $a^{1}$ and promises a continuation payoff of $m(1)$. In this region, the policy repeatedly incentivizes the agent to play $a^*$ with the promise of future information disclosure and 
does not leave any rents to the agent -- since $(1-\lambda(p,w))m(1)+\lambda(p,w)m(\varphi(p,w))=w$. 
 \medskip

Region $\mathcal{W}_q^3$ corresponds  to the points $(p,w)$ below the line connecting $(q,m(q))$ and $(1,m(1))$.

When $(p,w)$ is in this region, the policy splits
$p$ into $q$ and $1$ with probability $\frac{1-p}{1-q}$ and $\frac{p-q}{1-q}$, respectively. Conditional on $1$,
the policy recommends $a^{1}$ and promises a continuation payoff of $m(1)$. Conditional on $q$, the policy recommends $a^*$ and promises a continuation payoff of $\bold{w}(q)$. The agent is thus made indifferent between playing $a^*$ and receiving $\bold{w}(q)$ in the future, and playing a best reply to the belief $q$ forever. 
The policy in this region is analogous to the one in region $\mathcal{W}_q^1$. When $q=\overline{q}^1$ the reason for the analogy is immediate, as $\overline{q}^1$ is the highest belief at which the agent is willing to take action $a^*$ at the current period in exchange for full information at the next period. The reason why this policy may also be optimal for $q<\overline{q}^1$, as we shall see later, is to minimize the cost of incentivizing the agent relative to the benefit to the principal. In this region, the principal provides information value and leave rents to the agent.\medskip

Region $\mathcal{W}_q^4$ corresponds to all other points. 
When $(p,w) \in \mathcal{W}^4_q$, the policy  splits $p$ into $0$, $\underline{q}^1$, and $1$ with probability $\lambda_0$, $\lambda_{\underline{q}^1}$ and $\lambda_1$, respectively. Conditional on $0$ (resp., $1$), the policy recommends action $a^{0}$ (resp, $a^{1})$ and promises a continuation payoff of $m(0)$ (resp., $m(1)$). Conditional on $\underline{q}^1$, the policy recommends action $a^*$ and promises a continuation payoff of $\bold{w}(\underline{q}^1)$. The probabilities $(\lambda_0, \lambda_{\underline{q}^1},\lambda_1) \in \mathbb{R}_+ \times  \mathbb{R}_+  \times \mathbb{R}_+$ are the unique solution to:
\begin{equation*}
\lambda_0
\begin{pmatrix}
0 \\ m(0) \\ 1
\end{pmatrix}
+
\lambda_{\underline{q}_1}
\begin{pmatrix}
\underline{q}^1 \\ m(\underline{q}^1) \\ 1
\end{pmatrix}
+
\lambda_1
\begin{pmatrix}
1 \\ m(1) \\ 1
\end{pmatrix}
=
\begin{pmatrix}
p \\ w \\ 1
\end{pmatrix}.
\end{equation*}
A solution exists since $\mathcal{W}^4_q$ is the convex hull of $(0,m(0))$, $(\underline{q}^1,m(\underline{q}^1))$ and $(1,m(1))$. In this region, the policy leaves no rents to the agent. This completes the description of the policy $\tau_q$.\medskip

Before proving the existence of $q^* \in [\underline{q}^1,\overline{q}^1]$ such that $\tau_{q^*}$ is an optimal policy, we highlight the evolution of beliefs in $\tau_{q}$ for an arbitrary $q$. When the prior belief that the state is $\omega_1$ is $p_0<\underline{q}^1$, the policy starts by disclosing some information; with some probability it reveals that the state is $\omega_0$ and with the complementary probability it pushes the belief up to $\underline{q}^1$. Similarly, when the prior belief is $p_0>q$, the policy starts by disclosing some information; with some probability it reveals that the state is $\omega_1$ and with the complementary probability it pushes the belief down do $q$. When the prior belief is $\underline{q}^1<p_0\leq q$, and hence $(p_0,m(p_0))$ is in region $\mathcal{W}^2_q$, the policy recommends $a^*$ and promises the continuation utility $\bold{w}(p_0) \geq m(p_0)$. (Note that if $a^*$ is statically optimal at $p_0$, then   $\bold{w}(p_0) = m(p_0)$.) If the pair $(p_0, \bold{w}(p_0))$ remains in region $\mathcal{W}^2_q$, then -- unless playing $a^*$ is statically optimal for the agent at $p_0$ -- the policy discloses that the state is $\omega_1$ with some probability and with the complementary probability recommends that the agent play $a^*$ and pushes the belief down to $p_1:=\varphi(p_0,m(p_0))$, while promising utility $\bold{w}(p_1)$. The process continues until the pair of belief and promised utility reaches region $\mathcal{W}^4_q$, at which stage one last step occurs. The policy either reveals what the state is (either $\omega_0$ or $\omega_1$) or it lowers the belief to $\underline{q}^1$ and recommends $a^*$ one last time before revealing the state at the next period. If it is ever the case that the decreasing belief in region $\mathcal{W}^2_q$ reach a $p$ at which it is statically optimal for the agent to play $a^*$, then the principal reveals no further information and recommends $a^*$ forever. This is what happens in panel panel $(\textsc{B})$ of Figure \ref{fig:beliefsintro}. Panel $(\textsc{D})$, on the other hand,  represents the evolution of beliefs starting from a prior belief in the region $\mathcal{W}_q^3$, transitioning to the region $\mathcal{W}^2_q$ at the next period and staying there for three periods  and transitioning then to the region $\mathcal{W}^4_q$. Panels $(\textsc{A})$ and $(\textsc{C})$ represent similar belief evolutions, except that they start from a prior belief in region $\mathcal{W}_q^2$.  To illustrate further the policy, we now present an example.

\textbf{\textit{Example 1.}} The agent has three possible actions $a_0$, $a_1$ and $a^*$, with $a_0$ (resp., $a_1$) the agent's optimal action when the state is $\omega_0$ (resp., $\omega_1$). The prior probability of $\omega_1$ is $1/3$ and the discount factor is $1/2$. The per-period payoffs are in Table \ref{tab:ex1}, with the first coordinate corresponding to the principal's payoff.

\begin{table}[h]
\centering 
\caption{Payoff table of Example 1}\label{tab:ex1}
\begin{tabular}{|c|c|c|c|}
\hline
 & $a_0$ & $a_1$ & $a^*$\\  \hline
 $\omega_0$ & $0,1$ & $0,0$ & $1,1/2$\\ \hline
 $\omega_1$ & $0,0$ & $0,2$ & $1,1/2$ \\ \hline
 \end{tabular}
\end{table}

To describe the  policy $\tau_q$, note that  $M(p)=1+p$, $m(p)=\max(1-p,2p)$ and $\bold{w}(p)=2\max(2p,1-p)-(1/2)$. Therefore, $Q^1=[1/6,1/2]$. Assume that $q=1/2$ (we will  show that this is optimal in this example). Let us start with the pair $(p_0,m(p_0))= (1/3,2/3)$, which is in region $\mathcal{W}^2_{1/2}$. The policy recommends $a^*$ to the agent and promises a continuation payoff of $\bold{w}(1/3)=5/6$. The next state is therefore $(1/3,5/6)$, which is again in $\mathcal{W}^2_{1/2}$. If the agent had been obedient, the policy then splits the prior probability $1/3$ into  $3/11$ and $1$ with probability $22/24$ and $2/24$, respectively. To see this, note that we indeed have:

\begin{eqnarray*}
\begin{pmatrix}
\frac{1}{3} \\ \frac{5}{6}
\end{pmatrix}
=
\frac{22}{24}
\begin{pmatrix}
\frac{3}{11} \\ m(\frac{3}{11})
\end{pmatrix}
+
\frac{2}{24}
\begin{pmatrix}
1 \\ m(1)
\end{pmatrix}.
\end{eqnarray*}

Conditional on the posterior $3/11$, the policy recommends $a^*$ to the agent and promises a continuation payoff of $\bold{w}(3/11)=21/22$. Conditional on the posterior $1$, the policy recommends $a_1$ and promises a continuation payoff of $m(1)=2$. Therefore, the next state is either $(3/11,21/22)$ or $(1,2)$, with the former again in $\mathcal{W}^2_{1/2}$. In the latter case, the policy yet again recommends $a_1$ and a continuation payoff of $2$. In the former case, the policy splits  $3/11$ into $7/39$ and $1$, with probability $39/44$ and $5/44$, respectively. Conditional on the posterior $7/39$, the policy recommends $a^*$ to the agent and promises a continuation payoff of  $\bold{w}(7/39)=89/78$. Conditional on the posterior $1$, the policy recommends $a_1$ and promises a continuation payoff of $m(1)=2$. Finally, at the state $(7/39,89/78)$, which is in region $\mathcal{W}^4_{1/2}$, the policy does a penultimate split of $7/39$ into $0$, $1/6$ and $1$ with probability $113/156$ $18/156$ and $25/156$, respectively. Conditional on the posterior $1/6$, the policy recommends $a^*$ and promises a continuation payoff of $7/6$, i.e., full information disclosure at the next period. Thus, the policy fully discloses the state in finite time to the agent. See Figure \ref{fig:beliefs} for the evolution of the beliefs at the beginning of each period. At all beliefs other than $0$ and $1$, the agent is recommended to play $a^*$. The principal's expected payoff is $1285/1536$, i.e., about $0.83$.

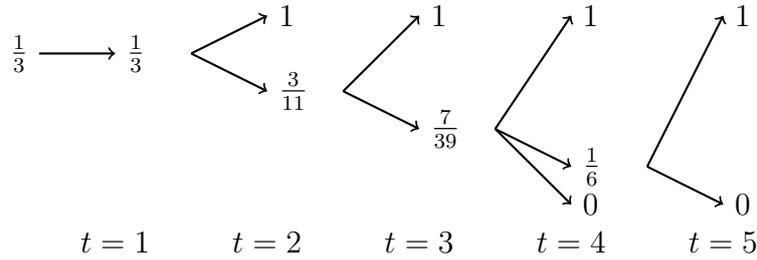
\begin{figure}[h]
\begin{center}
\begin{tikzpicture}
\node[left] at (0,2.5) {$\frac{1}{3}$};
\draw[->, thick] (0,2.5)--(1,2.5) node[right] {$\frac{1}{3}$};
\draw[->, thick] (2,2.5)--(3,3) node[right] {$1$};
\draw[->, thick] (2,2.5)--(3,2) node[right] {$\frac{3}{11}$};
\draw[->, thick] (4,2)--(5,3) node[right] {$1$};
\draw[->, thick] (4,2)--(5,1.5) node[right] {$\frac{7}{39}$};
\draw[->, thick] (6,1.5)--(7,3) node[right] {$1$};
\draw[->, thick] (6,1.5)--(7,1) node[right] {$\frac{1}{6}$};
\draw[->, thick] (6,1.5)--(7,0.5) node[right] {$0$};
\draw[->, thick] (8,1)--(9,3) node[right] {$1$};
\draw[->, thick] (8,1)--(9,0.5) node[right] {$0$};
\node[] at (1,0) {$t=1$};
\node[] at (3,0) {$t=2$};
\node[] at (5,0) {$t=3$};
\node[] at (7,0) {$t=4$};
\node[] at (9,0) {$t=5$};
\end{tikzpicture}
\end{center}
\caption{Evolution of the beliefs.}\label{fig:beliefs}
\end{figure}

\medskip

As Example 1 illustrates, the  policy repeatedly incentivizes the agent to play $a^*$ with the promise of future information disclosure. Recall that the agent's opportunity cost of playing $a^*$, when his belief is $p$, is $m(p)-u(a^*,p) \geq 0$, which is strictly positive unless $a^*$ is (statically) optimal at $p$. Thus, whenever the principal asks the agent to play $a^*$, the principal owes the agent some future compensation.

In our model, the only instrument the principal has to compensate the agent is information, and  the most value the principal can create is $M(p)-m(p)$.  Crucially, information creation is irreversible, that is, any information value created at a period is information value that cannot be created at future periods. Therefore, the best the principal can do is to create as little value of information as possible to compensate the agent and to retain the remaining value for future compensation. The best is thus to promise a continuation payoff of $\mathbf{w}(p)$. Whenever possible, this is precisely what the policy does. (Whenever this is not possible, the policy needs to disclose some information first; this happens when $p \notin Q^1$.)

Now, there are many ways to disclose just enough information to compensate the agent. The policy we construct does so by informing the agent when the state is $\omega_1$. The rationale is two-fold. First, the agent cannot be asked to play $a^*$ in all contingencies if he is to be compensated, so he must play another action. Since it is inconsequential to the principal which action the agent plays when he is not playing $a^*$,  the principal is free to choose this action. Second, the lower the agent's belief, the lower the cost of incentivizing the agent to play $a^*$ relative to the principal's benefit. Putting these two observations together, we are looking for two beliefs $(p',p'')$ such that (i) the agent is asked to play $a^*$ at $p'$, (ii) $p'<p$ since we want the agent to play $a^*$ at the lowest belief, and (iii) the probability of $p'$ is as high as possible. This problem is reminiscent of static persuasion problems. The best splitting is to have $p'$ as close as possible to $p$ and $p''$ as far as possible, that is, we want $(p',p'')=(\varphi(p,\mathbf{w}(p)),1)$.  Whenever possible, the policy we construct precisely does that. However, this is not always possible. Indeed, when $\varphi(p,\mathbf{w}(p))< \underline{q}^1$, the agent cannot be incentivized to play $a^*$ at $\varphi(p,\mathbf{w}(p))$.\footnote{Recall that $\underline{q}^1$ is the lowest belief at which the agent can be incentivized to play $a^*$.}  In that situation, the policy splits $p$ into $0$, $\underline{q}^1$ and $1$. In doing so, the principal insures that the agent will play $a^*$ one more time and yet is compensated enough.

\medskip

\subsection{Construction of $q^*$ and optimality}  Let $V_q: \mathcal{W} \rightarrow \mathbb{R}$ be the value function induced by the policy $\tau_q$. Note that for all $q$, $V_q(1,m(1)) = 0$ since $a^*$ is not optimal at $p=1$ and $V_q(0,m(0)) =0$ if $a^*$ is not optimal at $p=0$ (resp., $=v(a^*,0)$) if $a^*$ is optimal at $p=0$).  Also, $V_q(\q,m(\q)) = (1-\delta) v(a^*,\q)$ if $\q>0$ (resp., $V_q(0,m(0))=v(a^*,0)$ if $\q=0$, since $a^*$ is then optimal at $p=0$). Therefore, any two policies $\tau_q$ and $\tau_{q'}$ induce the same values at all $(p,w) \in \mathcal{W}^1_q \cup \mathcal{W}^4_q=\mathcal{W}^1_{q'} \cup \mathcal{W}^4_{q'}$. (Remember that the regions $\mathcal{W}^1_q$ and  $\mathcal{W}^4_q$ do not vary with $q$, see Figure \ref{fig:4regions}.)

Similarly, any two policies $\tau_q$ and $\tau_{q'}$ induce the same values at all $(p,w) \in \mathcal{W}^2_{\min(q,q')}$. Thus, in particular, $\tau_q$ and $\tau_{\overline{q}^1}$ induce the same values at all $(p,w) \in \mathcal{W} \setminus \mathcal{W}^3_q$. Finally, at all $(p,w) \in \mathcal{W}^3_q$, $V_q(p,w)= \frac{1-p}{1-q}V_{q}(q,m(q))= \frac{1-p}{1-q}V_{\overline{q}^1}(q,m(q))$. 
(See Section \ref{app:value-function} for more details.)  \medskip

We are now ready to state our main result. Let 
\[q^*=\sup \left\{p \in [\underline{q}^1,\overline{q}^1]: V_{\overline{q}^1}(p,m(p)) \geq V_{\overline{q}^1}(p,w) \text{\;for all\;} w \right\}.\]

\begin{theorem}\label{theo2:opti}
The policy $\tau_{q^*}$ is optimal and, therefore, $V_{q^*}=V^*$.
\end{theorem}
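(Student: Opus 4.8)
The plan is to apply Theorem \ref{theo1}: it suffices to show that the value function $V_{q^*}$ induced by $\tau_{q^*}$ is concave in $(p,w)$, decreasing in $w$, and satisfies $V_{q^*}(p,m(p))\geq(1-\delta)v(a^*,p)+\delta V_{q^*}(p,\mathbf{w}(p))$ for all $p\in Q^1$, for this forces $V_{q^*}=V^*$ and hence optimality. The first task is to record workable formulas for the family $(V_q)_q$, region by region. On $\mathcal{W}^1_q$ and $\mathcal{W}^4_q$ the policy $\tau_q$ splits only among the fixed points $(0,m(0))$, $(\underline{q}^1,m(\underline{q}^1))$, $(1,m(1))$, so $V_q$ is affine there and independent of $q$. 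On $\mathcal{W}^3_q$ it splits the posterior between $q$ and $1$, giving $V_q(p,w)=\tfrac{1-p}{1-q}V_q(q,m(q))$, which does not depend on $w$. On $\mathcal{W}^2_q$, iterating the one-step description of $\tau_q$ shows that $V_q$ agrees there with $V_{\overline{q}^1}$ (write $\overline{V}:=V_{\overline{q}^1}$), that $V_q(p,w)=\tfrac{1-p}{1-\varphi(p,w)}\,\overline{V}(\varphi(p,w),m(\varphi(p,w)))$, and that on the lower boundary $\overline{V}(p,m(p))=(1-\delta)v(a^*,p)+\delta\overline{V}(p,\mathbf{w}(p))$. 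Thus $V_{q^*}$ equals $\overline{V}$ on $\mathcal{W}^1\cup\mathcal{W}^2_{q^*}\cup\mathcal{W}^4$, and on $\mathcal{W}^3_{q^*}$ it is the affine-in-$p$, flat-in-$w$ surface pinned at $\bigl(q^*,m(q^*),\overline{V}(q^*,m(q^*))\bigr)$ and vanishing at $p=1$.

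Granting these formulas, the inequality of Theorem \ref{theo1} holds with equality for $p\in[\underline{q}^1,q^*]$, since there $(p,m(p))\in\mathcal{W}^2_{q^*}$ and $\tau_{q^*}$ by construction recommends $a^*$ and promises $\mathbf{w}(p)$. For $p\in(q^*,\overline{q}^1]$ we have $(p,m(p))\in\mathcal{W}^3_{q^*}$, so $V_{q^*}(p,m(p))=\tfrac{1-p}{1-q^*}\overline{V}(q^*,m(q^*))$, while the right-hand side equals $V_{\overline{q}^1}(p,m(p))=\overline{V}(p,m(p))$. Setting $g(p):=\overline{V}(p,m(p))/(1-p)$, the required inequality is exactly $g(p)\le g(q^*)$. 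This is the reason for the precise definition of $q^*$: using the $\mathcal{W}^2_{\overline{q}^1}$ splitting formula (whose posterior $\varphi(p,\cdot)$ sweeps $[\underline{q}^1,p]$ as $w$ grows) together with an elementary bound on the affine piece $\mathcal{W}^4$, the condition ``$\overline{V}(p,m(p))\ge\overline{V}(p,w)$ for all $w$'' is equivalent to $g(p)=\max_{[\underline{q}^1,p]}g$; hence the set in the definition of $q^*$ is exactly the set of running-maximum points of $g$ on $[\underline{q}^1,\overline{q}^1]$, and since the running maximum is non-decreasing and $q^*$ is its last point, $g(p)\le g(q^*)$ for every $p\in(q^*,\overline{q}^1]$.

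For concavity and monotonicity in $w$: on each of the four regions $V_{q^*}$ is either affine or given by a formula that a direct computation --- using convexity of $m$ and the recursion for $\overline{V}$ on $\mathcal{W}^2$ --- shows to be concave, and the gluings across the $q$-independent seams $\mathcal{W}^1/\mathcal{W}^2$, $\mathcal{W}^1/\mathcal{W}^4$, $\mathcal{W}^2/\mathcal{W}^4$ are concave kinks. The one seam depending on $q^*$ is $\mathcal{W}^2_{q^*}/\mathcal{W}^3_{q^*}$, the segment from $(q^*,m(q^*))$ to $(1,m(1))$; along it $V_{q^*}$ is continuous (both sides equal $\tfrac{1-p}{1-q^*}\overline{V}(q^*,m(q^*))$), and the gradient jumps in the concave direction precisely because $g$ is non-decreasing just left of $q^*$, i.e.\ $g'(q^{*-})\ge 0$ --- again the running-maximum property. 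That same fact gives the decreasingness in $w$: on $\mathcal{W}^1_{q^*}\cup\mathcal{W}^3_{q^*}$ the value is constant in $w$; on $\mathcal{W}^2_{q^*}$, $\partial_w V_{q^*}(p,w)$ has the sign of $-g'(\varphi(p,w))$ with $\varphi(p,\cdot)$ decreasing, hence is non-positive once $g$ is non-decreasing on the range $[\underline{q}^1,p]\subseteq[\underline{q}^1,q^*]$ of $\varphi(p,\cdot)$; and on $\mathcal{W}^4_{q^*}$ it follows from the barycentric representation. With all three hypotheses verified, Theorem \ref{theo1} yields $V_{q^*}=V^*$.

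The crux --- and the step most in need of care --- is ``granting these formulas'': one must show that the recursion governing $\overline{V}$ on $\mathcal{W}^2$ (equivalently, $g(p)=\tfrac{(1-\delta)v(a^*,p)}{1-p}+\delta\,g(\Phi(p))$ with $\Phi(p):=\varphi(p,\mathbf{w}(p))<p$) has a well-defined solution that is concave in the relevant sense and for which $g$ has the shape used above --- non-decreasing up to $q^*$, with its running maximum attained there --- and that the degenerate configurations behave as claimed (e.g.\ $\underline{q}^1=0$; a belief chain $(\Phi^k(p))_k$ that converges to $\partial P$ only asymptotically rather than leaving $\mathcal{W}^2$ in finitely many steps; priors for which $(p_0,m(p_0))$ already lies in $\mathcal{W}^3_{q^*}$). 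Everything else is bookkeeping layered on Theorem \ref{theo1}.
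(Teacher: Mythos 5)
Your skeleton is the paper's: invoke Theorem \ref{theo1} after verifying that $V_{q^*}$ is jointly concave, decreasing in $w$, and satisfies the displayed inequality on $Q^1$, with the definition of $q^*$ doing the work for priors in $(q^*,\overline{q}^1]$. But the proof is incomplete exactly where you flag "the crux": you never establish the concavity of $V_{\overline{q}^1}$, and this is not a "direct computation using convexity of $m$." The paper's entire Section \ref{sec:concave-in-w} (Lemma \ref{V-concave-in-w}) is devoted to proving that $V_{\overline{q}^1}(p,\cdot)$ is concave in $w$ for each $p$, by an induction over the nested sets $Q^k$ on which the belief dynamics $p\mapsto\varphi(p,\mathbf{w}(p))$ step down one level per period, combined with a quasi-concavity result for $p\mapsto\frac{m(1)-\mathbf{w}(p)}{1-p}$ (Lemma \ref{lem:existence of interval}). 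Joint concavity is then obtained not by your region-by-region gluing but by using Observation A (the homogeneity of $V_q$ along rays through $(1,m(1))$) to collapse two-dimensional concavity to the one-dimensional concavity in $w$ just established. Without that lemma, neither your concavity claim nor your monotonicity claim on $\mathcal{W}^2_{q^*}$ is supported.

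Two further steps you assert are not consequences of what you have set up. First, "$g$ is non-decreasing on $[\underline{q}^1,q^*]$" does not follow from $q^*$ being the supremum of the running-maximum set of $g$: that set need not be an interval (if $g$ rises, falls, then rises to a new maximum, its supremum still sits at the far end while $g$ fails monotonicity in between). The paper closes this by showing, using the concavity of $V_{\overline{q}^1}$ in $w$, that if $V_{\overline{q}^1}(p,w)>V_{\overline{q}^1}(p,m(p))$ for some $w$ then the same failure propagates to every $p'>p$ --- so the defining set is an interval and $g$ is indeed a running maximum of itself up to $q^*$. Second, for $p\in(q^*,\overline{q}^1]$ you identify the right-hand side of the Theorem \ref{theo1} inequality with $V_{\overline{q}^1}(p,m(p))$; this requires $V_{q^*}(p,\mathbf{w}(p))=V_{\overline{q}^1}(p,\mathbf{w}(p))$, i.e.\ that $(p,\mathbf{w}(p))$ lies weakly above the graph of $\overline{m}_{q^*}$ rather than in $\mathcal{W}^3_{q^*}$. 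The paper derives this from Lemma \ref{lem:existence of interval}: $\varphi(\cdot,\mathbf{w}(\cdot))$ is decreasing above $\overline{q}$ and $q^*\geq\overline{q}$, so $\varphi(p,\mathbf{w}(p))\leq\varphi(q^*,\mathbf{w}(q^*))\leq q^*$. You would need to supply both arguments, and above all Lemma \ref{V-concave-in-w}, for the proof to stand.
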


To understand the role of $q^*$, recall that for all $p \in [q^*,1]$, the policy leaves rents to the agent.\footnote{That is, the agent is promised a payoff of $\frac{1-p}{1-q^*}m(q^*) + \frac{p-q^*}{1-q^*} m(1)>m(p)$.} To minimize the rents left to the agent, we therefore would like to have $q^*$ as high as possible, i.e, equals to $\overline{q}^1$, the highest belief at which the agent is willing to play $a^*$ in exchange for full information disclosure at the next period. However, $V_{\overline{q}^1}(\cdot,m(\cdot))$ is not guaranteed to be concave in $p$, a necessary condition for optimality. To see that $V^*(\cdot,m(\cdot))$ must be concave in $p$,  consider any pair $(p,p') \in [0,1] \times [0,1]$ and $\alpha \in [0,1]$. We have
\begin{eqnarray*}
\alpha V^*(p,m(p)) + (1-\alpha)V^*(p',m(p')) & \leq & V^*(\alpha p + (1-\alpha) p', \alpha m(p) + (1-\alpha) m(p')) \\
&\leq &  V^*(\alpha p + (1-\alpha) p', m(\alpha p + (1-\alpha) p')),
\end{eqnarray*}
where the first inequality follows from the concavity of $V^*$ in both arguments and the second from $V^*$ decreasing in $w$ and the convexity of $m$.  The optimal choice of $q^*$ is thus the largest $q$, which guarantees $V_q(\cdot,m(\cdot))$ to be concave.

More precisely, as we show in Section \ref{app:theo2-opti}, the definition of $q^*$ guarantees that $V_{q^*}$ is concave in both arguments and decreasing in $w$, so that $V_{q^*}(\cdot,m(\cdot))$ is a concave function of $p$. We also prove that $V_{q^*}(p,m(p)) \geq V_{\overline{q}^1}(p,m(p))$ for all $p$. Since it is clearly the smallest such function, $V_{q^*}$ is the concavification of $V_{\overline{q}^1}$.  In particular, $q^*=\overline{q}^1$ if  $V_{\overline{q}^1}(\cdot,m(\cdot))$ is already concave. Figure \ref{fig:concavification} illustrates the concavification in the context of Example 1.  

\begin{figure}[h]
\centering
\includegraphics[scale=0.15]{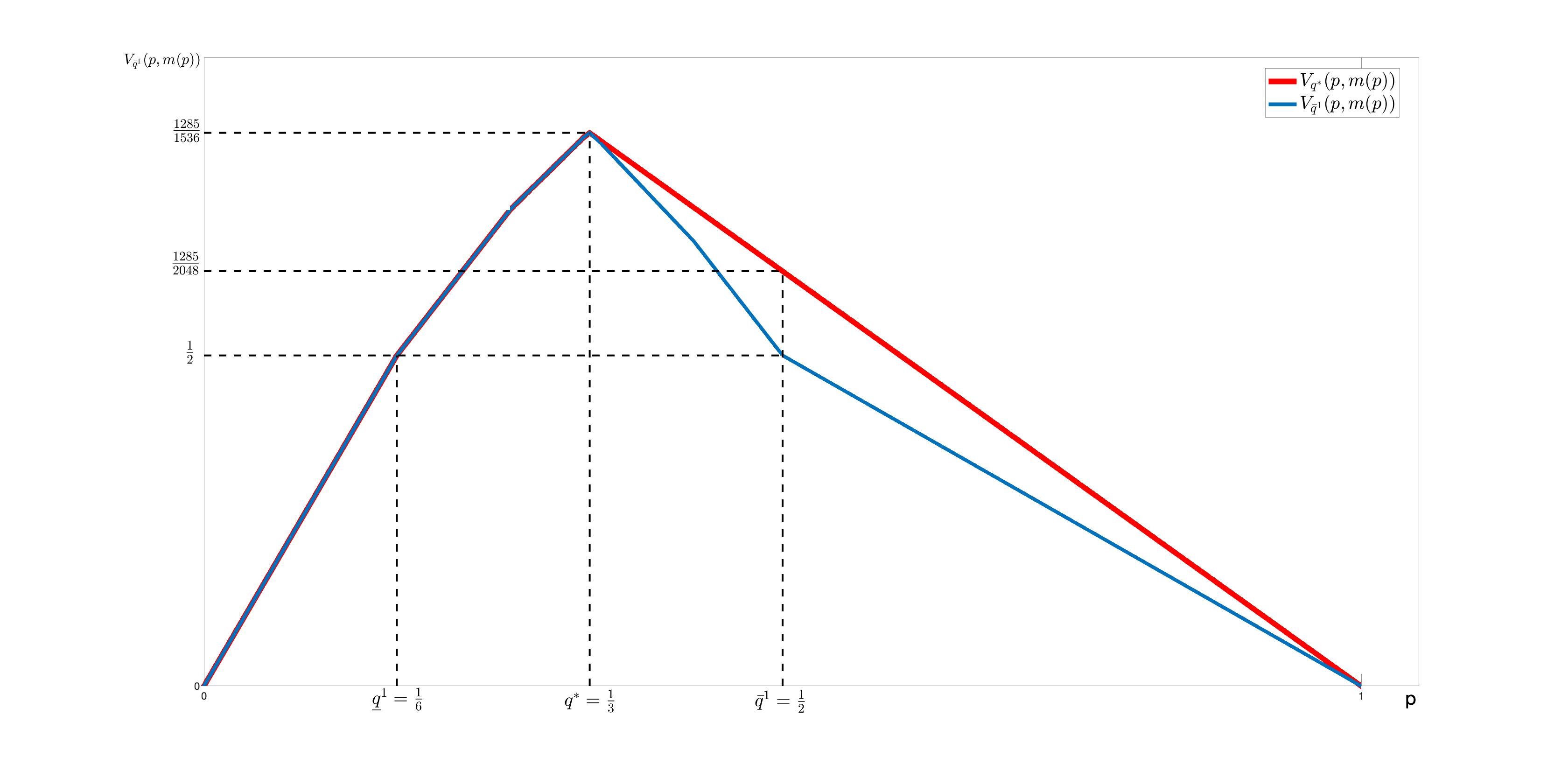}
\caption{The concavification of $V_{\overline{q}^1}(\cdot,m(\cdot))$  in Example 1}\label{fig:q*}\label{fig:concavification}
\end{figure}

The policy we construct leaves rents to the agent for all priors in $[0,\underline{q}^1) \cup(q^*,1]$, that is, the (ex-ante) participation constraint does not bind. This is quite natural for all priors in $[0,1]\ \setminus Q^1$ since the agent cannot be incentivized to play $a^*$ even once. In the language of \citet{Ely2019}, ``the goalposts need to move,'' that is, one needs to disclose information at the ex-ante stage to persuade the agent to play $a^*$. However, our policy also leaves rents for all priors in $(q^*, \overline{q}^1]$. The intuitive reason is that the initial information disclosure  reduces  the cost of incentivizing the agent in subsequent periods  sufficiently enough to compensate for the initial loss. (When the realized posterior is $1$, the agent never plays $a^*$, thus creating the loss.)

\subsection{Properties of the policy} The policy we construct has a number of noteworthy features, which we now explore.\medskip

\textbf{Information disclosure.} The policy discloses information gradually over time, with beliefs evolving until either the agent learns the state or believes that $a^*$ is (statically) optimal. We can be more specific.  First, we consider the instances when the policy converges with positive probability to a belief $p \in P=[\underline{p},\overline{p}]$, the set of beliefs at which $a^*$ is statically optimal. Let $Q^{\infty}= [\underline{p},\overline{q}^{\infty}]$, with $\overline{q}^{\infty}$  the solution to
\[m(\overline{q}^{\infty})= (1-\delta)u(a^*,\overline{q}^{\infty}) + \delta \left(\frac{1-\overline{q}^{\infty}}{1-\underline{p}}m(\underline{p})+ \frac{\overline{q}^{\infty}-\underline{p}}{1-\underline{p}}m(1)\right),\]
if $P$ is non-empty, and $Q^{\infty}= \emptyset$, otherwise. Note that $P \subseteq Q^{\infty}$. See Figure \ref{fig:q-infty} for a graphical illustration.

\begin{figure}[h]
\begin{center}
\begin{tikzpicture}
\draw[->,thick] (0,0)--(0,5);
\draw[->,thick] (0,0) -- (7,0);
\draw (0,3)--(2,1.5)--(4,1.5)--(6,4.5) node[right] {$m(p)$};
\draw[dashdotted] (0,1.5)--(6,1.5) node[right] {$u(a^*,p)$};
\draw[dotted](2,1.5)--(6,3) node[right]{$(1-\delta)u(a^*,p) + \delta \left(\frac{1-p}{1-\underline{p}}m(\underline{p})+ \frac{p-\underline{p}}{1-\underline{p}}m(1)\right)$ };
\draw[dotted] (4.65,2.5)--(4.65,0) node[below] {$\overline{q}^{\infty}$};
\draw[dotted] (4,1.5)--(4,0) node[below] {$\overline{p}$};
\draw[dotted] (2,1.5)--(2,0) node[below] {$\underline{p}$};
\node[below] at (6.5,0) {$p$};
\end{tikzpicture}
\end{center}
\caption{Construction of $\overline{q}^{\infty}$}\label{fig:q-infty}
\end{figure}

Intuitively, the set $Q^{\infty}$ has the ``fixed-point property,''  that is, if one starts with a prior $p \in Q^{\infty}$ and promised utility $\bold{w}(p)$, then the belief $\varphi(p,\mathbf{w}(p)) \in Q^{\infty}$. To see this, note that the pair $(p, \bold{w}(p))$ is in region $\mathcal{W}^2_q$. Since $\varphi(p,\mathbf{w}(p)) \leq p$ (with a strict inequality if $p \notin P$), we then have a decreasing sequence of beliefs converging to an element in $P$. This is because, at
all priors $p \in Q^{\infty}$, the policy splits $p$ into $p'=\varphi(p,\mathbf{w}(p))$ and $1$, then splits  $p'$ into $p''=\varphi(p',\mathbf{w}(p'))$ and $1$, etc.  The decreasing sequence $(p, p' ,p'',\dots)$ converges, either in finite time or asymptotically, to a belief in $P$, at which no further splitting occurs and the agent plays $a^*$ forever.  See panel (\textsc{b}) of Figure \ref{fig:beliefsintro} for an an illustration.

Second, at all priors in $ [0,1]\setminus Q^{\infty}$, there exists $T_{\delta} < \infty$ such that the belief process is absorbed in the degenerate beliefs $0$ or $1$ after at most $T_{\delta}$ periods. In other words, the agent learns the state for sure in finite time. The number of periods $T_{\delta}$ corresponds to the maximal number of periods the agent can be incentivized to play $a^*$.  (We provide an explicit computation in Section \ref{app:value-function}.) In Example $1$, it is $4$. Moreover, the date $T_{\delta}$ is increasing in $\delta$ and converges to $+\infty$ as $\delta$ converges to $1$. (Note that the convergence is uniform in that it does not depend on $p \in [0,1]\setminus Q^{\infty}$.)

 \medskip 


\textbf{The economics of our policy.} We now provide some further economic insights as to why our policy is optimal. Let $(\sigma,\tau)$ be a profile of strategies and denote $\mathbb{P}_{\sigma,\tau}(\cdot|\omega)$ the distribution over signals and actions induced by $(\sigma,\tau)$ conditional on $\omega$. The principal's expected payoff is: 
\begin{eqnarray*}
& (1-\delta) \sum_{\omega}\left(p_{0}(\omega) \left(\sum_{t}\sum_{s^{t},a^{t-1}}\delta^{t-1}\mathbb{P}_{\sigma,\tau}(s^{t},a^{t-1}|\omega)\tau_t(a^*|s^t,a^{t-1})\right)v(a^*,\omega)\right) = \\
& \lambda^* v^*(a^*,p^*),
\end{eqnarray*}
where 
\[\lambda^*: =(1-\delta) \sum_{\omega}p_{0}(\omega) \left(\sum_{t}\sum_{s^{t},a^{t-1}}\delta^{t-1}\mathbb{P}_{\sigma,\tau}(s^{t},a^{t-1}|\omega)\tau_t(a^*|s^t,a^{t-1})\right)\]
is the discounted probability of recommending action $a^*$ and 
\begin{eqnarray*}
p^* & :=& \frac{(1-\delta) p_{0}(\omega_1) \left(\sum_{t}\sum_{s^{t},a^{t-1}}\delta^{t-1}\mathbb{P}_{\sigma,\tau}(s^{t},a^{t-1}|\omega_1)\tau_t(a^*|s^t,a^{t-1})\right)}{\lambda^*},
\end{eqnarray*}
is the average discounted probability of $\omega_1$ when  $a^*$  is played. Notice that $p^*$ cannot be lower than $\underline{q}^1$ since the agent would never play $a^*$ at beliefs lower than $\q$. Similarly, let $p^{\dagger}$ be the average discounted probability of $\omega_1$ when $a^*$ is not recommended. Since the belief process is a martingale, $\lambda^*p^* + (1-\lambda^*)p^{\dagger}=p_0$.

We now turn our attention to the agent's expected payoff. Since the agent's static payoff is bounded from above by $M(p)$ when his belief is $p$,   his ex-ante expected payoff is bounded from above by: 
\begin{equation}\label{eq:upper-bound}
\lambda^* u(a^*,p^*)+ (1-\lambda^*)M(p^{\dagger})= \lambda^*[u(a^*,p^*)-M(p^*)]+ M(p_0),
\end{equation}
where we use the linearity of $u(a^*,\cdot)$ and $M(\cdot)$. Moreover, since the agent's payoff must be at least $m(p_0)$, there exists a positive number $c \geq 0$ such that 
\[\lambda^*(u(a^*,p^*)-M(p^*))+ M(p_0) -c=m(p_0). \] 
The number $c$ captures two effects. First, the optimal solution may leave some rents to the agent, so that the agent's payoff is $m(p_0)+c_1$ for some $c_1 \geq 0$. Second, the agent's actual payoff is bounded away from the upper bound derived in (\ref{eq:upper-bound}) by some positive number $c_2\geq 0$. Thus, $c=c_1+c_2$.

We can then rewrite the principal's expected payoff as: 
\[\frac{ v(a^*,p^*)}{M(p^*)-u(a^*,p^*)}(M(p_0)-m(p_0)-c). \]
The first term  captures the  benefit of incentivizing the agent to play $a^*$ relative to the cost. Since $\frac{v(a^*,0)}{v(a^*,1)} \geq \frac{m(0)-u(a^*,0)}{m(1)-u(a^*,1)}$,  it is decreasing in $p^*$.\footnote{This follows from the observation that $M(p^*)-u(a^*,p^*)=p^*[(m(1)-m(0))- (u(a^*,1)-u(a^*,0))]+m(0)-u(a^*,0)$, $v(a^*,p^*)= p^*(v(a^*,1)-v(a^*,0))+v(a^*,0)$, and simple algebra.} Ceteris paribus, the lower the average belief at which the agent plays $a^*$, the higher the principal's expected payoff.  In a sense, this term represents the ``debt'' the principal accumulates  over time as the agent repeatedly plays $a^*$, that is, this is how much the principal owes the agent in exchange for him playing $a^*$. The second term captures how the principal repays his debt with his only instrument: information. The term $M(p_0)-m(p_0)$ is the maximal value of information the principal can create. Ceteris paribus, the principal's payoff is decreasing in $c$, that is, the best is to leave no rents to the agent and to create as much information as necessary to repay the agent. Notice that $c=0$ is only achieved by both leaving no rents to the agent and having the agent informed of the state when he does not play $a^*$.  

The principal, however, cannot freely choose $p^*$ and $c$: he needs to trade-off  a lower $p^*$ for a lower $c$. In other words, inducing a lower belief at which the agent plays $a^*$ necessitates to compensate the agent with some information creation. We can now understand better our policy. For $p_0 \in [0,\underline{q}^1]$, our policy consists in splitting $p_0$ into $0$ and $\underline{q}^1$. At $\q$, the agent is incentivized to plays $a^*$ with the promise of full information disclosure at the next period. The policy thus maximizes the benefit of incentivizing the agent to play $a^*$ relative to the cost (i.e., $p^*=\q$ is the lowest possible), attains the upper bound derived in (\ref{eq:upper-bound}) for the agent's expected payoff (i.e., $c_2=0$), and leaves as little rents as possible (i.e., $c_1$ is minimized).  A similar argument applies to $p_0 \in [\overline{q}^1,1]$. For all $p_0 \in [\underline{q}^1,q^*]$, the policy leaves no rents to the agent and uses to the full extent possible the information available to the principal, i.e., $c=0$. In addition, the policy minimizes $p^*$ conditional on $c=0$. Finally, for all $p_0 \in (q^*,\overline{q}^1]$, a similar logic applies. The policy leaves strictly positive rents to the agent, i.e., $c_1>0$. The gain is to reduce the average cost of incentivizing $a^*$ in the future, which out-weights the cost.  This is achieved by lowering the agent's average belief.

\section{Final Discussion}\label{sec:final}
We conclude the paper with an informal comparison of our policy with two other policies, which feature prominently in the dynamic persuasion literature. (See Appendix \ref{app:other-policies}  for a formal discussion.) \medskip 

We first consider the Kamenica-Gentzkow's policy (for short, KG's policy). The KG's policy aims to persuade the agent to choose $a^*$ as often as possible by disclosing information at the initial stage only. Consider Example 1. If the principal commits to disclose information at the initial stage only, the principal's payoff is 0 (since $a^*$ is never statically optimal). The KG's policy is clearly not optimal in this example. The intuition for why is straightforward. The principal would gain by conditioning his information disclosure on the agent's actions, but this is not permitted by the KG's policy. (As we show in Appendix \ref{app:other-policies}, the KG's policy is, however, optimal in all problems with two actions.) \medskip 

Another policy the principal could commit to is to fully disclose the state with  probability $\alpha$ at period $t$ (and withhold all information with the complementary probability) if the agent plays $a^*$ at period $t-1$. (If the agent deviates, the harshest  punishment is to withhold all information in all subsequent periods.) This is a recursive policy. Consider again Example 1. Since the normalized expected payoff from the harshest punishment is $2/3$, if we write $V$ (resp., $U$) for the principal (resp., agent) payoff, the best such policy is to choose $\alpha$ so as to maximize
\[V=\frac{1}{2}1+ \frac{1}{2} (1-\alpha) V, \]
subject to
\[U=\frac{1}{2}\Big(\frac{1}{2}\Big)+ \frac{1}{2}\Big[(1-\alpha)U + \alpha \frac{4}{3} \Big] \geq \frac{2}{3}. \]
The principal's best payoff is $V=4/5$ with $\alpha=1/4$. This policy 
is not optimal. As we have seen, the principal obtains a payoff of $V=1285/1536$ at the optimum. The key intuition why this is so is that the policy of random full disclosure does not exploit the asymmetries in the opportunity cost of playing the principal's most preferred action with respect to the state. In fact, we show in Appendix \ref{app:other-policies} that if there are no asymmetries, i.e.,$\frac{v(a^*,0)}{v(a^*,1)} = \frac{m(0)-u(a^*,0)}{m(1)-u(a^*,1)}$, then the random policy is also optimal. \medskip

An analogous policy, which is also not optimal, is to ``reward'' the agent with full information disclosure for playing $a^*$ sufficiently often at the beginning of the relationship, say up to period $T^*$. This policy of fully disclosing the state with delay plays a prominent role in the work of \citet{Ball2019dynamic} and \citet{orlov2018persuading}. 
In Example 1, this policy selects the largest $T^*$ such that
\[(1-\delta)\left(\frac{1}{2}\big(\delta^0+\delta^1+\dots+\delta^{T^*-1}\big) + \frac{4}{3}\big(\delta^{T^*}+\dots \big) \right)\geq \frac{2}{3},\]
With such a simple strategy, $T^*=\lfloor \ln(5)/\ln(2) \rfloor = 2$ and the principal's payoff is $3/4$. This  policy does worse than the policy of random full disclosure, but only because of an ``integer constraint''.  Intuitively, random full disclosure performs better  because it makes it possible to incentivize the agent to play $a^*$ a discounted number of periods slightly larger than $2$, namely $\ln(5)/\ln(2)$. (In continuous time, the two policies would be equivalent.)

\appendix
\section{Proofs}
\subsection{Mathematical preliminaries}\label{math:prelim}
We collect without proofs some useful results about concave functions. Let $f: [a,b] \rightarrow \mathbb{R}$ be a concave function and $a \leq x < y <z \leq b$. The following properties hold: 
\begin{itemize}
\item[(a)] $\frac{f(y)-f(x)}{y-x} \geq \frac{f(z)-f(y)}{z-y}$,
\item[(b)] $\frac{f(y)-f(a)}{y-a} \geq \frac{f(z)-f(a)}{z-a}$,
\item[(c)] $\frac{f(b)-f(x)}{b-x} \geq \frac{f(b)-f(y)}{b-y}$.
\item[(d)] $\frac{f(y)-f(x)}{y-x}\geq \frac{f(y+\Delta)-f(x+\Delta)}{y-x}$ 
for all $\Delta \geq 0$ such that $y + \Delta \leq b$. 
\end{itemize}
Note that property (a) implies (d) and is true irrespective of whether $x+\Delta \gtreqqless y$. We will repeatedly use these properties in most of the following proofs. 
\medskip

To prove Lemma \ref{V-concave-in-w}, we will use the following property: if $f: [a,b] \rightarrow \mathbb{R}$ satisfies $\frac{f(x)-f(a)}{x-a} \geq \frac{f(y)-f(a)}{y-a}$ for all $a < x \leq y \leq b$, then $f$ is concave.

\subsection{Recursive formulation: Theorem 4 of \citet[p. 44]{Ely2015}}\label{app-recursive-formulation}
We first note that the operator $T$ is monotone, i.e., for all $V \geq V'$, $T(V) \geq T(V')$. It also satisfies $T(V + c) \leq T(V) + \delta c$ for all positive constant $c\geq 0$, for all $V$. Hence, it is indeed a contraction by Blackwell's theorem. \medskip

\citet{Ely2015}  proves that the principal's maximal payoff is $\max_{w \in [m(p_0),M(p_0)]}\widehat{V}^{*}(p_0,w)$, with $\widehat{V}^{*}$ the unique fixed point of the contraction $\widehat{T}$, with $\widehat{T}$ differing from $T$ in that the promise-keeping constraint is as an equality; all other constraints are the same.  Note that the operator $\widehat{T}$ is also monotone.

We now argue that  that $V^*(p_0,m(p_0))= \max_{w \in [m(p_0),M(p_0)]}\widehat{V}^{*}(p_0,w)$. (Note that we are not arguing that $T =\widehat{T}$.)

As a preliminary observation, note that $T(V)(p,w) \geq \widehat{T}(V)(p,w)$ for all $(p,w) \in \mathcal{W}$, for all $V$. Let $w_0 \in \arg\max_{w \in [m(p_0),M(p_0)]}\widehat{V}^{*}(p_0,w)$.  We have that
  \begin{eqnarray*}
  V^*(p_0,m(p_0)) \geq V^*(p_0,w_0) = T(V^*)(p_0,w_0) &\geq & \widehat{T}(V^*)(p_0,w_0) \geq  
  \widehat{T}^2(V^*)(p_0,w_0) \geq \dots \geq \\
&   \geq &   \lim_{n \rightarrow \infty}\widehat{T}^n(V^*)(p_0,w_0) = \widehat{V}^*(p_0,w_0), 
  \end{eqnarray*}
where the first inequality follows from  $V^*$ being decreasing in $w$. 

Conversely, let $(\lambda^*_s, p^*_s,w^*_s,a^*_s)_{s \in S}$ be a maximizer of $T(V^*)(p_0,m(p_0))$. We have that
\[ M(p_0) = \sum_{s \in S}\lambda^*_s M(p^*_s) \geq \sum_{s \in S} \lambda^*_s[ (1-\delta)u(a^*_s,p^*_s)+ \delta w^*_s] :=\widehat{w}_0 \geq \ \sum_{s \in S}\lambda^*_s m(p^*_s) \geq m(p_0),\]
hence $(\lambda^*_s, p^*_s,w^*_s,a^*_s)_{s \in S}$ is a maximizer for $T(\widehat{V}^*)(p_0,\widehat{w}_0)$ and, consequently, \[V^*(p_0,m(p_0))  = \widehat{V}^*(p_0,\widehat{w}_0) \leq \max_{w \in [m(p_0),M(p_0)]}\widehat{V}^{*}(p_0,w).\]

\subsection{Proposition \ref{prop2}}
We break Proposition \ref{prop2} into several lemmata.

\begin{lemma}
    \label{lem:state-wise optimality} Let $(\lambda_s, p_s,w_s,a_s)_{s \in S}$ be a solution to the maximization program $T(V^*)(p,w)$.
    For all $s \in S$ such that $\lambda_s >0$, we have
    \[\left( {1 - \delta } \right)v\left( {{a_s},{p_s}} \right) + \delta {V^ * }\left( {{p_s},{w_s}} \right) = {V^ * }\left( {{p_s},\left( {1 - \delta } \right)u\left( {{a_s},{p_s}} \right) + \delta {w_s}} \right).\]
\end{lemma}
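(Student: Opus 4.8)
\textbf{Proof plan for Lemma \ref{lem:state-wise optimality}.}

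The statement says that, at an optimal policy, for every signal $s$ used with positive probability, the principal gains nothing by disclosing further information within that period: the left-hand side $(1-\delta)v(a_s,p_s)+\delta V^*(p_s,w_s)$ is exactly the value $V^*\big(p_s,(1-\delta)u(a_s,p_s)+\delta w_s\big)$ of promising, at belief $p_s$, the continuation utility $(1-\delta)u(a_s,p_s)+\delta w_s$ and then acting optimally. The plan is to prove both inequalities. The easy direction is ``$\leq$'': the pair $\big(\lambda=1,(p_s, w_s), a_s\big)$ (a degenerate splitting that puts all mass on $s$) is \emph{feasible} for the program $T(V^*)\big(p_s,(1-\delta)u(a_s,p_s)+\delta w_s\big)$ — the incentive constraint holds because $(1-\delta)u(a_s,p_s)+\delta w_s\ge m(p_s)$ is part of the feasibility of the original tuple, the promise-keeping constraint holds with equality by construction of the target promised payoff, and the splitting constraint is trivially met. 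Since $V^*$ is the value of that program, $V^*\big(p_s,(1-\delta)u(a_s,p_s)+\delta w_s\big)\ge (1-\delta)v(a_s,p_s)+\delta V^*(p_s,w_s)$.

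For the reverse inequality ``$\geq$'', I would argue by contradiction: suppose for some $s$ with $\lambda_s>0$ the strict inequality $(1-\delta)v(a_s,p_s)+\delta V^*(p_s,w_s) < V^*\big(p_s,(1-\delta)u(a_s,p_s)+\delta w_s\big)$ holds. Let $\big((\mu_{s'},(p_{s'},w_{s'}),a_{s'})\big)_{s'\in S}$ attain the maximum in $T(V^*)\big(p_s,(1-\delta)u(a_s,p_s)+\delta w_s\big)$, so that $V^*\big(p_s,(1-\delta)u(a_s,p_s)+\delta w_s\big)=\sum_{s'}\mu_{s'}\big[(1-\delta)v(a_{s'},p_{s'})+\delta V^*(p_{s'},w_{s'})\big]$. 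Now modify the original policy at $(p,w)$: replace the single branch $s$ by the collection of branches $\{s'\}$, each assigned probability $\lambda_s\mu_{s'}$, belief $p_{s'}$, promised utility $w_{s'}$, action $a_{s'}$. I must check this modified tuple is feasible for $T(V^*)(p,w)$: the incentive constraints on each new branch are inherited from the subproblem; the splitting constraint on beliefs is preserved since $\sum_{s'}\mu_{s'}p_{s'}=p_s$ and hence $\sum_{s'}\lambda_s\mu_{s'}p_{s'}+\sum_{s''\ne s}\lambda_{s''}p_{s''}=p$; and the global promise-keeping constraint is preserved because $\sum_{s'}\mu_{s'}\big[(1-\delta)u(a_{s'},p_{s'})+\delta w_{s'}\big]\ge (1-\delta)u(a_s,p_s)+\delta w_s$ (this is the promise-keeping constraint of the subproblem, which the maximizer satisfies). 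The objective value of the modified policy then equals the original value minus $\lambda_s\big[(1-\delta)v(a_s,p_s)+\delta V^*(p_s,w_s)\big]$ plus $\lambda_s V^*\big(p_s,(1-\delta)u(a_s,p_s)+\delta w_s\big)$, which by the assumed strict inequality is strictly larger than $T(V^*)(p,w)$ — contradicting optimality of the original tuple.

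The one subtlety — and the main thing to be careful about — is the cardinality of the signal space: splicing the subproblem's branches into branch $s$ could require more signals than $S$ provides. This is handled by the paper's standing assumption that $|S|$ is taken as large as necessary for the principal to be unconstrained (it suffices to have $|S|\ge|A|$ by \citet{makris-renou}); since at an optimum one need never use more posteriors than actions, the merged policy can be realized within $S$ after discarding redundant branches. I would state this observation once and then proceed. A cleaner alternative that sidesteps the bookkeeping entirely is to phrase the argument purely in terms of the value $V^*$ and the self-map property: define the modified \emph{distribution over posteriors-and-promises} directly (rather than over signals), note that $T(V^*)(p,w)$ can equivalently be written as a maximization over such distributions, and conclude. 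Either way, the substantive content is just the feasibility check plus strict improvement, and no real computation is needed.
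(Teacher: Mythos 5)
Your proposal is correct and follows essentially the same route as the paper: the ``$\leq$'' direction from the fixed-point property via the degenerate splitting, and the ``$\geq$'' direction by splicing the subproblem's maximizer into branch $s$ and deriving a strict improvement, exactly as in the paper's proof of Lemma~\ref{lem:state-wise optimality}. Your explicit verification of the global promise-keeping constraint and your remark on signal cardinality are slightly more careful than the paper's ``by construction, the new policy is feasible,'' but they add no new substance.
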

\begin{proof} By contradiction, assume that there exists $s' \in S$ such that $\lambda_{s'}>0$ and \[\left( {1 - \delta } \right)v\left( a_{s'},p_{s'} \right) + \delta V^ * \left( p_{s'},w_{s'} \right) < V^ * \left( p_{s'},\left( 1 - \delta \right)u\left( a_{s'},p_{s'} \right) + \delta w_{s'} \right).\]
Let $(\lambda^*_s, p^*_s,w^*_s,a^*_s)_{s \in S}$ be the policy, which achieves $V^*(p_{s'}, (1-\delta)u(a_{s'},p_{s'}) + \delta w_{s'})$, and consider the new policy
\[((\lambda_s, p_s,w_s,a_s)_{s \in S \setminus \{s'\}}, (\lambda_{s'} \lambda^*_s, p^*_s,w^*_s,a^*_s)_{s \in S}).\]
By construction, the new policy is feasible. Moreover, we have that
\begin{eqnarray*}
\sum_{s \in S \setminus \{s'\}} \lambda_s [(1-\delta)v(a_s,p_s)+ \delta V^* (p_s,w_s)]  + \lambda_{s'}
\sum_{s \in S} \lambda^*_s [(1-\delta)v(a^*_s,p^*_s)+ \delta V^* (p^*_s,w^*_s)] = \\
\sum_{s \in S \setminus \{s'\}} \lambda_s [(1-\delta)v(a_s,p_s)+ \delta V^* (p_s,w_s)]  + \lambda_{s'}
V^*(p_{s'}, (1-\delta)u(a_{s'},p_{s'}) + \delta w_{s'})> \\
\sum_{s \in S} \lambda_s [(1-\delta)v(a_s,p_s)+ \delta V^* (p_s,w_s)],
\end{eqnarray*}
a contradiction with the optimality of $(\lambda_s, p_s,w_s,a_s)_{s \in S}$.

Since  the fixed point satisfies $V^*(p_s,(1-\delta)u(a_s,p_s)+\delta w_s) \geq (1-\delta)v(a_s,p_s)+\delta V^*(p_s,w_s)$, we have the desired result.
\end{proof}

\begin{lemma}
    \label{lem:shirk at boundary}
   Let $(\lambda_s, p_s,w_s,a_s)_{s \in S}$ be a solution to the maximization program $T(V^*)(p,w)$.
    For all $s \in S$ such that $\lambda_s >0$, $V^*(p_s,w_s)=0$ if $a_s \neq a^*$.\end{lemma}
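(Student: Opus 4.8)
The plan is to avoid the two-period policy perturbation sketched informally before Proposition \ref{prop2} and instead derive the conclusion directly from three facts already in hand: $V^*\geq 0$, the state-wise optimality identity of Lemma \ref{lem:state-wise optimality}, and the monotonicity of $V^*$ in $w$ from Proposition \ref{prop:concave}. The nonnegativity of $V^*$ is a preliminary remark: since $v(a,p)\geq 0$ for every $a$ and $p$ (indeed $v(a,p)=0$ unless $a=a^*$), the operator $T$ sends nonnegative functions to nonnegative functions, so its unique fixed point $V^*$ is nonnegative.

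Fix $s\in S$ with $\lambda_s>0$ and $a_s\neq a^*$. First I would apply Lemma \ref{lem:state-wise optimality} to branch $s$, which gives
\[ (1-\delta)v(a_s,p_s) + \delta V^*(p_s,w_s) = V^*\bigl(p_s,(1-\delta)u(a_s,p_s)+\delta w_s\bigr). \]
Since $a_s\neq a^*$ we have $v(a_s,p_s)=0$, so the left-hand side is $\delta V^*(p_s,w_s)$. Writing $\tilde w_s := (1-\delta)u(a_s,p_s)+\delta w_s$, the identity becomes $\delta V^*(p_s,w_s)=V^*(p_s,\tilde w_s)$.

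Next I would check $\tilde w_s\leq w_s$ and $(p_s,\tilde w_s)\in\mathcal{W}$, so that $V^*$ is defined there and monotonicity can be invoked. Both follow from feasibility of the tuple: $(p_s,w_s)\in\mathcal{W}$ gives $m(p_s)\leq w_s$, hence $u(a_s,p_s)\leq m(p_s)\leq w_s$, so $\tilde w_s$, a convex combination of $u(a_s,p_s)$ and $w_s$, satisfies $\tilde w_s\leq w_s$; the incentive constraint gives $\tilde w_s\geq m(p_s)$; and $u(a_s,p_s)\leq m(p_s)\leq M(p_s)$ together with $w_s\leq M(p_s)$ gives $\tilde w_s\leq M(p_s)$. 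Since $V^*$ is decreasing in $w$, $V^*(p_s,\tilde w_s)\geq V^*(p_s,w_s)$, and combining with $\delta V^*(p_s,w_s)=V^*(p_s,\tilde w_s)$ yields $(\delta-1)V^*(p_s,w_s)\geq 0$. As $\delta<1$, this forces $V^*(p_s,w_s)\leq 0$, and together with $V^*\geq 0$ we conclude $V^*(p_s,w_s)=0$.

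The argument is elementary; the only step that needs care — the "obstacle" such as it is — is the verification that $\tilde w_s\leq w_s$ and $(p_s,\tilde w_s)\in\mathcal{W}$, which is precisely where the feasibility constraints (the incentive constraint and $(p_s,w_s)\in\mathcal{W}$) enter before monotonicity of $V^*$ in $w$ can be used. I expect this route to be noticeably cleaner than the "anticipate next period's disclosure" perturbation, which would require tracking feasibility of a merged policy.
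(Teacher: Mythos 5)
Your proposal is correct and is essentially identical to the paper's own proof in the appendix: both apply the identity from Lemma \ref{lem:state-wise optimality} to the branch $s$, use $v(a_s,p_s)=0$, compare $\delta V^*(p_s,w_s)$ with $V^*(p_s,w_s)$ via monotonicity in $w$ (the inequality $(1-\delta)u(a_s,p_s)+\delta w_s\leq w_s$ resting on $u(a_s,p_s)\leq m(p_s)\leq w_s$), and conclude $V^*(p_s,w_s)\leq 0$. The "perturbation" you set out to avoid is only the informal intuition given in the main text, not the paper's actual argument; your added checks that $(p_s,\tilde w_s)\in\mathcal{W}$ and that $V^*\geq 0$ are sound and slightly more explicit than the paper's.
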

\begin{proof}
Let $s \in S$ such that $\lambda_s>0$ and $a_s \neq a^*$. We have
\begin{eqnarray*}
(1-\delta)v(a_s,p_s)+ \delta V^*(p_s,w_s) =  \delta V^*(p_s,w_s) \geq V^*(p_s,(1-\delta)u(a_s,p_s)+\delta w_s) \geq V^*(p_s, w_s),
\end{eqnarray*}
where the first inequality follows from  Lemma \ref{lem:state-wise optimality} and the second follows from $V^*$ decreasing in $w$ and $w_s \geq
u(a_s,p_s)$ for
\[(1-\delta)u(a_s,p_s)+\delta w_s \geq m(p_s), \]
to hold. It follows that $V^*(p_s,w_s) =0$.
\end{proof}

\begin{lemma}
    \label{lem:effort at one signal}
    Let $(\lambda'_s, p'_s,w'_s,a'_s)_{s \in S'}$ be a solution to the maximization program $T(V^*)(p,w)$. There exists another solution $(\lambda_s, p_s,w_s,a_s)_{s \in S}$ such that $a_s=a^*$ for at most one $s \in S$ with $\lambda_s>0$.
\end{lemma}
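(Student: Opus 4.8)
The plan is to prove this by a \emph{merging} argument: whenever a solution uses two distinct signals that both recommend $a^*$, I would glue them into a single signal carrying the average belief and average promised utility, show that this preserves feasibility and does not lower the objective, and then iterate. So suppose $(\lambda_s,p_s,w_s,a_s)_{s\in S}$ solves $T(V^*)(p,w)$ and that there are distinct $s_1,s_2$ with $\lambda_{s_1},\lambda_{s_2}>0$ and $a_{s_1}=a_{s_2}=a^*$. I would set $\nu:=\lambda_{s_1}+\lambda_{s_2}$, $\mu_i:=\lambda_{s_i}/\nu$ for $i=1,2$, and
\[
r:=\mu_1 p_{s_1}+\mu_2 p_{s_2},\qquad x:=\mu_1 w_{s_1}+\mu_2 w_{s_2},
\]
and form a new policy by deleting $s_1,s_2$ and introducing one fresh signal carrying the data $(\nu,r,x,a^*)$ (this is available because $|S|$ is as large as needed, with the remaining unused signals given probability zero).

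Next I would verify feasibility of the new policy. The splitting constraint and $\sum_s\lambda_s=1$ are preserved because the new signal reproduces the contributions $\nu r=\lambda_{s_1}p_{s_1}+\lambda_{s_2}p_{s_2}$ and $\nu=\lambda_{s_1}+\lambda_{s_2}$. Since $M$ is linear and $w_{s_i}\le M(p_{s_i})$ we get $x\le M(r)$; since $m$ is convex and $w_{s_i}\ge m(p_{s_i})$ we get $x\ge\mu_1 m(p_{s_1})+\mu_2 m(p_{s_2})\ge m(r)$, so $(r,x)\in\mathcal W$. Using that $u(a^*,\cdot)$ is affine together with the incentive constraints for $s_1,s_2$,
\[
(1-\delta)u(a^*,r)+\delta x=\sum_{i=1}^2\mu_i\big[(1-\delta)u(a^*,p_{s_i})+\delta w_{s_i}\big]\ \ge\ \sum_{i=1}^2\mu_i\, m(p_{s_i})\ \ge\ m(r),
\]
where the last step is Jensen's inequality for the convex function $m$; hence recommending $a^*$ at the new signal is incentive-compatible. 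The same affine identity (without even using the incentive inequalities) shows the new signal contributes $\nu\big[(1-\delta)u(a^*,r)+\delta x\big]=\sum_i\lambda_{s_i}\big[(1-\delta)u(a^*,p_{s_i})+\delta w_{s_i}\big]$ to the promise-keeping sum, so that constraint is preserved as well.

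Then I would compare objective values. Because $v(a^*,\cdot)$ is affine, $\nu(1-\delta)v(a^*,r)=\sum_i\lambda_{s_i}(1-\delta)v(a^*,p_{s_i})$. Because $V^*$ is concave in $(p,w)$ (Proposition \ref{prop:concave}) and $(r,x)=\mu_1(p_{s_1},w_{s_1})+\mu_2(p_{s_2},w_{s_2})$, we have
\[
V^*(r,x)\ \ge\ \mu_1 V^*(p_{s_1},w_{s_1})+\mu_2 V^*(p_{s_2},w_{s_2}),
\]
so $\nu\delta V^*(r,x)\ge\sum_i\lambda_{s_i}\delta V^*(p_{s_i},w_{s_i})$. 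Summing the two inequalities shows the new policy's value is at least that of the old one; since the old one is optimal, so is the new one. The new policy recommends $a^*$ at strictly fewer signals, so applying the merge finitely many times yields a solution with $a_s=a^*$ for at most one $s$ with $\lambda_s>0$.

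I do not expect a serious obstacle here: the whole proof is the verification that this "averaging" merge respects each of the four constraints and weakly improves the objective. The only inputs that are not pure bookkeeping are the convexity of $m$ (which is what makes the merged incentive constraint come out as $\ge m(r)$ rather than merely $\ge$ the average of $m(p_{s_i})$) and the concavity of $V^*$ (which is what makes the objective not decrease); both are already available. The one point to handle carefully is precisely this asymmetry between where Jensen pushes the right way ($m$, $V^*$) and where one only needs linearity ($u(a^*,\cdot)$, $v(a^*,\cdot)$, $M$).
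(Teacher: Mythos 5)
Your proposal is correct and is essentially the paper's own argument: the paper merges all signals recommending $a^*$ at once into a single signal carrying the $\lambda$-weighted average belief and promised utility, invoking the concavity of $V^*$ and the linearity of $v(a^*,\cdot)$ for the objective and the convexity of $m$ (with linearity of $M$ and $u(a^*,\cdot)$) for feasibility, exactly as you do pairwise. Your iterated two-signal merge is an inessential variant, and the feasibility checks you spell out are the ones the paper labels ``routine.''
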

\begin{proof}
Let $(\lambda'_s, p'_s,w'_s,a'_s)_{s \in S'}$ be a solution to the maximization program $T(V^*)(p,w)$. Let $S^* \subseteq S'$ be the set of signals such that $a_s=a^*$ and $\lambda_s>0$. If $S^*$ is empty, there is nothing to prove. If $S^*$ is non-empty, define $p^*$ as
\[\sum_{s \in S^*} \Big(\frac{\lambda'_s}{\sum_{s \in S^*} \lambda'_s}\Big) p_s =p^*,\]
and $\sum_{s \in S^*} \lambda'_s = \lambda^*$.
From the concavity of $V^*$, we have that
\begin{eqnarray*}
\sum_{s \in S^*} \lambda'_s (v(a^*,p'_s) (1-\delta)+ \delta V^*(p'_s,w'_s)) & = & \lambda^* \Big(v(a^*,p^*)(1-\delta) + \delta
\sum_{s \in S^*}\Big(\frac{\lambda'_s}{\lambda^*}\Big)V(p'_s,w'_s)\Big) \\
&\leq &   \lambda^* \Big(v(a^*,p^*)(1-\delta) + \delta
V(p^*,w^*)\Big),
\end{eqnarray*}
where
\[w^*= \sum_{s \in S^*} \Big(\frac{\lambda'_s}{\sum_{s \in S^*} \lambda'_s}\Big) w'_s.\]
Notice that $w^* \in [m(p^*),M(p^*)]$ since the convexity of $m$ implies
\[ M(p^*)= \sum_{s \in S^*}\Big(\frac{\lambda'_s}{\sum_{s \in S^*} \lambda'_s}\Big) M(p'_s) \geq \sum_{s \in S^*}\Big(\frac{\lambda'_s}{\sum_{s \in S^*} \lambda'_s}\Big) w_s \geq  \sum_{s \in S^*}\Big(\frac{\lambda'_s}{\sum_{s \in S^*} \lambda'_s}\Big) m(p'_s) \geq
m(p^*).\]

It is routine  to verify that the new contract
\[((\lambda'_s, p'_s,w'_s,a'_s)_{s \in S'\setminus S^*},(\lambda^*,p^*,a^*,w^*)) \]
is feasible and, therefore, also optimal. \end{proof}

\begin{lemma}
    \label{lem:never back-load incentive}
   Let $(\lambda'_s, p'_s,w'_s,a'_s)_{s \in S'}$ be a solution to the maximization program $T(V^*)(p,w)$. There exists another solution $(\lambda_s, p_s,w_s,a_s)_{s \in S}$ such that
   \[(1-\delta)u(a_s,p_s)+\delta w_s= m(p_s), \]
  for all $s$ such that $\lambda_s>0$ and $a_s=a^*$.
   \end{lemma}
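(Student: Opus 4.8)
The statement (Lemma~\ref{lem:never back-load incentive}) says that at the optimum, whenever $a^*$ is recommended, the agent's incentive constraint binds. The plan is to start from an arbitrary optimal policy $(\lambda'_s,p'_s,w'_s,a'_s)_{s\in S'}$ and, using Lemma~\ref{lem:effort at one signal}, assume it already has the property that $a^*$ is recommended at (at most) one signal, call it $s^*$, with $\lambda_{s^*}>0$. If $(1-\delta)u(a^*,p_{s^*})+\delta w_{s^*}=m(p_{s^*})$ we are done, so suppose the inequality is strict: $(1-\delta)u(a^*,p_{s^*})+\delta w_{s^*}>m(p_{s^*})$, i.e. the agent is left a strict rent at $s^*$. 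The idea is to shuffle promised utility away from $s^*$ onto some other signal $s'\neq s^*$ (necessarily with $a_{s'}\neq a^*$, hence $V^*(p_{s'},w_{s'})=0$ by Lemma~\ref{lem:shirk at boundary}), decreasing $w_{s^*}$ and increasing $w_{s'}$ while keeping the $\lambda$'s, $p$'s and $a$'s fixed. Because $V^*$ is decreasing in $w$, lowering $w_{s^*}$ weakly raises $V^*(p_{s^*},w_{s^*})$; because $V^*(p_{s'},\cdot)=0$ on the relevant range and stays $0$, raising $w_{s'}$ costs nothing. So the perturbed policy does at least as well, and after it the incentive constraint at $s^*$ binds.

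\textbf{Steps, in order.} First, invoke Lemma~\ref{lem:effort at one signal} to reduce to a solution with a single signal $s^*$ at which $a^*$ is played. Second, handle the degenerate case: if every $s$ with $\lambda_s>0$ has $a_s=a^*$, then $\lambda_{s^*}=1$, the promise-keeping constraint forces $(1-\delta)u(a^*,p_{s^*})+\delta w_{s^*}\le w$, and one checks directly (using $w\ge m(p)$ and the martingale/convexity bound) that in fact we may take $w_{s^*}=\bold{w}(p_{s^*})$; more cleanly, if there is genuinely no other signal, then $p_{s^*}=p$ and $w=$ that expression, and we just replace $w_{s^*}$ by $\bold{w}(p)$, which is still feasible since it only lowers $w$ below what was promised — so the real content is when a second signal exists. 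Third, pick $s'\neq s^*$ with $\lambda_{s'}>0$; by Lemma~\ref{lem:shirk at boundary}, $a_{s'}\neq a^*$ and $V^*(p_{s'},w_{s'})=0$. Fourth, define the perturbation: decrease $w_{s^*}$ to $\widetilde w_{s^*}:=\bold{w}(p_{s^*})$ (the level making $(1-\delta)u(a^*,p_{s^*})+\delta \widetilde w_{s^*}=m(p_{s^*})$) and increase $w_{s'}$ to $\widetilde w_{s'}:=w_{s'}+\tfrac{\lambda_{s^*}}{\lambda_{s'}}(w_{s^*}-\widetilde w_{s^*})$, leaving all other coordinates unchanged; this preserves the promise-keeping equality $\sum_s\lambda_s[(1-\delta)u(a_s,p_s)+\delta w_s]$ and preserves the splitting constraint (the $p$'s are untouched). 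Fifth, check feasibility of the new tuple: the incentive constraint at $s^*$ now holds with equality; at $s'$ one needs $\widetilde w_{s'}\le M(p_{s'})$, which is where the main work lies. Sixth, compare values: $(1-\delta)v(a^*,p_{s^*})+\delta V^*(p_{s^*},\widetilde w_{s^*})\ge (1-\delta)v(a^*,p_{s^*})+\delta V^*(p_{s^*},w_{s^*})$ since $V^*$ decreasing in $w$ and $\widetilde w_{s^*}\le w_{s^*}$, while the $s'$ term stays $\delta\cdot 0=0$ provided $V^*(p_{s'},\widetilde w_{s'})=0$ too — which follows from Lemma~\ref{lem:shirk at boundary} applied to the new (optimal, once we know its value is $\ge$ the old) solution, or more directly from $V^*(p_{s'},\cdot)$ being $0$ and nonnegative on $[m(p_{s'}),M(p_{s'})]$ once $V^*(p_{s'},m(p_{s'}))$-type reasoning is in place. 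Hence the new policy is feasible with value $\ge$ the old, so it is also optimal and has the claimed property.

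\textbf{Main obstacle.} The delicate point is Step~5: ensuring the augmented $\widetilde w_{s'}$ does not exceed $M(p_{s'})$, so the perturbed tuple stays in $\mathcal W$. The amount of utility we need to offload from $s^*$ is $\lambda_{s^*}(w_{s^*}-\bold{w}(p_{s^*}))$, and we must verify there is enough ``room'' below $M(\cdot)$ among the non-$a^*$ signals to absorb it. The clean way is to not dump everything on a single $s'$ but to spread the excess across all signals $s$ with $a_s\neq a^*$ (or, if necessary, to also adjust the $p_s$'s and $\lambda_s$'s among those signals, which is harmless since the principal gets $0$ on them by Lemma~\ref{lem:shirk at boundary}); the aggregate room $\sum_{a_s\neq a^*}\lambda_s(M(p_s)-w_s)$ is enough because the overall average $\sum_s\lambda_s w_s$ after the perturbation is unchanged and the target levels all lie in $[m(p_s),M(p_s)]$ by the same convexity-of-$m$, linearity-of-$M$ argument used in Lemma~\ref{lem:effort at one signal}. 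I expect writing this ``enough room'' bookkeeping carefully — rather than any conceptual difficulty — to be the crux of the proof.
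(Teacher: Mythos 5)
Your proposal is correct and follows essentially the same route as the paper's proof: reduce to a single $a^*$-signal via Lemma~\ref{lem:effort at one signal}, then shift the excess promised utility from $s^*$ onto a non-$a^*$ signal, which is weakly improving because $V^*$ is decreasing in $w$ and the principal's continuation value at non-$a^*$ signals is zero (Lemma~\ref{lem:shirk at boundary}). The feasibility point you isolate as the main obstacle (keeping $\widetilde{w}_{s'} \leq M(p_{s'})$) is one the paper's own proof simply asserts without verification, so your treatment is, if anything, the more careful one.
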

\begin{proof}
Assume that there exists $s^*$ such that $\lambda'_{s^*}>0$, $a_{s^*}=a^*$ and
\[(1-\delta)u(a'_{s^*},p'_{s^*})+\delta (w'_{s^*}-\varepsilon)\geq  m(p'_{s^*})\]
for some $\varepsilon>0$.
From Lemma \ref{lem:effort at one signal}, we may assume that there is a single such $s^*$. Consider the new tuple $(\lambda_s, p_s,w_s,a_s)_{s \in S}$, where $w_{s^*}=w'_{s^*} -\varepsilon$, $w_{\tilde{s}} = w'_{\tilde{s}} +\frac{\lambda_{s^*}}{\lambda_{\tilde{s}}}\varepsilon$ for some $\tilde{s}\neq s^*$ such that $\lambda_{\tilde{s}}>0$, $w_s=w'_s$ for all $s \in S \setminus\{s^*,\tilde{s}\}$, and $(\lambda_s,p_s,a_s)=(\lambda'_s,p'_s,a'_s)$ for all $s$. This new contract is  feasible and increases the principal's payoff.
\end{proof}

\subsection{Value functions }\label{app:value-function}
This section characterizes the value function $V_q$ induced by the policy $\tau_q$. As explained in the text, it is enough to characterize $V_{\overline{q}^1}$. We first start with the definition of important subsets of $[0,1]$.\medskip 

\subsubsection{Construction of the sets $Q^k$}

Let $Q^0: =[0,1]$. We define inductively the set $Q^k \subseteq [0,1]$, $k \geq 0$.   We write $\underline{q}^k$ (resp., $\overline{q}^k$) for $\inf Q^k$ (resp., $\sup Q^k$).  For any $ k \geq 0$, define the function $U^k: [\underline{q}^k,1] \rightarrow \mathbb{R}$:
\begin{equation*}
U^k(q) := \frac{1-q}{1-\underline{q}^k}m(\underline{q}^k) + \frac{q-\underline{q}^k}{1-\underline{q}^k}m(1),
\end{equation*}
with the convention that $U^k \equiv m(1)$ if $\underline{q}^k=1$.
Note that $U^0(q)=M(q)$ and $U^k(q) \geq m(q)$ for all $k$. We define $Q^{k+1}$ as follows:
\begin{eqnarray*}
Q^{k+1}& = & \{q \in Q^k: (1-\delta)u(a^*,q)+ \delta U^k(q) \geq m(q) \}.
\end{eqnarray*}

For a graphical illustration, see Figure \ref{fig:threshold}.

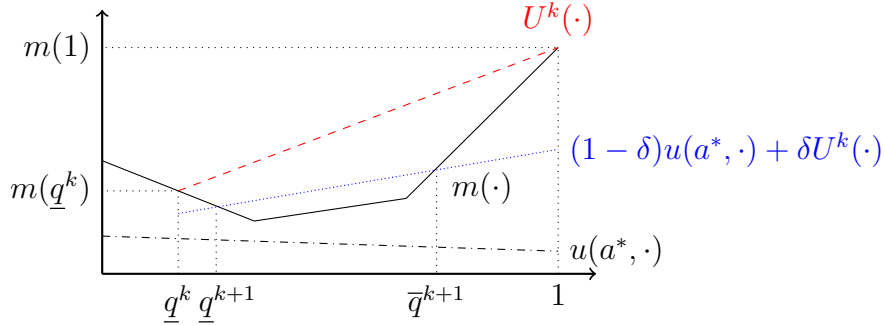
\begin{figure}[h]
\begin{center}
\begin{tikzpicture}
\draw[->,thick] (0,0)--(0,3.5);
\draw[->,thick] (0,0) -- (6.5,0);
\draw (0,1.5)--(2,0.7)--(4,1)--(6,3);
\node[below] at (5,1.5) {$m(\cdot)$};
\draw[dashed, red] (1,1.1)--(6,3) node[above] {$U^{k}(\cdot)$};
\draw[dashdotted] (0,0.5)--(6,0.3) node[right] {$u(a^*,\cdot)$};
\node[below] at (1,0) {$\underline{q}^k$};
\node[left] at (0,1.1) {$m(\underline{q}^k)$};
\draw[dotted] (1,0)--(1,1.1)--(0,1.1);
\node[below] at (6,0) {$1$};
\node[left] at (0,3) {$m(1)$};
\draw[dotted] (6,0)--(6,3)--(0,3);
\draw[densely dotted, blue] (1,0.8)--(6,1.65) node[right] {$(1-\delta) u(a^*,\cdot) + \delta U^k(\cdot)$};
\draw[dotted] (1.5,0.8)--(1.5,0) node[below] at (1.65,0) {$\underline{q}^{k+1}$};
\draw[dotted] (4.4,1.3)--(4.4,0) node[below] {$\overline{q}^{k+1}$};
\end{tikzpicture}
\end{center}
\caption{Construction of the thresholds}\label{fig:threshold}
\end{figure}

Few observations are worth making. First, we have that $ P \subseteq Q^k$ for all $k$.
Second, we have a decreasing sequence, i.e., $Q^{k+1} \subseteq Q^k$ for all $k$. Third, if $Q^k$ and $P$ are non-empty, then they are closed intervals. Fourth, the limit  $Q^{\infty}= \lim_{k \rightarrow \infty}  Q^k= \bigcap_{k} Q^k$ exists and includes $P$.  Moreover, if $P \neq \emptyset$, then $\underline{q}^{\infty}=\underline{p}$, where $\underline{p}:=\inf P$. If $P= \emptyset$, then $Q^{\infty} = \emptyset$. Consequently, there exists $k^*< \infty$ such that $\emptyset= Q^{k^*+1} \subset Q^{k^*} \neq \emptyset$.\medskip

The first to the third observations are readily proved, so we concentrate on the proof of the fourth observation. The limit exists as we have a decreasing sequence of  sets. \medskip

We  prove that if $P = \emptyset$, then $Q^{\infty}=\emptyset$. So, assume that $P=\emptyset$. We first argue that it cannot be that $Q^k=Q^{k-1} \neq \emptyset$ for some $k \geq 0$. To the contrary, assume that $Q^k=Q^{k-1} \neq \emptyset$ for some $k \geq 0$, hence $Q^{k'} =Q^{k-1}$ for all $k' \geq k$. From the convexity and continuity of $m$ and the linearity of $u$, $Q^{k-1}$ is the closed interval $[\underline{q}^{k-1}, \overline{q}^{k-1}]$, with the two boundary points solution to
\[(1-\delta)u(a^*,q)+ \delta U^{k-2}(q) = m(q). \]
Therefore, if $(\underline{q}^{k},\overline{q}^k)= (\underline{q}^{k-1}, \overline{q}^{k-1})$, we have that:
\begin{eqnarray*}
m(\underline{q}^{k-1}) & = & (1-\delta)u(a^*,\underline{q}^{k-1})+\delta m(\underline{q}^{k-1}),\\
m(\overline{q}^{k-1})& = & (1-\delta)u(a^*,\overline{q}^{k-1})+ \delta \Big[\frac{1-\overline{q}
^{k-1}}{1-\underline{q}^{k-1}}m(\underline{q}^{k-1}) + \frac{\overline{q}^{k-1}-\underline{q}^{k-1}}{1-\underline{q}^{k-1}}m(1)\Big],\\
 & \leq & 
(1-\delta)u(a^*,\overline{q}^{k-1})+\delta m(\overline{q}^{k-1}).
\end{eqnarray*}
This implies that $u(a^*,\underline{q}^{k-1})= m(\underline{q}^{k-1})$ and $u(a^*,\overline{q}^{k-1})=m(\overline{q}^{k-1})$ and, therefore,
$\emptyset \neq Q^{k-1} \subseteq P$, a contradiction. 

We thus  have an infinite sequence of strictly decreasing non-empty closed intervals. Let $\varepsilon:=\min_{p \in [0,1]} m(p)-u(a^*,p)$. Since $P =\emptyset$, we have that $\varepsilon>0$. For all $p \in Q^{\infty}$, for all $k$, 
\begin{eqnarray*}
m(p) &\leq& (1-\delta) u(a^*,p)+ \delta U^k(p),\\
&\leq & (1-\delta) (m(p) - \varepsilon) + \delta U^k(p).\\
\end{eqnarray*}
Assume that $Q^{\infty}$ is non-empty and let $\underline{q}^{\infty}$ its greatest lower bound. Since $\underline{q}^{\infty} \in Q^k$ for all $k$, we have that $U^k(\underline{q}^{\infty}) \geq m(\underline{q}^{\infty}) + \varepsilon (1-\delta)/\delta$ for all $k$.  
Since $\lim_{k \rightarrow \infty} U^k(\underline{q}^{\infty}) = m(\underline{q}^{\infty})$, we have that $m(\underline{q}^{\infty}) \geq m(\underline{q}^{\infty}) + \varepsilon (1-\delta)/\delta$, a contradiction.\medskip 

We now prove that if $P \neq \emptyset$, then $\underline{q}^{\infty} = \underline{p}$. From above, we have that if $Q^k=Q^{k-1} \neq \emptyset$ for some $k \geq 0$, hence $Q^{k'} =Q^{k-1}$ for all $k' \geq k$, then $P=Q^k$ since $P \subseteq Q^k$. If we have an infinite sequence of strictly decreasing sets, for all $q \in Q^{\infty}$, 
\begin{equation*}
(1-\delta) u(a^*,q) + \delta \Big[\frac{1-q}{1-\underline{q}^{\infty}}m(\underline{q}^{\infty})+ \frac{q-\underline{q}^{\infty}}{1-\underline{q}
^{\infty}}m(1)\Big] \geq m(q).
\end{equation*}
Taking the limit  $q \downarrow \underline{q}^{\infty}$, we obtain that $u(a^*,\underline{q}^{\infty}) =m(\underline{q}^{\infty})$, i.e., $\underline{q}^{\infty} \in P$.  Hence, $\underline{q}^{\infty}=\underline{p}$.

\subsubsection{Value functions} We first derive $V_{\overline{q}^1}$for all $(p,w) \in \mathcal{W} \setminus \mathcal{W}^2_{\overline{q}^1}$. 

To start with, $V_{\overline{q}^1}(1,m(1)) = 0$ since $a^*$ is not optimal at $p=1$. Similarly, $V_{\overline{q}^1}(0,m(0)) =0$ if $a^*$ is not optimal at $p=0$, while $V_{\overline{q}^1}(0,m(0)=v(a^*,0)$ if $a^*$ is optimal at $p=0$.  Also, $V_{\overline{q}^1}(\q,m(\q)) = (1-\delta) v(a^*,\q)$ if $\q>0$; while  $V_{\overline{q}^1}(0,m(0))=v(a^*,0)$ if $\q=0$, since $a^*$ is then optimal at $p=0$.

 With the function $V_{\overline{q}^1}$ defined at these three points, it is then defined at all points $(p,w)$ in $\mathcal{W}^1_{\overline{q}^1} \cup \mathcal{W}^4_{\overline{q}^1}$. In particular, it is easy to show that 
\[V_{\overline{q}^1}(\q,w)=\frac{M(\q)-w}{M(\q)-m(\q)}(1-\delta) v(a^*,\q) =  \frac{M(\q)-w}{M(\q)-u(a^*,\q)} v(a^*,\q),\]
for all $w \in [m(\q),M(\q)]$.

At all points $(p,w) \in \mathcal{W}^3_{\overline{q}^1}$, 
\[V_{\overline{q}^1}(p,w)=\frac{1-p}{1-\overline{q}^1} V_{\overline{q}^1}(\overline{q}^1,m(\overline{q}^1)). \]
Therefore, $V_{\overline{q}^1}$ is well-defined at all $(p,w) \in \mathcal{W} \setminus \mathcal{W}^2_{\overline{q}^1}$.
\medskip

At all points $(p,w) \in \mathcal{W}^2_{\overline{q}^1}$,  $V_{\overline{q}^1}(p,w)$ is defined via the recursive equation: 
\[V_{\overline{q}^1}(p,w)=\lambda(p,w)[(1-\delta)v(a^*,\varphi(p,w))+ \delta V_{\overline{q}^1}(\varphi(p,w),\bold{w}(\varphi(p,w))]=\lambda(p,w)V_{\overline{q}^1}(\varphi(p,w),m(\varphi(p,w))).\] 

Since $V_{\overline{q}^1}(p,w)= \lambda(p,w) V_{\overline{q}^1}(\varphi(p,w),m(\varphi(p,w))$, the value function is well-defined at all $(p,w)$ if it is well-defined at all $(p,m(p))$, which we now prove.\medskip

By construction of the sets $Q^k$, observe that if $p \in Q^k \setminus Q^{k+1}$, then $\bold{w}(p) \in (U^k(p),U^{k+1}(p)]$ and, therefore, $\varphi(p,\bold{w}(p)) \in [\underline{q}^{k-1},\underline{q}^k) \subset Q^{k-1}\setminus Q^{k}$. Moreover, $\varphi(\overline{q}^{k},\bold{w}(\overline{q}^k))=\underline{q}^k$. We now use these observations to complete the derivation of $V_{\overline{q}^1}$. \medskip

For all $p \in Q^1\setminus Q^2$, we have that $\bold{w}(p) \in Q^0 \setminus Q^1$, so that $(p,\bold{w}(p)) \in \mathcal{W}^4_{\overline{q}^1}$.  Since 
\[ V_{\overline{q}^1}(p,m(p))  =  (1-\delta)v(a^*,p) + \delta V_{\overline{q}^1}(p,\bold{w}(p)),\]
$V_{\overline{q}^1}(p,m(p))$ is well-defined for all $p \in Q^1 \setminus Q^2$. By induction, assume that it is well-defined for all $p \in \bigcup_{\ell<k}Q^{\ell}\setminus Q^{\ell+1}$. We argue that it is well-defined for all $p \in Q^{k}\setminus Q^{k+1}$. Fix any $p \in Q^k \setminus Q^{k+1}$. From our initial observation, $\varphi(p,\bold{w}(p)) \in [\underline{q}^{k-1},\underline{q}^k)$ and, therefore, $V_{\overline{q}^1}(p,m(p))$ is well-defined since
\begin{eqnarray*}
V_{\overline{q}^1}(p,m(p)) & = & (1-\delta)v(a^*,p) + \delta V_{\overline{q}^1}(p,\bold{w}(p)) \\ 
& = & (1-\delta)v(a^*,p)+ \lambda(p,\bold{w}(p)) \underbrace{V_{\overline{q}^1}(\varphi(p,\bold{w}(p)),m(\varphi(p,\bold{w}(p))))}_{\text{defined by the induction step}}.
\end{eqnarray*}

Therefore, $V_{\overline{q}^1}(p,m(p))$ is well-defined for all $p \in \bigcup_{\ell}Q^{\ell}\setminus Q^{\ell+1} = Q^1 \setminus Q^{\infty}$. It remains to argue that it is well-defined for all $p \in Q^{\infty}$. \medskip 

 From the definition of $Q^{\infty}$, we have that  
$\bold{w}(p) \leq \frac{1-p}{1-\underline{q}^{\infty}} m(\underline{q}^{\infty}) + \frac{p-\underline{q}^{\infty}}{1-\underline{q}^{\infty}}m(1)$ and, therefore, $\varphi(p,\bold{w}(p)) \in Q^{\infty}$. In other words, if $p \in Q^{\infty}$, then $\varphi(p,\bold{w}(p)) \in Q^{\infty}$, so that the restriction of $V_{\overline{q}^1}(\cdot,m(\cdot))$ to $Q^{\infty}$ is entirely defined by its value on $Q^{\infty}$ via the contraction: 
\begin{eqnarray*}
V_{\overline{q}^1}(p,m(p)) & = & (1-\delta) v(a^*,p)+ \delta \lambda(p,\bold{w}(p))V_{\overline{q}^1}(\varphi(p,\bold{w}(p)),m(\varphi(p,\bold{w}(p))). 
\end{eqnarray*}
The unique solution to this fixed point problem is given by:
\[V_{\overline{q}^1}(p,m(p))= v(a^*,p)-\frac{m(p)-u(a^*,p)}{m(1)-u(a^*,1)}v(a^*,1),\]
for all $p \in Q^{\infty}$. 
To see this,  with a slight abuse of notation, write $(\lambda,\varphi)$ for $(\lambda(p,w),\varphi(p,\bold{w}(p)))$, and note that:
\begin{flalign*}
& (1-\delta)v(a^*,p) + \delta \lambda \left[v(a^*,\varphi)-\frac{m(\varphi)-u(a^*,\varphi)}{m(1)-u(a^*,1)}v(a^*,1)\right]= \\
&(1-\delta)v(a^*,p) + \delta\left[v(a^*,p)- (1-\lambda) v(a^*,1)\right] \\
 &- \frac{m(p)-(1-\lambda)m(1)-u(a^*,p)(1-\delta)}{m(1)-u(a^*,1)}v(a^*,1) + \delta \frac{u(a^*,p) - (1-\lambda)u(a^*,1)}{m(1)-u(a^*,1)}v(a^*,1)=\\
& v(a^*,p)-\frac{m(p)-u(a^*,p)}{m(1)-u(a^*,1)}v(a^*,1), 
\end{flalign*} 
where we use the identities $\lambda \varphi + (1-\lambda)1=p$, $\lambda m(\varphi) + (1-\lambda)m(1)=\bold{w}(p)$, and $\delta \bold{w}(p)= m(p)-(1-\delta)u(a^*,p)$. 

This completes the characterization of $V_{\overline{q}^1}$. Note that $V_{\overline{q}^1}$ and, therefore, all value functions $V_q$, are continuous functions.

\subsubsection{Value functions: another representation}  We now present another construction of $V_q$. For any $q \in [\underline{q}^1,\overline{q}^1]$, define the function $\overline{m}_{q}: [0,1] \rightarrow \mathbb{R}$ as
\begin{eqnarray*}
\begin{cases}
\left(1-\frac{p}{\underline{q}^1}\right)m(0) + \frac{p}{\underline{q}^1}m(\underline{q}^1)  & \text{\;if\;} p \in [0,\underline{q}^1] ,\\
m(p) & \text{\;if\;} p \in (\underline{q}^1,q] ,\\
 \frac{1-p}{1-q}m(q) + \frac{p-q}{1-q}m(1) & \text{\;if\;} p \in (q,1].
\end{cases}
\end{eqnarray*}

 Note that $\overline{m}_{q}$ is convex, $\overline{m}_{q}(p) \geq m(p)$ for all $p \in [0,1]$, $\overline{m}_{q}(0)=m(0)$ and
 $\overline{m}_{q}(1)=m(1)$.  For a graphical illustration, see Figure \ref{fig:m-bar}. 
 
 \begin{figure}[h]
\centering
\begin{tikzpicture}
\draw[->,thick] (0,0)--(0,3.5);
\draw[->,thick] (0,0) -- (6.5,0);
\draw (0,1.5)--(0.5,1)--(2,0.7)--(4,1)--(6,3);
\node[below] at (5,1.5) {$m(\cdot)$};
\draw[dashed, red] (0,1.5)--(6,3) node[above] {$M(\cdot)$};

\node[below] at (1,0) {$\underline{q}^1$};
\draw[dotted] (0.9,0)--(0.9,0.9);
\node[below] at (6,0) {$1$};
\node[left] at (0,3) {$m(1)$};
\draw[dotted] (6,0)--(6,3)--(0,3);

\draw[dotted] (4.4,1.3)--(4.4,0) node[below] {$\overline{q}^{1}$};
\draw[dotted] (3.5,0.95)--(3.5,0) node[below] {$q$};
\draw[thick, cyan](0,1.5)--(0.9,0.93)--(2,0.7)--(3.5,0.93)--(6,3) node[above] at (3.8,1.4) {$\overline{m}_{q}(\cdot)$} ;
\end{tikzpicture}
\caption{The function $\overline{m}_{q}$}\label{fig:m-bar}
\end{figure}

It is straightforward to check that we have the following formula:
\begin{eqnarray}\label{eq:value-function-def}
V_q(p,w) & = & \overline{\lambda}(p,w)V_q(\overline{\varphi}(p,w),\overline{m}_{q}(\overline{\varphi}(p,w)),
\end{eqnarray}
where the functions $\la$ and $\ph$ are defined as in the main text, but with $\overline{m}_{q}$  instead of $m$. See Equation (\ref{eq1}).

\subsection{Theorem \ref{theo2:opti}}\label{app:theo2-opti}
To prove Theorem \ref{theo2:opti}, we prove the following proposition and invoke Theorem \ref{theo1}.

\begin{proposition}\label{V-concave} Let $V_{q^*}$ be the value function induced by the policy $\tau^*$, with 
\[q^*=\sup \left\{p \in Q^1: V_{\overline{q}^1}(p,m(p)) \geq V_{\overline{q}^1}(p,w) \text{\;for all\;} w \right\}.\] Then, 
$V_{q^*}$ is concave in $(p,w)$, decreasing in $w$, and satisfies: 
\[V_{q^*}(p,m(p)) \geq (1-\delta)v(a^*,p) + \delta V_{q^*}(p^*,\bold{w}(p)), \]
for all $p \in Q^1$. 
\end{proposition}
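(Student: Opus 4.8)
The plan is to reduce the two-dimensional claims to a one-dimensional one via the representation~(\ref{eq:value-function-def}). Write $g_q(p):=V_q\big(p,\overline{m}_{q}(p)\big)$ for the restriction of $V_q$ to the graph of the convex function $\overline{m}_{q}$. On the region $\{(p,w):m(p)\le w\le\overline{m}_{q}(p)\}$ the policy $\tau_q$ does not depend on $w$ and yields $V_q(p,w)=g_q(p)$, while on $\{(p,w):\overline{m}_{q}(p)\le w\le M(p)\}$ we have $V_q(p,w)=\overline{\lambda}(p,w)\,g_q\big(\overline{\varphi}(p,w)\big)$, where $(p,w)$ is the convex combination of $\big(\overline{\varphi}(p,w),\overline{m}_{q}(\overline{\varphi}(p,w))\big)$ and $(1,m(1))$ with weights $\overline{\lambda}(p,w),\,1-\overline{\lambda}(p,w)$. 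So $V_q$ is entirely determined by $g_q$. I would first record that $V_q\ge 0$ (the principal can secure $0$ and $v(a^*,\cdot)>0$) and that $g_q(1)=0$, $g_q(0)\in\{0,v(a^*,0)\}$. The key reduction is a geometric lemma: if $g:[0,1]\to\mathbb{R}_+$ is concave with $g(1)=0$, the function built from $g$ by the interpolation above against the vertex $(1,m(1))$ is concave on $\mathcal{W}$ and nonincreasing in $w$. I would establish it by identifying its hypograph with $\mathrm{conv}\big\{(t,\overline{m}_{q}(t),z):t\in[0,1],\ z\le g(t)\big\}\cap(\mathcal{W}\times\mathbb{R})$: convexity of that hull uses convexity of $\overline{m}_{q}$, concavity of $g$, and the fact that $g(1)=0$ makes the downward ray over the vertex already lie in the set; monotonicity in $w$ then follows because the value function, being the value of this ``moment problem'', is concave and equals $g(p)$ at the smallest admissible $w$, hence is nonincreasing there by Jensen.

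\textbf{Concavity of $g_{q^*}$.} Given the lemma, it remains to show $g_{q^*}$ is concave. Using Section~\ref{app:value-function}, $g_{q^*}$ is affine on $[0,\underline{q}^1]$ and on $[q^*,1]$ (with $g_{q^*}(q^*)=V_{\overline{q}^1}(q^*,m(q^*))$, since $\tau_{q^*}$ and $\tau_{\overline{q}^1}$ induce the same value on $\mathcal{W}^2_{q^*}$), and on $[\underline{q}^1,q^*]$ it solves $g_{q^*}(p)=(1-\delta)v(a^*,p)+\delta\,\overline{\lambda}(p,\bold{w}(p))\,g_{q^*}\big(\varphi(p,\bold{w}(p))\big)$, with the explicit value $g_{q^*}(p)=v(a^*,p)-\frac{m(p)-u(a^*,p)}{m(1)-u(a^*,1)}v(a^*,1)$ on $Q^{\infty}$. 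I would prove $g_{q^*}$ concave on $[\underline{q}^1,q^*]$ by downward induction on the layers $Q^k\setminus Q^{k+1}$: on $Q^{\infty}$ the explicit formula is concave (an affine function minus a nonnegative multiple of the convex $m$); for the inductive step, $p\mapsto\varphi(p,\bold{w}(p))$ maps a layer onto the previous one and, composed with the concave $g_{q^*}$ already built there and added to the affine $(1-\delta)v(a^*,\cdot)$, yields a concave function, and one checks the one-sided slopes decrease across each threshold $\underline{q}^k$. The decisive point is concavity at the junction $p=q^*$: the right-derivative of $g_{q^*}$ there is $-g_{q^*}(q^*)/(1-q^*)=-V_{\overline{q}^1}(q^*,m(q^*))/(1-q^*)$, and one needs the left-derivative (from the recursion) to be no smaller. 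This is exactly where the definition of $q^*$ enters: I would show that ``$V_{\overline{q}^1}(p,m(p))\ge V_{\overline{q}^1}(p,w)$ for all $w$'' is equivalent to this slope comparison at $p$, so $q^*=\sup$ of those $p$ is the largest $q$ for which $g_q$ is concave, and continuity delivers the inequality at $q^*$ itself. A short separate estimate gives $g_{q^*}\ge g_{\overline{q}^1}$ pointwise, so $V_{q^*}$ is the concavification of $V_{\overline{q}^1}$, as announced.

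\textbf{The displayed inequality and conclusion.} Since $Q^1=[\underline{q}^1,\overline{q}^1]$ and $\underline{q}^1\le q^*\le\overline{q}^1$, two cases remain. For $p\in[\underline{q}^1,q^*]$ the inequality is an equality: $(p,m(p))\in\mathcal{W}^2_{q^*}$ with $\varphi(p,m(p))=p$, so by construction $\tau_{q^*}$ recommends $a^*$ and promises $\bold{w}(p)$, whence $V_{q^*}(p,m(p))=(1-\delta)v(a^*,p)+\delta V_{q^*}(p,\bold{w}(p))$. For $p\in(q^*,\overline{q}^1]$ — the substantive case — $\tau_{q^*}$ at $(p,m(p))$ splits into $q^*$ and $1$, so $V_{q^*}(p,m(p))=g_{q^*}(p)$; writing the right-hand side with $V_{q^*}(p,\bold{w}(p))=\overline{\lambda}(p,\bold{w}(p))\,g_{q^*}\big(\overline{\varphi}(p,\bold{w}(p))\big)$ when $(p,\bold{w}(p))$ lies above $\overline{m}_{q^*}$ (and $=g_{q^*}(p)$ otherwise), the inequality follows from concavity of $g_{q^*}$ — applied along the chord through $\overline{\varphi}(p,\bold{w}(p))$, $q^*$ and $1$ — together with the identities $\delta\bold{w}(p)=m(p)-(1-\delta)u(a^*,p)$ and $\overline{\lambda}\,\overline{\varphi}+(1-\overline{\lambda})=p$. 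Feeding $\tilde V=V_{q^*}$ into Theorem~\ref{theo1} then gives Theorem~\ref{theo2:opti}.

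\textbf{Main obstacle.} I expect the hardest part to be the geometric lemma behind the reduction — showing that the $w$-independent/interpolated construction from a concave $g$ is genuinely jointly concave on $\mathcal{W}$, i.e.\ that the two-point splittings prescribed by $\tau_q$ are extremal for the underlying moment problem — and, hand in hand with it, verifying that the defining inequality for $q^*$ is \emph{exactly} the junction-concavity condition rather than merely sufficient for it. The piecewise-and-inductive proof that $g_{q^*}$ is concave on $[\underline{q}^1,q^*]$ is comparatively routine, if bookkeeping-heavy.
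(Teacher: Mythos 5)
Your reduction collapses everything onto the ``geometric lemma'': that if $g$ is concave with $g(1)=0$, $g\ge 0$, then $V(p,w)=\overline{\lambda}(p,w)\,g(\overline{\varphi}(p,w))$ is jointly concave, because the two-point splitting into $\{\overline{\varphi}(p,w),1\}$ is extremal for the moment problem over the graph of $\overline{m}_{q}$. That lemma is false, and with it the whole reduction. Take $\overline{m}(t)=\max(1-2t,2t-1)$ (so $M\equiv 1$) and $g(t)=1-t^2$. The points $(1/4,1/2)$ and $(3/4,1/2)$ lie on the graph of $\overline{m}$, so $V(1/4,1/2)=g(1/4)=15/16$ and $V(3/4,1/2)=g(3/4)=7/16$, with average $11/16$. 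At their midpoint $(1/2,1/2)$ the line through $(1,1)$ meets the graph at $\overline{\varphi}=1/3$ with $\overline{\lambda}=3/4$, giving $V(1/2,1/2)=\tfrac34\,g(1/3)=\tfrac34\cdot\tfrac89=\tfrac23<\tfrac{11}{16}$: joint concavity fails, and indeed the splitting of $(1/2,1/2)$ into $\{1/4,3/4\}$ beats the splitting into $\{1/3,1\}$, so your hypograph identification $V=\mathrm{conv}\{\dots\}$ is wrong (the prescribed splittings are faces of the hull but not maximizers). One can also check on this example that $V(3/4,\cdot)$ is not concave in $w$. So concavity of the one-dimensional restriction $g_{q}$ does \emph{not} imply concavity of $V_{q}$; the missing ingredient is precisely concavity of $V_{\overline{q}^1}(p,\cdot)$ in $w$ for each fixed $p$ (Lemma \ref{V-concave-in-w}), which is where essentially all of the work in the paper's proof lies.

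That lemma is not a geometric consequence of convexity of $\overline{m}_q$ and concavity of $g$: the paper proves it by an induction over the layers $Q^k\setminus Q^{k+1}$ on the gradient $\mathcal{G}(p;\eta)$, whose base case uses the payoff-asymmetry assumption $\frac{v(a^*,0)}{v(a^*,1)}\ge\frac{m(0)-u(a^*,0)}{m(1)-u(a^*,1)}$ to show $\mathcal{G}\ge v(a^*,1)$ on $\mathcal{I}_1$, together with the Bellman identity expressing $V_{\overline{q}^1}(p,w(p;\eta))$ through values at the previous layer and the monotonicity of $\varphi(\cdot,\mathbf{w}(\cdot))$ from Lemma \ref{lem:existence of interval}. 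Your argument never invokes that asymmetry condition, which is a structural signal that it cannot succeed: the conclusion genuinely depends on it. Once concavity in $w$ for each $p$ is in hand, the paper obtains monotonicity in $w$ and joint concavity from your Observation-A-type scaling identity (this part of your plan, and your treatment of the junction at $q^*$ and of the displayed inequality, is broadly in line with the paper's parts (a)--(c)); but the step you defer as the ``main obstacle'' is not merely hard, it is unprovable in the generality you set it up.
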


We start with two preliminary observations. \medskip

\textsc{Observation A.} For all $q \in [\underline{q}^1,\overline{q}^1]$, we have the following identity:
\begin{eqnarray*}
V_{q}(p,w) = \frac{1-p}{1-p'}V_{q}\left(p', \frac{1-p'}{1-p}w+\frac{p'-p}{1-p}\overline{m}_{q}(1)\right).
\end{eqnarray*}
The proof is as follows. Let $w'= \frac{1-p'}{1-p}w+\frac{p'-p}{1-p}\overline{m}_{q}(1)$. 

Assume that $w' > \overline{m}_{q}(p')$. Since 
\begin{eqnarray*}
\la(p',w') 
 \begin{pmatrix} \ph(p',w') \\ \overline{m}_{q}(\ph(p',w')) \end{pmatrix} + \left(1-\la(p',w')\right)
 \begin{pmatrix} 1 \\ \overline{m}_{q}(1)\end{pmatrix}&=& \begin{pmatrix} p' \\ w' \end{pmatrix},
\end{eqnarray*}
we have

\begin{eqnarray*}
 \frac{1-p}{1-p'} \la(p',w') 
 \begin{pmatrix} \ph(p',w') \\ \overline{m}_{q}(\ph(p',w')) \end{pmatrix} + \left(1-\frac{1-p}{1-p'} \la(p',w')\right)
 \begin{pmatrix} 1 \\ \overline{m}_{q}(1) \end{pmatrix}&=& \begin{pmatrix} p \\ w \end{pmatrix}.
\end{eqnarray*}
Therefore, $\la(p,w)=  \frac{1-p}{1-p'} \la(p',w') $ and $\ph(p',w')=\ph(p,w)$ since the solution $(\la(p',w'),\ph(p',w'))$ is unique when $w' >m_{q}(p')$. The statement then follows from Equation (\ref{eq:value-function-def}).

Assume that $w' = \overline{m}_{q}(p')$. From the convexity of $\overline{m}_q$, this requires that $w = \overline{m}_q(p)$, so that $\overline{m}_{q}(p')=\frac{1-p'}{1-p}\overline{m}_{q}(p)+\frac{p'-p}{1-p}\overline{m}_{q}(1)$. The result follows from continuity as: 
\begin{eqnarray*}
V_{q}(p,\overline{m}_{q}(p))&=&\lim_{w \rightarrow \overline{m}_{q}(p)}V_{q}(p,w), \\
& = & \lim_{w \rightarrow \overline{m}_{q}(p)}\frac{1-p}{1-p'}V_{q}\left(p', \frac{1-p'}{1-p}w+\frac{p'-p}{1-p}\overline{m}_{q}(1)\right),\\
&= &  \frac{1-p}{1-p'}V_{q}\left(p', \frac{1-p'}{1-p}\overline{m}_{q}(p)+\frac{p'-p}{1-p}\overline{m}_{q}(1)\right),\\
& = & \frac{1-p}{1-p'}V_{q}\left(p', \overline{m}_{q}(p')\right).
\end{eqnarray*}

Note that this implies that 
\begin{eqnarray*}
V_{q}(p,\mathbf{w}(p)+ c)= \la(p,\mathbf{w}(p))V_{q}\left(\ph(p,\mathbf{w}(p)), \overline{m}_{q}(\ph(p,\mathbf{w}(p)))+ \frac{c}{\la(p,\mathbf{w}(p))}\right),
\end{eqnarray*}
where $c$ is a positive constant. 
\medskip

\textsc{Observation B.} The value function $V_{\overline{q}^1}(p,\cdot): [\m(p), M(p)] \rightarrow \mathbb{R}$ is concave in $w$, for each $p$. See Lemma \ref{V-concave-in-w} in section \ref{sec:concave-in-w}.

\medskip

\subsubsection{Proposition \ref{V-concave}(a)} 
We prove that  $V_{q^*}$ is decreasing in $w$. To start with, fix $p \in [0,1]$ and $(w,w') \in [\ms(p),M(p)] \times  [\ms(p),M(p)] $, with $w' >w$. 

First, assume that $p \leq q^*$. If $w = \ms(p)$, then $V_{q^*}(p,w') \leq V_{q^*}(p,w)$ by construction of $q^*$. If $w> \ms(p)$, we have that
\begin{eqnarray*}
\frac{V_{q^*}(p,w') - V_{q^*}(p,w)}{w'-w}&= &\frac{V_{\overline{q}^1}(p,w') - V_{\overline{q}^1}(p,w)}{w'-w} \\
&\leq &\frac{V_{\overline{q}^1}(p,w)-V_{\overline{q}^1}(p,\ms(p))}{w-\ms(p)}\\
&= & \frac{V_{q^*}(p,w)-V_{q^*}(p,\ms(p))}{w-\ms(p)} \leq 0,
\end{eqnarray*}
where the inequality follows from the concavity of $V_{\overline{q}^1}$ with respect to $w$, for all $w \geq m_{\q}(p)$. (Recall that $m_{q^*}(p)=\m(p)$ for all $p \leq q^*$.)
\medskip

Second, assume that $p >q^*$. We show in detail how to make use of Observation A to deduce the result. We repeatedly use similar computations later on. We have

\begin{eqnarray*}
V_{q^*}(p,w') &= &\la(p,w')V_{q^*}(\ph(p,w'),\ms(\ph(p,w'))) \\
& = & \la(p,w') \frac{1-\ph(p,w')}{1-\ph(p,w)} V_{q^*}\left(\ph(p,w),\frac{1-\ph(p,w)}{1-\ph(p,w')}\ms(\ph(p,w'))+ \left(1-\frac{1-\ph(p,w)}{1-\ph(p,w')}\right)\ms(1)\right) \\
& = & \la(p,w) V_{q^*}\left(\ph(p,w),\frac{\la(p,w')}{\la(p,w)}\ms(\ph(p,w'))+ \left(1-\frac{\la(p,w')}{\la(p,w)}\right)\ms(1)\right)\\
& = & \la(p,w) V_{q^*}\left(\ph(p,w),\ms(\ph(p,w))+\frac{w'-w}{\la(p,w)}\right), 
\end{eqnarray*}
where the first line follows from the construction of $V_{q^*}$, the second line from Observation A, the third line from the definition of the functions 
$\la$ and $\ph$ and the last line from the following computations:
\begin{eqnarray*}
\frac{\la(p,w')}{\la(p,w)}\ms(\ph(p,w'))+ \left(1-\frac{\la(p,w')}{\la(p,w)}\right)\ms(1) & = & \frac{1}{\la(p,w)}w' + \left(1-\frac{1}{\la(p,w)}\right)\ms(1)\\
& = &  \frac{1}{\la(p,w)}w' + \left(1-\frac{1}{\la(p,w)}\right)\left[\frac{w-\la(p,w)\ms(\ph(p,w))}{1-\la(p,w)}\right]\\
& = & \ms(\ph(p,w))+\frac{w'-w}{\la(p,w)}.
\end{eqnarray*}
Thus, we are able to express $V_{q^*}(p,w')$ as  $\la(p,w) V_{q^*}(\ph(p,w), \tilde{w})$, with $\tilde{w}$ the above expression. Moreover, 
$\ph(p,w) \leq q^*$ as $w \geq \ms(p)$. We can use the (already established) concavity of $V_{q^*}$ in $w$ for each $p\leq q^*$ to deduce the desired result. More precisely, we have that:
\begin{eqnarray*}
\frac{V_{q^*}(p,w') - V_{q^*}(p,w)}{w'-w}&= &\frac{\la(p,w)\left(V_{q^*}\left(\ph(p,w),\ms(\ph(p,w))+\frac{w'-w}{\la(p,w)}\right)-V_{q^*}\left(\ph(p,w),\ms(\ph(p,w))\right)  \right)}{w'-w} \\
&\leq &0,
\end{eqnarray*}
where the inequality follows from the concavity of $V_{q^*}$ in $w$ at all $p \leq q^*$.
\medskip

Lastly, since $V_{q^*}(p,w)=V_{q^*}(p,\ms(p))$ for all $w \in [m(p),\ms(p)]$, the result immediately follows for all $(w,w')$, with $w \in [m(p),\ms(p)]$.

\subsubsection{Proposition \ref{V-concave}(b)}
We prove the concavity of $V_{q^*}$ with respect to both arguments $(p,w)$. 

Let $\overline{\mathcal{W}}=\{(p,w): w \geq \overline{m}_{q^*}(p) \}$.   Let $(p,w) \in \overline{\mathcal{W}}$, $(p',w') \in \overline{\mathcal{W}}$ and $\alpha \in [0,1]$. Write $(p_{\alpha},w_{\alpha})$ for
\[\alpha\begin{pmatrix} p \\w \end{pmatrix} + (1-\alpha) \begin{pmatrix}p' \\w' \end{pmatrix}.\]  Without loss of generality, assume that $p \leq p'$. 
We have that:
\begin{flalign*}
&\alpha V_{q^*}(p,w) + (1-\alpha)V_{q^*}(p',w') \\  
&= \alpha \frac{1-p}{1-p'}V_{q^*}\Big(p',\underbrace{\frac{1-p'}{1-p} w + \frac{p'-p}{1-p}\ms(1)}_{\geq \ms(p')}\Big)+ (1-\alpha) V_{q^*}(p',w')  \\
& \leq \left(\alpha \frac{1-p}{1-p'}+(1-\alpha)\right)V_{q^*}\left(p',\frac{\alpha \frac{1-p}{1-p'}\left(\frac{1-p'}{1-p} w + \frac{p'-p}{1-p}\ms(1)\right)+(1-\alpha)w'}{\alpha \frac{1-p}{1-p'}+(1-\alpha)}\right)\\
&= \frac{1-p_{\alpha}}{1-p'} V_{q^*}\left(p',\frac{1-p'}{1-p_{\alpha}}w_{\alpha}+ \frac{p'-p_{\alpha}}{1-p_{\alpha}}\ms(1)\right)\\
 &=  V_{q^*}(p_{\alpha},w_{\alpha} ),
\end{flalign*}
where the inequality follows from the concavity of $V_{\overline{q}^1}$ with respect to $w$ for each $p$ and the property that $V_{q^*}(p,w)=V_{\overline{q}^1}(p,w)$ for all $(p,w)$ such that $w \geq \ms(p)$. Notice that we use twice Observation A. 

Finally, for all $(p,w) \in \mathcal{W}$, for all $(p',w') \in \mathcal{W}$ and for all $\alpha$, we have that:
\begin{eqnarray*}
\alpha V_{q^*}(p,w) + (1-\alpha)V_{q^*}(p',w') &= & \alpha V_{q^*}(p,\max(w,\ms(p))) + (1-\alpha)V_{q^*}(p',\max(w',\ms(p'))) \\
& \leq & V_{q^*}(p_{\alpha}, \alpha \max(w,\ms(p)) + (1-\alpha) \max(w,\ms(p')))\\
& \leq & V_{q^*}(p_{\alpha},w_{\alpha}),
\end{eqnarray*}
since $\alpha \max(w,\ms(p)) + (1-\alpha) \max(w,\ms(p')) \geq w_{\alpha}$ and the fact that $V_{q^*}$ is decreasing in $w$ for all $p$. This completes the proof of concavity. 

\subsubsection{Proposition \ref{V-concave} (c)}
We prove that $V_{q^*}(p,m(p)) \geq (1-\delta) v(a^*,p)+ \delta V_{q^*}(p,\bold{w}(p))$ for all $p \in Q^1$.

 The statement is true for all $p \leq q^*$ by definition since $V_{q^*}(p,w) =V_{\overline{q}^1}(p,w)$ for all $w$. 
 
 Assume that $p >q^*$. From Lemma \ref{lem:existence of interval}, there exists $\overline{q}$ such that  $\varphi(p,\mathbf{w}(p)) \geq \varphi(p',\mathbf{w}(p'))$ for all $p' \geq p \geq \overline{q}$. Moreover, it follows from A.6.3 and A.6.4 that $V(p,m(p)) \geq V(p,w)$ for all $w$, for all $p \leq \overline{q}$. Therefore, we must have that $q^* \geq \overline{q}$. It follows that $\varphi(p,\mathbf{w}(p)) <   \varphi(q^*,\mathbf{w}(q^*)) \leq q^*$, hence $\mathbf{w}(p) \geq \overline{m}_{q^*}(p)$. We therefore have that 
$V_{q^*}(p,\bold{w}(p))=V_{\overline{q}^1}(p,\bold{w}(p))$.

Since $V_{\overline{q}^1}(p,m(p))= (1-\delta) v(a^*,p)+ \delta V_{\overline{q}^1}(p,\bold{w}(p))$ for all $p \in Q^1$ and $V_{q^*}(p,m(p))=V_{q^*}(p,\ms(p))=V_{\overline{q}^1}(p,\ms(p))$, it is enough to prove that $V_{\overline{q}^1}(p, \ms(p)) \geq V_{\overline{q}^1}(p,m(p))$.

Clearly, there is nothing prove if $\ms(p)=m(p)$ for all $p \in Q^1$, i.e., if $q^*=\overline{q}^1$ (remember that $\m(p)=m(p)$ for all $p \in Q^1$).

\medskip

So, assume that $\ms(p) > m(p)$ for some $p \in (q^*,\overline{q}^1)$, hence $\ms(p) > m(p)$ for all $p \in (q^*, \overline{q}^1)$. We now argue that if $V_{\overline{q}^1}(p,w) > V_{\overline{q}^1}(p,m(p))$ for some $w \geq \ms(p)$, then 
\[V_{\overline{q}^1}(p',m(p')) < \frac{1-p'}{1-p}V_{\overline{q}^1}(p,w), \]
for all $p'>p$. To see this, observe that $w >m(p)$ and, accordingly, 
\[\frac{1-p'}{1-p}w+ \frac{p'-p}{1-p}m(1) - m(p')>0,\]
since $m$ is convex. Hence, 
\begin{eqnarray*}
0 & < & \frac{V_{\overline{q}^1}(p,w)-V_{\overline{q}^1}(p,m(p))}{w -m(p)} \\
& = & \frac{\frac{1-p}{1-p'}\left[V_{\overline{q}^1}\left(p',\frac{1-p'}{1-p}w+ \frac{p'-p}{1-p}m(1)\right)-V_{\overline{q}^1}\left(p',\frac{1-p'}{1-p}m(p)+ \frac{p'-p}{1-p}m(1)\right)\right]}{w-m(p)}\\
& \leq &  \frac{V_{\overline{q}^1}\left(p',\frac{1-p'}{1-p}w+ \frac{p'-p}{1-p}m(1)\right)- V_{\overline{q}^1}\left(p',m(p')\right)}{\frac{1-p'}{1-p}w+ \frac{p'-p}{1-p}m(1) - m(p')},
\end{eqnarray*}
where the equality follows Observation A and the inequality from the concavity of $V_{\overline{q}^1}$ in $w$ for each $p$. Since 
\[V_{\overline{q}^1}(p,w)=\frac{1-p}{1-p'}V_{\overline{q}^1}\left(p',\frac{1-p'}{1-p}w+ \frac{p'-p}{1-p}m(1)\right), \]
we have the desired result. 

Finally, from the definition of $q^*$,  for all $n >0$, there exist $p_n \in (q^*, \min(q^*+ \frac{1}{n}, \overline{q}^1)]$ and $w_n \geq m(p_n)$ such that $V_{\overline{q}^1}(p_n, m(p_n)) < V_{\overline{q}^1}(p_n,w_n)$. From the concavity of $V_{\overline{q}^1}$ in $w$ for all $p$, $V_{\overline{q}^1}(p_n, m(p_n)) < V_{\overline{q}^1}(p_n,\ms(p_n))$ for all $n$.

From the above argument, for all $p$, for all $n$ sufficiently large, i.e., such that $p_n <p$, we have that
\[V_{\overline{q}^1}(p,m(p)) < \frac{1-p}{1-p_n} V_{\overline{q}^1}(p_n, \ms(p_n)). \]
Taking the limit as $n \rightarrow \infty$, we obtain that 
\[V_{\overline{q}^1}(p,m(p)) < \frac{1-p}{1-q^*}V_{\overline{q}^1}(q^*,\ms(q^*))=V_{\overline{q}^1}(p,\ms(p)),\]
which completes the proof.

\subsection{Concavity of $V_{\overline{q}^1}$ with respect to $w$ for each $p$.}\label{sec:concave-in-w}
\begin{lemma}\label{V-concave-in-w}
The value function $V_{\overline{q}^1}(p,\cdot): [\m(p), M(p)] \rightarrow \mathbb{R}$ is concave in $w$, for each $p$. 
\end{lemma}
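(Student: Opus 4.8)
The plan is to fix $p$ and reduce concavity of $w\mapsto V_{\overline{q}^1}(p,w)$ on $[\m(p),M(p)]$ to a one-variable monotonicity statement, which is then proved by induction along the sets $Q^k$ of Section~\ref{app:value-function}.

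First I would use the representation (\ref{eq:value-function-def}) with $q=\overline{q}^1$: writing $g(\phi):=V_{\overline{q}^1}(\phi,\m(\phi))$ (so $g(1)=0$) and letting $\la,\ph$ be the functions of (\ref{eq1}) with $\m=\overline{m}_{\overline{q}^1}$ in place of $m$, one has $V_{\overline{q}^1}(p,w)=\la(p,w)\,g(\ph(p,w))$. For fixed $p$, $w\mapsto\ph(p,w)$ is a continuous, non-increasing map of $[\m(p),M(p)]$ onto $[0,p]$ (with $\ph(p,\m(p))=p$ and $\ph(p,M(p))=0$, the latter because $M\ge\m$ with equality only at $0$ and $1$); its inverse relation is $w=w_p(\phi):=\frac{(1-p)\m(\phi)+(p-\phi)m(1)}{1-\phi}$ with $\la(p,w_p(\phi))=\frac{1-p}{1-\phi}$, whence $V_{\overline{q}^1}(p,w_p(\phi))=\frac{1-p}{1-\phi}g(\phi)$ and $V_{\overline{q}^1}(p,\m(p))=g(p)$. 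By the criterion at the end of Section~\ref{math:prelim} (if the slope measured from the left endpoint is non-increasing, $f$ is concave), it then suffices to prove that
\[R_p(\phi):=\frac{(1-p)g(\phi)+(p-\phi)g(1)-(1-\phi)g(p)}{(1-p)\m(\phi)+(p-\phi)m(1)-(1-\phi)\m(p)}\]
is non-decreasing in $\phi$ on $\{\phi\in[0,p):w_p(\phi)>\m(p)\}$; indeed, after clearing $1-\phi>0$ and using $g(1)=0$ and $\m(1)=m(1)$, this ratio equals the slope $\frac{V_{\overline{q}^1}(p,w_p(\phi))-g(p)}{w_p(\phi)-\m(p)}$, and its denominator is $(1-\phi)$ times the gap at $p$ between the chord of $\m$ over $[\phi,1]$ and $\m(p)$, hence $\ge 0$ by convexity of $\m$ and $>0$ exactly when $w_p(\phi)>\m(p)$. (On $[m(p),\m(p)]$ the map $V_{\overline{q}^1}(p,\cdot)$ is constant, since there $(p,w)$ lies in $\mathcal{W}^1_{\overline{q}^1}$ or $\mathcal{W}^3_{\overline{q}^1}$ and the prescribed split does not depend on $w$; this disposes of the rest of the interval.)

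Next I would prove $R_p$ non-decreasing by induction on the index $k$ of the layer $Q^k\setminus Q^{k+1}$ containing $\phi$, using the recursion for $g$ from Section~\ref{app:value-function}: for $\phi\in Q^1$, $g(\phi)=(1-\delta)v(a^*,\phi)+\delta\,\la(\phi,\mathbf{w}(\phi))\,g(\ph(\phi,\mathbf{w}(\phi)))$ with $\ph(\phi,\mathbf{w}(\phi))$ lying in a strictly lower layer, together with the closed forms outside $Q^1\setminus Q^{\infty}$: $g$ is affine on $[0,\underline{q}^1]$ and on $[\overline{q}^1,1]$ (with $g(1)=0$), and on $Q^{\infty}$ it equals $v(a^*,\cdot)-\frac{m(\cdot)-u(a^*,\cdot)}{m(1)-u(a^*,1)}v(a^*,1)$, a concave function. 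These supply the base cases — where $R_p$ is constant or is monotone directly from the convexity of $\m$ — and the recursion carries the step: the chord-of-$g$ term in the numerator of $R_p$ is rewritten through $g$ evaluated one layer down and then compared with the chord-of-$\m$ term via the induction hypothesis and the convexity of $\m$. The same layer-induction would also record that $V_{\overline{q}^1}(p,\cdot)$ is non-increasing on $[\m(p),M(p)]$, a fact this comparison uses.

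The main obstacle is this inductive step. Two features make it delicate. First, each layer $Q^k\setminus Q^{k+1}$ is the union of two disjoint intervals $[\underline{q}^k,\underline{q}^{k+1})$ and $(\overline{q}^{k+1},\overline{q}^k]$, and one must track separately how $\ph(\cdot,\mathbf{w}(\cdot))$ pushes each of them into the lower layer $[\underline{q}^{k-1},\underline{q}^k)$. Second, the kinks of $m$ on $[\underline{q}^1,\overline{q}^1]$ (one per optimal action) are not aligned with the layer endpoints, so on a single layer both $g$ and $\m$ may break into several linear pieces and $R_p$ must be controlled piece by piece. Carrying out this bookkeeping while keeping the monotonicity inequality tight — in effect, that moving $\phi$ downward (disclosing more now) trades promised utility for principal value at a marginal rate that only improves — is the substantive part; the rest is the elementary algebra of chords of convex piecewise-linear functions.
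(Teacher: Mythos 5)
Your reduction coincides with the paper's: showing that the left--endpoint difference quotient $\bigl(V_{\overline{q}^1}(p,w)-V_{\overline{q}^1}(p,\m(p))\bigr)/\bigl(w-\m(p)\bigr)$ is non-increasing in $w$ is, up to the reparametrization $\phi\leftrightarrow\eta$, exactly the paper's claim that the gradient $\mathcal{G}(p;\eta):=\bigl(V_{\overline{q}^1}(p,\m(p))-V_{\overline{q}^1}(p,\m(p)+\eta[\m(1)-u(a^*,1)])\bigr)/\eta$ is increasing in $\eta$, and both arguments then induct on the layers $Q^k\setminus Q^{k+1}$. The gap is in the inductive step, and it is not the piecewise-linear bookkeeping you flag. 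Writing $(\lambda_\eta,\varphi_\eta)$ for the splitting at $(p,w(p;\eta))$, the one-layer-down rewriting turns $\mathcal{G}(p;\eta)$ into a convex combination $\theta_\eta\,v(a^*,1)+(1-\theta_\eta)\,\mathcal{G}\bigl(\ph(p,\mathbf{w}(p));\tilde\eta\bigr)$ with weight $\theta_\eta=(1-\delta)(1-\lambda_\eta)/\eta$, and this weight itself varies with $\eta$ (it is decreasing, Lemma \ref{lem:decreasing in eta}). Comparing two values $\eta'\le\eta$ therefore produces, besides the term controlled by the induction hypothesis, a cross term of the form $(\theta_{\eta'}-\theta_{\eta})\bigl[\mathcal{G}(\ph(p,\mathbf{w}(p));\tilde\eta)-v(a^*,1)\bigr]$. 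Its sign is \emph{not} determined by monotonicity of the lower-level gradient together with $V_{\overline{q}^1}$ being decreasing in $w$ --- the two invariants you propose to carry. The paper's induction hypothesis is strictly stronger: it simultaneously proves that the gradient is at least $v(a^*,1)$ on the left interval (and at most $v(a^*,1)$ on the right one), and it is precisely this comparison with $v(a^*,1)$ that signs the cross term. With only your invariants the step fails.

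Relatedly, the correct organization is not by the two components of each layer but by the three intervals $\mathcal{I}_1,\mathcal{I}_2,\mathcal{I}_3$ cut out by the quasi-concavity of $p\mapsto\frac{m(1)-\mathbf{w}(p)}{1-p}$ (Lemma \ref{lem:existence of interval}): on $\mathcal{I}_1$ the shift $\frac{\eta-(1-\delta)(1-\lambda_\eta)}{\delta}$ is positive and the bound is $\mathcal{G}\ge v(a^*,1)$; on $\mathcal{I}_3$ both reverse; on $\mathcal{I}_2$ the gradient equals $v(a^*,1)$ for small $\eta$. Your base cases are the right ones and do in fact deliver the needed bound --- the explicit computation at $\q$ yields $v(a^*,1)$ times a ratio that is $\ge 1$ exactly because of the maintained assumption $\frac{v(a^*,0)}{v(a^*,1)}\ge\frac{m(0)-u(a^*,0)}{m(1)-u(a^*,1)}$ --- but you must recognize that this quantitative bound, not mere monotonicity, is the invariant to propagate.
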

This section proves that  $V_{\overline{q}^1}$ is concave in $w$ for each $p$. To do so, we prove that 
\begin{equation*}
\begin{split}
\frac{V_{\overline{q}^1}(p,\m(p)+ \eta(\m(1)-u(a^*,1))) -  V_{\overline{q}^1}(p,\m(p))}{\eta}   \geq \\ 
 \frac{V_{\overline{q}^1}(p,\m(p)+\eta'(\m(1)-u(a^*,1)))-V_{\overline{q}^1}(p,\m(p))}{\eta'}, 
\end{split}
\end{equation*}
for all $(\eta,\eta')$ such that $\eta' \geq \eta$. (See the observations on concave functions.) We start with some preliminary results. 

\subsubsection{Preliminary Results}
We study how the function $\varphi(p,\mathbf{w}(p))$ varies with $p$.

\begin{lemma}\label{lem:existence of interval}
    There exists a non-empty interval $[\underline{q},\overline{q}]$ such that: 
    \begin{enumerate}
        \item For any $p'<p\leq \underline{q}$ or $p'>p\geq \bar{q}$, $\varphi(p,\mathbf{w}(p)) \geq \varphi(p',\mathbf{w}(p'))$,
        \item The ratio $\frac{m(1)-m(\varphi(p,\mathbf{w}(p))}{1-\varphi(p,\mathbf{w}(p))}$ is constant for all $p \in [\underline{q},\overline{q}]$.
    \end{enumerate}
\end{lemma}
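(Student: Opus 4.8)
The plan is to reduce everything to the behaviour of the single real-valued function
\[
\beta(p):=\frac{m(1)-\bold{w}(p)}{1-p}\qquad\text{on }Q^1=[\underline{q}^1,\overline{q}^1],
\]
the slope of the chord joining $(p,\bold{w}(p))$ to $(1,m(1))$. Note first that $\overline{q}^1<1$: otherwise $(1-\delta)u(a^*,1)+\delta M(1)\ge m(1)$ would give $u(a^*,1)\ge m(1)$, i.e.\ $1\in P$, which is excluded; hence $\beta$ is continuous on $Q^1$. By the very definition of $\varphi$, the three points $(p,\bold{w}(p))$, $\big(\varphi(p,\bold{w}(p)),m(\varphi(p,\bold{w}(p)))\big)$ and $(1,m(1))$ are collinear, so
\[
\beta(p)=\frac{m(1)-m(\varphi(p,\bold{w}(p)))}{1-\varphi(p,\bold{w}(p))}\qquad\text{for all }p\in Q^1 .
\]
Thus claim (2) is exactly ``$\beta$ is constant on $[\underline{q},\overline{q}]$'', and I will \emph{define} $[\underline{q},\overline{q}]$ to be the set of maximizers of $\beta$ on $Q^1$. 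The proof then has two ingredients: (i) $\beta$ is unimodal, so this set is a non-empty closed interval on which $\beta$ is non-decreasing before it and non-increasing after it; and (ii) $\varphi(\cdot,\bold{w}(\cdot))$ is a monotone function of $\beta(\cdot)$.

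For (i): since $u(a^*,\cdot)$ is affine and $m$ is convex piecewise linear, $\bold{w}(p)=\tfrac{1}{\delta}\big(m(p)-(1-\delta)u(a^*,p)\big)$ is convex and piecewise linear with the same breakpoints as $m$; label its pieces on $Q^1$, left to right, by slopes $\rho_1\le\cdots\le\rho_N$ on intervals $[y_{j-1},y_j]$. A direct computation on piece $j$ yields $\beta'(p)=\dfrac{m(1)-e_j}{(1-p)^2}$, where $e_j:=\bold{w}(y_{j-1})+\rho_j(1-y_{j-1})$ is the value at $1$ of the affine extension of the $j$-th piece. Writing $e_j$ and $e_{j+1}$ through their common breakpoint $y_j$ gives $e_{j+1}-e_j=(\rho_{j+1}-\rho_j)(1-y_j)\ge 0$, so $e_1\le\cdots\le e_N$; hence $m(1)-e_j$ is $\ge 0$ for small $j$ and $\le 0$ for large $j$, and $\beta$ is non-decreasing, then possibly constant along one piece (the one with $e_j=m(1)$), then non-increasing. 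By continuity $\beta$ is unimodal, so its argmax $[\underline{q},\overline{q}]$ is a non-empty closed sub-interval of $Q^1$ (a single linear piece of $\bold{w}$, a single breakpoint, or an endpoint of $Q^1$), with $\beta$ non-decreasing on $[\underline{q}^1,\underline{q}]$ and non-increasing on $[\overline{q},\overline{q}^1]$.

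For (ii): since $\bold{w}(p)\ge m(p)$, the subgradient inequality for $m$ at $1$ forces $\beta(p)\le\sigma_{\max}$, the largest slope of $m$; so the line $\ell_p$ through $(1,m(1))$ of slope $\beta(p)$ meets the graph of $m$ again at abscissa $\varphi(p,\bold{w}(p))\le p<1$, and $m\le\ell_p$ on $[\varphi(p,\bold{w}(p)),1]$ by convexity of $m$. If $\beta(p)\le\beta(p')$ then $\ell_{p'}\le\ell_p$ to the left of $1$; were $\varphi(p',\bold{w}(p'))<\varphi(p,\bold{w}(p))$, evaluating at $\varphi(p,\bold{w}(p))$ would give $m(\varphi(p,\bold{w}(p)))\le\ell_{p'}(\varphi(p,\bold{w}(p)))<\ell_p(\varphi(p,\bold{w}(p)))=m(\varphi(p,\bold{w}(p)))$, a contradiction; hence $\varphi(p,\bold{w}(p))\le\varphi(p',\bold{w}(p'))$. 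Combining with (i): if $p'<p\le\underline{q}$ then $\beta(p')\le\beta(p)$, so $\varphi(p',\bold{w}(p'))\le\varphi(p,\bold{w}(p))$; if $p'>p\ge\overline{q}$ then $\beta(p')\le\beta(p)$, so again $\varphi(p',\bold{w}(p'))\le\varphi(p,\bold{w}(p))$. This is claim (1), and claim (2) is the displayed identity together with the constancy of $\beta$ on $[\underline{q},\overline{q}]$ by construction. (If $Q^1$ is a single point, both claims are trivial.)

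The step I expect to be the real obstacle is making (ii) airtight at the degenerate configurations: when $\ell_p$ contains an entire linear piece of $m$ (so the ``intersection'' is a segment, not a point) or when $\beta(p)=\sigma_{\max}$. These are resolved once one fixes the convention that $\varphi(p,w)$ is the \emph{leftmost} point at which $\ell_p$ meets the graph of $m$ (consistent with $\varphi(p,m(p))=p$); with that convention the convexity comparison above applies verbatim. One should also be careful to deduce that $\beta$ is genuinely non-decreasing/non-increasing on the relevant sub-intervals — not merely that $\beta'$ has the claimed sign on each open piece — which follows from continuity of $\beta$ at the breakpoints.
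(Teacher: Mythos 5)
Your proof is correct and follows essentially the same route as the paper's: both define $[\underline{q},\overline{q}]$ as the set of maximizers of the chord slope $p \mapsto \frac{m(1)-\mathbf{w}(p)}{1-p}$ over the relevant domain, establish that this function is unimodal, and then convert monotonicity of the slope into monotonicity of $\varphi(\cdot,\mathbf{w}(\cdot))$ via convexity of $m$. The only difference is cosmetic: the paper obtains quasi-concavity of the slope in one line from the convexity of $\mathbf{w}$ (a computation on convex combinations), whereas you compute the derivative piece by piece on the linear segments; your explicit leftmost-intersection convention for degenerate configurations is a detail the paper glosses over.
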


\begin{proof}[Proof of Lemma \ref{lem:existence of interval}]
Observe that 
\[\frac{m(1)-\mathbf{w}(p)}{1-p}= \frac{m(1)-m(\varphi(p,\mathbf{w}(p))}{1-\varphi(p,\mathbf{w}(p))}.\] 
Therefore, the convexity of $m$ implies that if  $\frac{m(1)-\mathbf{w}(p)}{1-p} < \frac{m(1)-\mathbf{w}(p')}{1-p'}$, then $\varphi(p,\mathbf{w}(p)) < \varphi(p',\mathbf{w}(p'))$.

Consider the function $h:  [0,1] \rightarrow \mathbb{R}$, defined by  $h(p) = \frac{m(1)-\mathbf{w}(p)}{1-p}$. We argue that $h$ is quasi-concave. For all $(p,p')$ and $\alpha \in [0,1]$, we have that
\begin{eqnarray*}
\frac{m(1)-\mathbf{w}(\alpha p + (1-\alpha)p')}{\alpha(1-p)+ (1-\alpha)(1-p')} & \geq & \frac{\alpha (m(1)- \mathbf{w}(p))  + (1-\alpha) (m(1)- \mathbf{w}(p'))}{\alpha(1-p)+ (1-\alpha)(1-p')} \\
& = & \frac{\alpha (1-p)}{\alpha(1-p)+ (1-\alpha)(1-p')}\frac{m(1)- \mathbf{w}(p)}{1-p} +  \\
& & \indent \frac{(1-\alpha) (1-p')}{\alpha(1-p)+ (1-\alpha)(1-p')}\frac{m(1)- \mathbf{w}(p')}{1-p'}\\
& \geq & \min\left(\frac{m(1)- \mathbf{w}(p)}{1-p} ,\frac{m(1)- \mathbf{w}(p')}{1-p'}  \right),
\end{eqnarray*}
where the first inequality follows form the convexity of $\mathbf{w}$. (Note that the inequality is strict if $\mathbf{w}(\alpha p + (1-\alpha)p') < \alpha \mathbf{w}(p) + (1-\alpha) \mathbf{w}(p')$.)

It follows that if $h(p')  \geq h(p)$, then it is also true for all $p^{''} \in (p,p')$.  Since $h$ is quasi-concave and continuous, the set of maxima is a non-empty convex set $[\underline{q},\overline{q}]$, and the function is increasing for all $p< \underline{q}$ and decreasing for all $p > \overline{q}$. (Note that $m(1) - \mathbf{w}(1)= \frac{(1-\delta)(u(a^*,1)-m(1))}{\delta} <0$, hence the function is equal to $-\infty$ at $p=1$.)\end{proof}

We can make few additional observations about the interval $[\underline{q},\overline{q}]$.  Let $k^*:=\sup\{k: Q^k \neq \emptyset\}$. Since $\varphi(\overline{q}^k,\mathbf{w}(\overline{q}^k))= \underline{q}^k$, the function $h$ is decreasing for all $p \geq \overline{q}^{k^*}$. 
Similarly, since $\varphi(\underline{q}^k,\mathbf{w}(\underline{q}^k))= \underline{q}^{k-1}$, the function $h$ is increasing for all $p \leq \underline{q}^{k^*}$.  Therefore, $[\underline{q},\overline{q}] \subset Q^{k^*}$. 

If $P \neq \emptyset$, so that $k^*=\infty$, then for all $p \in P$, the function $h$ is increasing by convexity of $m$ since $\mathbf{w}(p)=m(p)$. (This is clearly true since $\varphi(p,m(p))= p$ in that region.) Therefore, $\overline{p} \leq \underline{q}$ if $P \neq \emptyset$. 

Finally, let $\tilde{p}:=\inf\{p: m(p)=u(a^1,p)\}$. By construction, $m$ is linear from $\tilde{p}$ to 1, i.e., $[\tilde{p},1]$ is the utmost right linear piece of $m$. We have that $\overline{q}< \tilde{p}$. To see this, observe that for all $p \geq \tilde{p}$, 
\[ \frac{m(1)-\mathbf{w}(p)}{1-p}= \frac{(1-\delta)(\overbrace{u(a^*,1)-u(a^1,1)}^{ < 0})}{1-p}+ \frac{(u(a^1,0)-u(a^1,1)) - (1-\delta)(u(a^*,0)-u(a^*,1))}{\delta},   \]
hence it is decreasing in $p$. (If there are multiple optimal actions at $p=1$, the argument applies to all of them and, therefore, to the one that induces the smallest $\tilde{p}$.) 
\medskip 

The second preliminary result is technical.  For any $p\in (0,1)$ and any $\eta\in \left[0,\frac{M(p)-\m(p)}{\m(1)-u(a^*,1)}\right]$, define  $w(p;\eta)$ as 
\[\m(p) + \eta \left[ {\m(1) - u({a^*},1)} \right],\]
and write $(\lambda_\eta,\varphi_\eta)$ for  $(\la (p,w(p;\eta )), \ph (p,w(p;\eta )))$. To ease notation, we do not explicitly write the dependence of $(\lambda_\eta,\varphi_\eta)$ on $p$. We have the following:

\begin{lemma}
    \label{lem:decreasing in eta}
    $\varphi_{\eta}$, $\lambda_{\eta}$, and $\frac{{1 - \lambda_\eta }}{\eta }$ are all decreasing in $\eta$.  
\end{lemma}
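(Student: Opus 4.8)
The plan is to reduce all three monotonicity claims to the convexity of $\m:=\overline{m}_{\overline{q}^1}$, via two explicit identities for $\lambda_\eta$ and $\varphi_\eta$. Throughout I use the standing assumption $1\notin P$, so that $\m(1)=m(1)>u(a^*,1)$ and hence $\eta\mapsto w(p;\eta)$ is strictly increasing. By definition of $\la$ and $\ph$, the point $(\varphi_\eta,\m(\varphi_\eta))$ is the intersection of the graph of $\m$ with the line through $(p,w(p;\eta))$ and $(1,\m(1))$; equivalently, $\varphi_\eta$ is the unique solution of
\[
\frac{\m(1)-\m(\varphi_\eta)}{1-\varphi_\eta}=\frac{\m(1)-w(p;\eta)}{1-p}=:s(\eta),
\]
and, reading the first coordinate of the splitting, $\lambda_\eta=\frac{1-p}{1-\varphi_\eta}$, i.e. $1-\lambda_\eta=\frac{p-\varphi_\eta}{1-\varphi_\eta}$. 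Now $s(\eta)$ is strictly decreasing in $\eta$, whereas $\varphi\mapsto\frac{\m(1)-\m(\varphi)}{1-\varphi}$, the slope of the chord of the convex function $\m$ from $(\varphi,\m(\varphi))$ to $(1,\m(1))$, is (weakly) increasing in $\varphi$ by convexity of $\m$. Together with uniqueness of $(\lambda_\eta,\varphi_\eta)$ for $w>\m(p)$, this forces $\varphi_\eta$ to be decreasing in $\eta$, and then $\lambda_\eta=\frac{1-p}{1-\varphi_\eta}$ is decreasing in $\eta$ as well; this settles the first two assertions.

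For the third assertion I would derive a closed form. Set $G(\varphi):=(1-p)\big(\m(\varphi)-\m(p)\big)+(p-\varphi)\big(\m(1)-\m(p)\big)$. Using $w(p;\eta)=\lambda_\eta\m(\varphi_\eta)+(1-\lambda_\eta)\m(1)$, the trivial identity $\m(p)=\lambda_\eta\m(p)+(1-\lambda_\eta)\m(p)$, and $1-\lambda_\eta=\frac{p-\varphi_\eta}{1-\varphi_\eta}$, one obtains
\[
\eta\,\big(\m(1)-u(a^*,1)\big)=w(p;\eta)-\m(p)=\frac{G(\varphi_\eta)}{1-\varphi_\eta},
\]
and hence $\dfrac{1-\lambda_\eta}{\eta}=\big(\m(1)-u(a^*,1)\big)\,\dfrac{p-\varphi_\eta}{G(\varphi_\eta)}$. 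Since $w(p;\eta)>\m(p)$ for $\eta>0$ we have $G(\varphi_\eta)>0$, so it suffices to show that $\varphi\mapsto\frac{p-\varphi}{G(\varphi)}$ is increasing on $[0,p)$ and then invoke that $\varphi_\eta$ is decreasing in $\eta$. Equivalently, one shows that
\[
\frac{G(\varphi)}{p-\varphi}=\big(\m(1)-\m(p)\big)-(1-p)\,\frac{\m(p)-\m(\varphi)}{p-\varphi}
\]
is decreasing in $\varphi$, which is precisely the statement that the chord slope $\frac{\m(p)-\m(\varphi)}{p-\varphi}$ is increasing in $\varphi$, once more a consequence of the convexity of $\m$. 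The value $\eta=0$ is handled by continuity: since $\varphi_0=p$ and $\lambda_0=1$, the ratio at $\eta=0$ is the limit of the above expression as $\varphi\uparrow p$.

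The only genuinely non-mechanical step, and the main obstacle, is the treatment of $\frac{1-\lambda_\eta}{\eta}$: the trick is to absorb both $1-\lambda_\eta$ and $\eta$ into the single scalar function $\varphi\mapsto\frac{p-\varphi}{G(\varphi)}$ evaluated at the intersection point $\varphi_\eta$, so that its monotonicity in $\eta$ becomes a consequence of the monotonicity of $\varphi_\eta$ established above together with a one-variable convexity inequality, rather than requiring differentiation in $\eta$. The remaining ingredients are routine: the identities $\lambda_\eta=\frac{1-p}{1-\varphi_\eta}$ and $w(p;\eta)-\m(p)=G(\varphi_\eta)/(1-\varphi_\eta)$ are immediate from the defining splitting, and $G(\varphi_\eta)>0$ for $\eta>0$ holds because if $\m$ were affine on $[\varphi_\eta,1]$ then $(p,w(p;\eta))$ would lie on the graph of $\m$, forcing $\eta=0$.
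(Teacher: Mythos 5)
Your proof is correct, and it supplies exactly the details the paper omits (the paper states only that the lemma ``follows directly from the definition of $(\lambda_\eta,\varphi_\eta)$'' and gives no argument). The chord-slope characterization $\frac{\m(1)-\m(\varphi_\eta)}{1-\varphi_\eta}=\frac{\m(1)-w(p;\eta)}{1-p}$ together with convexity of $\m$ is the intended route for the first two claims, and your reduction of $\frac{1-\lambda_\eta}{\eta}$ to the monotonicity of $\varphi\mapsto (p-\varphi)/G(\varphi)$ is a clean way to dispose of the only step that is not immediate.
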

The proof follows directly from the definition of $(\lambda_\eta,\varphi_{\eta})$ and is omitted. 
\medskip 

Finally, we conclude with the following implication of Observation A, which wel use throughout. For all $(p,w,w')$ with $w\leq w'$, 
we have that: 
\begin{eqnarray*}
V_{\overline{q}^1}(p,w)-V_{\overline{q}^1}(p,w')= \la(p,w)\left[V_{\overline{q}^1}(\ph(p,w),\m(p,w))- V_{\overline{q}^1}\left(\ph(p,w),\m(p,w)+ \frac{w'-w}{\la(p,w)}\right)\right].
\end{eqnarray*} 

\subsubsection{Proof of Lemma \ref{V-concave-in-w}}We now prove that the gradient $\mathcal{G}(p;\eta):=\frac{V_{\overline{q}^1}(p,\m(p)) - V_{\overline{q}^1}(p,w(p;\eta))}{\eta}$ is increasing in $\eta\in \left[0,\frac{M(p)-\m(p)}{\m(1)-u(a^*,1)}\right]$, for all $p$. We prove it on three separate intervals $\mathcal{I}_1$, $\mathcal{I}_2$ and $\mathcal{I}_3$. If $P= \emptyset$, the three intervals are $[0,\underline{q}]$, $(\underline{q},\overline{q}]$ and $(\overline{q},1]$, respectively. If $P\neq \emptyset$, the three intervals are $[0,\underline{p}]$, $(\underline{p},\overline{q}^{\infty}]$ and $(\overline{q}^\infty,1]$, respectively.

\subsubsection{For all $p \in \mathcal{I}_1$, $\mathcal{G}(p;\eta)$ is increasing in $\eta$.}

We limit attention to the case $P \neq \emptyset$. (The case $P =\emptyset$ is identical.) The proof is by induction. First, consider the interval $[0,\underline{q}^1]$. Remember that at $\underline{q}^1$, we have a closed-form solution for $V_{\overline{q}^1}(\underline{q}^1,w)$ for all $w$ given by 
\[V_{\overline{q}^1}(\underline{q}^1,w)=\frac{M(\q)-w}{M(\q)-u(a^*,\q)}v(a^*,\q). \]
Therefore, 
     \[\begin{gathered}
  \frac{{V_{\overline{q}^1}({{\underline{q} }^1},\m({{\underline{q} }^1})) - V_{\overline{q}^1}({{\underline{q} }^1},{w(\underline{q}^1;\eta)})}}{\eta } = \frac{1}{\eta }\left[ {\frac{{M({{\underline{q} }^1}) - \m({{\underline{q} }^1})}}{{M({{\underline{q} }^1}) - u({a^*},{{\underline{q} }^1})}}v({a^*},{{\underline{q} }^1}) - \frac{{M({{\underline{q} }^1}) - {w(\underline{q}^1;\eta)}}}{{M({{\underline{q} }^1}) - u({a^*},{{\underline{q} }^1})}}v({a^*},{{\underline{q} }^1})} \right] \hfill \\
   = \frac{{v({a^*},{{\underline{q} }^1})}}{{M({{\underline{q} }^1}) - u({a^*},{{\underline{q} }^1})}}\frac{{[\m({{\underline{q} }^1}) + \eta (\m(1) - u({a^*},1))] -\m({{\underline{q} }^1})}}{\eta } \hfill \\
   = \frac{{{{\underline{q} }^1}v({a^*},1) + (1 - {{\underline{q} }^1})v({a^*},0)}}{{{{\underline{q} }^1}[\m(1) - u({a^*},1)] + (1 - {{\underline{q} }^1})[\m(0) - u({a^*},0)]}}\frac{{{w(\underline{q}^1;\eta)} - \m({{\underline{q} }^1})}}{\eta } \hfill \\
   = v({a^*},1)\underbrace{\frac{{{{\underline{q} }^1} + (1 - {{\underline{q} }^1})\frac{{v({a^*},0)}}{{v({a^*},1)}}}}{{{{\underline{q} }^1} + (1 - {{\underline{q} }^1})\frac{{\m(0) - u({a^*},0)}}{{\m(1) - u({a^*},1)}}}}}_{\geq 1 \text{\,since\,} \frac{v({a^*},0)}{v({a^*},1)} \geq \frac{\m(0)-u(a^*,0)}{\m(1)-u(a^*,1)}} \geq v(a^*,1). \hfill \\ 
\end{gathered} \]

We now consider any $p \in[0,\q)$. From Observation A, we have that: 
\[\left\{ \begin{gathered}
  V_{\overline{q}^1}(p,\m(p)) = \frac{{1 - p}}{{1 - {{\underline{q} }^1}}}V_{\overline{q}^1}\left( {{{\underline{q} }^1},\frac{{1 - {{\underline{q} }^1}}}{{1 - p}}\m(p) + \left( {1 - \frac{{1 - {{\underline{q} }^1}}}{{1 - p}}} \right)\m(1)} \right) \hfill \\
  V_{\overline{q}^1}(p,w(p;\eta )) = \frac{{1 - p}}{{1 - {{\underline{q} }^1}}}V_{\overline{q}^1}\left( {{{\underline{q} }^1},\frac{{1 - {{\underline{q} }^1}}}{{1 - p}}\m(p) + \left( {1 - \frac{{1 - {{\underline{q} }^1}}}{{1 - p}}} \right)\m(1) + \frac{{1 - {{\underline{q} }^1}}}{{1 - p}}\eta \left[ {\m(1) - u({a^*},1)} \right]} \right) \hfill \\ 
\end{gathered}  \right.\]
It follows that
\begin{scriptsize}
    \[\begin{gathered}
  \frac{{V_{\overline{q}^1}(p,\m(p)) - V_{\overline{q}^1}(p,w(p;\eta ))}}{\eta } \hfill \\
   = \frac{{1 - p}}{{1 - {{\underline{q} }^1}}}\frac{{V_{\overline{q}^1}\left( {{{\underline{q} }^1},\frac{{1 - {{\underline{q} }^1}}}{{1 - p}}\m(p) + \left( {1 - \frac{{1 - {{\underline{q} }^1}}}{{1 - p}}} \right)\m(1)} \right) - V_{\overline{q}^1}\left( {{{\underline{q} }^1},\frac{{1 - {{\underline{q} }^1}}}{{1 - p}}\m(p) + \left( {1 - \frac{{1 - {{\underline{q} }^1}}}{{1 - p}}} \right)\m(1) + \frac{{1 - {{\underline{q} }^1}}}{{1 - p}}\eta \left[ {\m(1) - u({a^*},1)} \right]} \right)}}{\eta } \hfill \\
   = \frac{1-p}{1-\underline{q}^1}\frac{1-\underline{q}^1}{1-p}\frac{\m(1)-u(a^*,1)}{M(\underline{q}^1)-u(a^*,\underline{q}^1}v(a^*,\underline{q}^1) =\frac{1-p}{1-\underline{q}^1}\frac{1-\underline{q}^1}{1-p} v({a^*},1)\frac{{{{\underline{q} }^1} + (1 - {{\underline{q} }^1})\frac{{v({a^*},0)}}{{v({a^*},1)}}}}{{{{\underline{q} }^1} + (1 - {{\underline{q} }^1})\frac{{\m(0) - u({a^*},0)}}{{\m(1) - u({a^*},1)}}}}\hfill\\
   \geqslant \frac{{1 - p}}{{1 - {{\underline{q} }^1}}}\frac{{1 - {{\underline{q} }^1}}}{{1 - p}}v({a^*},1) = v({a^*},1). \hfill \\ 
\end{gathered} \]
\end{scriptsize}
Therefore, $\mathcal{G}(p;\eta) \geq v(a^*,1)$ for all $\eta$, for all $p \in [0,\q]$. Moreover, the gradient $\mathcal{G}(p;\eta) $ is independent of $\eta$ for all $p \in [0,\q]$, hence is (weakly) increasing. 

\medskip 

By induction, assume that $\mathcal{G}(p;\eta) \geq v(a^*,1)$ for all $p \in [0,\underline{q}^k]$ and is increasing in $\eta$, we want to prove that both properties also hold for all $p \in (\underline{q}^k,\underline{q}^{k+1}]$.\medskip

We rewrite $V_{\overline{q}^1}(p,w(p;\eta))$ as follows: 
\[\begin{gathered}
  V_{\overline{q}^1}(p,w(p;\eta) ) = {\lambda _\eta }V_{\overline{q}^1}({\varphi _\eta },\m({\varphi _\eta })) = {\lambda _\eta }\left[ {(1 - \delta )v({a^*},{\varphi _\eta }) + \delta V_{\overline{q}^1}({\varphi _\eta },\mathbf{w}({\varphi _\eta }))} \right] \hfill \\
   = (1 - \delta ){\lambda _\eta }v({a^*},{\varphi _\eta }) + \delta {\lambda _\eta }V_{\overline{q}^1}({\varphi _\eta },\mathbf{w}({\varphi _\eta })) \hfill \\
   = (1 - \delta ){\lambda _\eta }v({a^*},{\varphi _\eta }) + \delta V_{\overline{q}^1}\left( {p,{\lambda _\eta }\mathbf{w}({\varphi _\eta }) + [1 - {\lambda _\eta }]\m(1)} \right) \hfill \\
   = (1 - \delta ){\lambda _\eta }v({a^*},{\varphi _\eta }) + \delta V_{\overline{q}^1}\left( {p,\mathbf{w}(p) + \frac{{\eta  - (1 - \delta )(1 - {\lambda _\eta })}}{\delta }[\m(1) - u({a^*},1)]} \right). \hfill \\ 
\end{gathered} \]
The second to last equality follows from Observation A, while the last equality follows from:
\begin{small}
    \[\begin{gathered}
  {\lambda _\eta }\mathbf{w}({\varphi _\eta }) + [1 - {\lambda _\eta }]\m(1) = {\lambda _\eta }\frac{{ - (1 - \delta )u({a^*},{\varphi _\eta }) + \m({\varphi _\eta })}}{\delta } + [1 - {\lambda _\eta }]\m(1) \hfill \\
   = \frac{{ - (1 - \delta )}}{\delta }{\lambda _\eta }u({a^*},{\varphi _\eta }) + \frac{1}{\delta }{\lambda _\eta }\m({\varphi _\eta }) + [1 - {\lambda _\eta }]\m(1) \hfill \\
   = \frac{{ - (1 - \delta )}}{\delta }\left[ {u({a^*},p) - (1 - {\lambda _\eta })u({a^*},1)} \right] + \frac{1}{\delta }\left[ {{w(p;\eta) } - (1 - {\lambda _\eta })\m(1)} \right] + [1 - {\lambda _\eta }]\m(1) \hfill \\
   = \frac{{ - (1 - \delta )}}{\delta }\left[ {u({a^*},p) - (1 - {\lambda _\eta })u({a^*},1)} \right] + \frac{1}{\delta }\left[ {\m(p) + \eta (\m(1) - u({a^*},1)) - (1 - {\lambda _\eta })\m(1)} \right] + [1 - {\lambda _\eta }]\m(1) \hfill \\
   = \left[ {\frac{{ - (1 - \delta )}}{\delta }u({a^*},p) + \frac{1}{\delta }\m(p)} \right] + \frac{{\eta  - (1 - \delta )(1 - {\lambda _\eta })}}{\delta }[\m(1) - u({a^*},1)]. \hfill \\ 
\end{gathered} \]
\end{small}

For future reference, recall that 
\begin{eqnarray*}
\lambda_{\eta} \mathbf{w}(\varphi _{\eta }) + (1 - \lambda _{\eta})\m(1) & = & \lambda_{\eta}\left[\la(\varphi_{\eta},\mathbf{w}(\varphi_{\eta})) \m(\ph(\varphi_{\eta},\mathbf{w}(\varphi_{\eta})))+  (1- \la(\varphi_{\eta},\mathbf{w}(\varphi_{\eta}))\m(1)\right] \\
& & + (1 - \lambda _{\eta })\m(1),
\end{eqnarray*}
so that 
\begin{eqnarray*}
\ph\left(p, \mathbf{w}(p)+\frac{\eta  - (1 - \delta )(1 - \lambda _{\eta})}{\delta }[\m(1) - u({a^*},1)]\right)= 
\ph(\varphi_{\eta},\mathbf{w}(\varphi_{\eta})),
\end{eqnarray*} 
and 
\begin{eqnarray*}
\la\left(p, \mathbf{w}(p)+\frac{\eta  - (1 - \delta )(1 - \lambda _{\eta})}{\delta }[\m(1) - u({a^*},1)]\right)= 
\lambda_{\eta}\la(\varphi_{\eta},\mathbf{w}(\varphi_{\eta})). 
\end{eqnarray*} 
Since $\varphi_\eta$ is decreasing in $\eta$, we have that $\varphi_{\eta'} \leq \varphi_{\eta}$ when $\eta'>\eta$ and, therefore, we have that $\ph(\varphi_{\eta},\mathbf{w}(\varphi_{\eta})) \leq \ph(\varphi_{\eta'},\mathbf{w}(\varphi_{\eta'}))$ since $\varphi_{\eta'} \leq \varphi_{\eta} \leq p \leq \underline{q}$. Similarly, since $\varphi_{\eta} < p \leq \underline{q}$, we have that 
$\ph(\varphi_{\eta},\mathbf{w}(\varphi_{\eta})) \leq \ph(p,\mathbf{w}(p))$ and, therefore, $\frac{\eta  - (1 - \delta )(1 - \lambda _{\eta})}{\delta }>0$.
\medskip

We now return to the computation of the gradient. We have:
\begin{equation}\label{eq:derivative}
    \begin{scriptsize}
    \begin{gathered}
   = \frac{{\left[ {(1 - \delta )v({a^*},p) + \delta V_{\overline{q}^1}(p,\mathbf{w}(p))} \right] - \left[ {(1 - \delta ){\lambda _\eta }v({a^*},{\varphi _\eta }) + \delta V_{\overline{q}^1}\left( {p,\mathbf{w}(p) + \frac{{\eta  - (1 - \delta )(1 - {\lambda _\eta })}}{\delta }[m(1) - u({a^*},1)]} \right)} \right]}}{\eta } \hfill \\
   = \frac{{(1 - \delta )}}{\eta }\left[ {v({a^*},p) - {\lambda _\eta }v({a^*},{\varphi _\eta })} \right] + \frac{\delta }{\eta }\left[ {V_{\overline{q}^1}(p,\mathbf{w}(p)) - V_{\overline{q}^1}\left( {p,\mathbf{w}(p) + \frac{{\eta  - (1 - \delta )(1 - {\lambda _\eta })}}{\delta }[m(1) - u({a^*},1)]} \right)} \right] \hfill \\
   = \frac{{(1 - \delta )}}{\eta }(1 - {\lambda _\eta })v({a^*},1) + \frac{\delta }{\eta }\left[ {V_{\overline{q}^1}(p,\mathbf{w}(p)) - V_{\overline{q}^1}\left( {p,\mathbf{w}(p) + \frac{{\eta  - (1 - \delta )(1 - {\lambda _\eta })}}{\delta }[m(1) - u({a^*},1)]} \right)} \right]. \hfill \\ 
\end{gathered} 
\end{scriptsize}
\end{equation}

We further develop the above expression. To ease notation, we write $(\varphi(p),\lambda(p))$ for $(\ph(p,\mathbf{w}(p)),\la(p,\mathbf{w}(p)))$. Note that $\varphi(p) \in (\underline{q}^{k-1}, \underline{q}^k]$, since $p \in (\underline{q}^k,\underline{q}^{k+1}]$. As ${\frac{{\eta  - (1 - \delta )(1 - {\lambda _\eta })}}{\delta }}>0$,
we have that:
\begin{scriptsize}
    \[\begin{gathered}
   = \frac{{(1 - \delta )}}{\eta }(1 - {\lambda _\eta })v({a^*},1) + \frac{\delta }{\eta }\left[ {V_{\overline{q}^1}(p,\mathbf{w}(p)) - V_{\overline{q}^1}\left( {p,\mathbf{w}(p) + \frac{{\eta  - (1 - \delta )(1 - {\lambda _\eta })}}{\delta }[m(1) - u({a^*},1)]} \right)} \right] \hfill \\
   = \frac{{(1 - \delta )}}{\eta }(1 - {\lambda _\eta })v({a^*},1) + \frac{\delta }{\eta }\frac{{\eta  - (1 - \delta )(1 - {\lambda _\eta })}}{\delta }\frac{{V_{\overline{q}^1}(p,\mathbf{w}(p)) - V_{\overline{q}^1}\left( {p,\mathbf{w}(p) + \frac{{\eta  - (1 - \delta )(1 - {\lambda _\eta })}}{\delta }[m(1) - u({a^*},1)]} \right)}}{{\frac{{\eta  - (1 - \delta )(1 - {\lambda _\eta })}}{\delta }}} \hfill \\
   = \frac{{(1 - \delta )}}{\eta }(1 - {\lambda _\eta })v({a^*},1) + \left[ {1 - \frac{{\left( {1 - \delta } \right)(1 - {\lambda _\eta })}}{\eta }} \right]\frac{{\lambda (p)\left[ {V_{\overline{q}^1}(\varphi (p),\m(\varphi (p))) - V_{\overline{q}^1}\left( {\varphi (p),\m(\varphi (p)) + \frac{{\eta  - (1 - \delta )(1 - {\lambda _\eta })}}{{\delta \lambda (p)}}[m(1) - u({a^*},1)]} \right)} \right]}}{{\frac{{\eta  - (1 - \delta )(1 - {\lambda _\eta })}}{\delta }}} \hfill \\
   = \frac{{(1 - \delta )}}{\eta }(1 - {\lambda _\eta })v({a^*},1) + \left[ {1 - \frac{{\left( {1 - \delta } \right)(1 - {\lambda _\eta })}}{\eta }} \right]\frac{{V_{\overline{q}^1}(\varphi (p),\m(\varphi (p))) - V_{\overline{q}^1}\left( {\varphi (p),\m(\varphi (p)) + \frac{{\eta  - (1 - \delta )(1 - {\lambda _\eta })}}{{\delta \lambda (p)}}[m(1) - u({a^*},1)]} \right)}}{{\frac{{\eta  - (1 - \delta )(1 - {\lambda _\eta })}}{{\delta \lambda (p)}}}} \hfill \\
   \geqslant \frac{{(1 - \delta )}}{\eta }(1 - {\lambda _\eta })v({a^*},1) + \left[ {1 - \frac{{\left( {1 - \delta } \right)(1 - {\lambda _\eta })}}{\eta }} \right]v({a^*},1)
   = v({a^*},1), \hfill \\ 
\end{gathered} \]
\end{scriptsize}
where we use Observation A and the induction step. 
\medskip 

We now show that the gradient is increasing in $\eta$. To start with, note that $\frac{{\eta  - (1 - \delta )(1 - {\lambda _\eta })}}{\delta }$ is increasing in $\eta$ since  $\frac{{1 - {\lambda _\eta }}}{\eta }$ is decreasing in $\eta$ (see Lemma \ref{lem:decreasing in eta}). 
For any $\eta>\eta^\prime$, we have the following 
\[\begin{gathered}
  \frac{{V_{\overline{q}^1}(p,\mathbf{w}(p)) - V_{\overline{q}^1}\left( {p,\mathbf{w}(p) + \frac{{\eta  - (1 - \delta )(1 - {\lambda _\eta })}}{\delta }\left[ {\m(1) - u({a^*},1)} \right]} \right)}}{{\frac{{\eta  - (1 - \delta )(1 - {\lambda _\eta })}}{\delta }}} \hfill \\
   = \frac{{\lambda (p)V_{\overline{q}^1}(\varphi (p),\m(\varphi (p))) - \lambda (p)V_{\overline{q}^1}\left( {\varphi (p),\m(\varphi (p)) + \frac{{\eta  - (1 - \delta )(1 - {\lambda _\eta })}}{{\delta \lambda (p)}}\left[ {\m(1) - u({a^*},1)} \right]} \right)}}{{\frac{{\eta  - (1 - \delta )(1 - \lambda )}}{\delta }}} \hfill \\
   = \frac{{V_{\overline{q}^1}(\varphi (p),\m(\varphi (p))) - V_{\overline{q}^1}\left( {\varphi (p),\m(\varphi (p)) + \frac{{\eta  - (1 - \delta )(1 - {\lambda _\eta })}}{{\delta \lambda (p)}}\left[ {\m(1) - u({a^*},1)} \right]} \right)}}{{\frac{{\eta  - (1 - \delta )(1 - \lambda )}}{{\delta \lambda (p)}}}} \hfill \\
   \geqslant \frac{{V_{\overline{q}^1}(\varphi (p),\m(\varphi (p))) - V_{\overline{q}^1}\left( {\varphi (p),\m(\varphi (p)) + \frac{{{\eta ^\prime } - (1 - \delta )(1 - {\lambda _{{\eta ^\prime }}})}}{{\delta \lambda (p)}}\left[ {\m(1) - u({a^*},1)} \right]} \right)}}{{\frac{{\eta^\prime  - (1 - \delta )(1 - \lambda_{\eta^\prime} )}}{{\delta \lambda (p)}}}} \hfill \\ 
   =\frac{{V_{\overline{q}^1}(p,\mathbf{w}(p)) - V_{\overline{q}^1}\left( {p,\mathbf{w}(p) + \frac{{{\eta ^\prime } - (1 - \delta )(1 - {\lambda _{{\eta ^\prime }}})}}{\delta }\left[ {\m(1) - u({a^*},1)} \right]} \right)}}{{\frac{{\eta^\prime  - (1 - \delta )(1 - {\lambda _{{\eta ^\prime }}})}}{\delta }}}, \hfill 
\end{gathered} \]
where the  inequality follows from the fact that $\varphi(p)\in (\underline{q}^{k-1},\underline{q}^k]$ and, therefore, the gradient $\mathcal{G}(\varphi(p);\eta)$ being increasing in $\eta$ by the induction hypothesis. \medskip 

Finally, we have that
\begin{scriptsize}
    \[\begin{gathered}
  \frac{1}{\eta }\left[ {V_{\overline{q}^1}(p,\m(p)) - V_{\overline{q}^1}(p,{w(p;\eta) })} \right] = \hfill \\
  \frac{{\left( {1 - \delta } \right)(1 - {\lambda _\eta })}}{\eta }v({a^*},1) + \left[ {1 - \frac{{\left( {1 - \delta } \right)(1 - {\lambda _\eta })}}{\eta }} \right]\frac{{V_{\overline{q}^1}(p,\mathbf{w}(p)) - V_{\overline{q}^1}\left( {p,\mathbf{w}(p) + \frac{{\eta  - (1 - \delta )(1 - {\lambda _\eta })}}{\delta }\left[ {m(1) - u({a^*},1)} \right]} \right)}}{{\frac{{\eta  - (1 - \delta )(1 - {\lambda _\eta })}}{\delta }}} \hfill \\
   \geqslant \frac{{\left( {1 - \delta } \right)(1 - {\lambda _\eta })}}{\eta }v({a^*},1) + \left[ {1 - \frac{{\left( {1 - \delta } \right)(1 - {\lambda _\eta })}}{\eta }} \right]\frac{{V_{\overline{q}^1}(p,\mathbf{w}(p)) - V_{\overline{q}^1}\left( {p,\mathbf{w}(p) + \frac{{{\eta ^\prime } - (1 - \delta )(1 - {\lambda _{{\eta ^\prime }}})}}{\delta }\left[ {\m(1) - u({a^*},1)} \right]} \right)}}{{\frac{{{\eta ^\prime } - (1 - \delta )(1 - {\lambda _{{\eta ^\prime }}})}}{\delta }}} \hfill \\
   = \frac{{\left( {1 - \delta } \right)(1 - {\lambda _{{\eta ^\prime }}})}}{{{\eta ^\prime }}}v({a^*},1) + \left[ {1 - \frac{{\left( {1 - \delta } \right)(1 - {\lambda _{{\eta ^\prime }}})}}{{{\eta ^\prime }}}} \right]\frac{{V_{\overline{q}^1}(p,\mathbf{w}(p)) - V_{\overline{q}^1}\left( {p,\mathbf{w}(p) + \frac{{{\eta ^\prime } - (1 - \delta )(1 - {\lambda _{{\eta ^\prime }}})}}{\delta }\left[ {\m(1) - u({a^*},1)} \right]} \right)}}{{\frac{{{\eta ^\prime } - (1 - \delta )(1 - {\lambda _{{\eta ^\prime }}})}}{\delta }}} \hfill \\
   + \left[ {\frac{{\left( {1 - \delta } \right)(1 - {\lambda _{{\eta ^\prime }}})}}{{{\eta ^\prime }}} - \frac{{\left( {1 - \delta } \right)(1 - {\lambda _\eta })}}{\eta }} \right]\left[ {\frac{{V_{\overline{q}^1}(p,\mathbf{w}(p)) - V_{\overline{q}^1}\left( {p,\mathbf{w}(p) + \frac{{{\eta ^\prime } - (1 - \delta )(1 - {\lambda _{{\eta ^\prime }}})}}{\delta }\left[ {\m(1) - u({a^*},1)} \right]} \right)}}{{\frac{{{\eta ^\prime } - (1 - \delta )(1 - {\lambda _{{\eta ^\prime }}})}}{\delta }}} - v({a^*},1)} \right] \hfill \\
   \geqslant \frac{1}{{{\eta ^\prime }}}\left[ {V_{\overline{q}^1}(p,\m(p)) - V_{\overline{q}^1}(p,w(p;\eta^\prime))} \right] \hfill \\
   + \left[ {\frac{{\left( {1 - \delta } \right)(1 - {\lambda _{{\eta ^\prime }}})}}{{{\eta ^\prime }}} - \frac{{\left( {1 - \delta } \right)(1 - {\lambda _\eta })}}{\eta }} \right] \hfill \\
   \times \left[ {\frac{{V_{\overline{q}^1}(\varphi (p),\m(\varphi (p))) - V_{\overline{q}^1}\left( {\varphi (p),\m(\varphi (p)) + \frac{{{\eta ^\prime } - (1 - \delta )(1 - {\lambda _{{\eta ^\prime }}})}}{{\delta \lambda (p)}}\left[ {\m(1) - u({a^*},1)} \right]} \right)}}{{\frac{{{\eta ^\prime } - (1 - \delta )(1 - {\lambda _{{\eta ^\prime }}})}}{{\delta \lambda (p)}}}} - v({a^*},1)} \right]  \\
   \geqslant \frac{1}{{{\eta ^\prime }}}\left[ {V_{\overline{q}^1}(p,\m(p)) - V_{\overline{q}^1}(p,w(p;\eta^\prime))} \right]. \hfill \\ 
\end{gathered} \]
\end{scriptsize}
The last inequality follows from the fact that the gradient in the second bracket is weakly larger than $v(a^*,1)$ by the induction hypothesis and the fact that $\frac{1-\lambda_{\eta}}{\eta} < \frac{1-\lambda_{\eta^\prime}}{\eta^\prime}$ (Lemma \ref{lem:decreasing in eta}).

Since $\lim_{k \rightarrow \infty} \underline{q}^{k}=\underline{p}$ when $P \neq \emptyset$, this completes the proof that the gradient is greater than $v(a^*,1)$ for all $p \in [0,\underline{p}]$.

\subsubsection{For all $p \in \mathcal{I}_2$, $\mathcal{G}(p;\eta)$ is increasing in $\eta$.}

We first treat the case $P \neq \emptyset$. Recall that for all $p \in (\underline{p},\overline{q}^{\infty}]$, we have an explicit definition of the value function $V_{\overline{q}^1}(p,\m(p))$ as:
\[v(a^*,p)-\frac{\m(p)-u(a^*,p)}{\m(1)-u(a^*,1)}v(a^*,1).\]
Define $\bar{\eta}(p)$ as the solution to ${\varphi _{{\bar{\eta} }(p)}} = \ph (p,w(p;{\bar{\eta}(p))) = \underline{p}}$. Note that for any $p \in  (\underline{p},\overline{q}^{\infty}]$, for any $\eta \leq \bar{\eta}$, $\varphi_{\eta} \in [\underline{p},\overline{q}^{\infty}]$. Therefore, 
\begin{eqnarray*}
V_{\overline{q}^1}(p,w(p;\eta))= \lambda_{\eta} V_{\overline{q}^1}(\varphi_{\eta},\m(\varphi_{\eta}))&=& \lambda_{\eta}
\left[v(a^*,\varphi_{\eta})-\frac{\m(\varphi_{\eta})-u(a^*,\varphi_{\eta})}{\m(1)-u(a^*,1)}v(a^*,1)\right]\\
&=& v(a^*,p)-\frac{w(p;\eta)-u(a^*,p)}{\m(1)-u(a^*,1)}v(a^*,1).
\end{eqnarray*}
It follows that the gradient is equal to $v(a^*,1)$ for all $p\in (\underline{p},p^*]$, for all $\eta \leq \bar{\eta}$. 

Consider now $\eta > \bar{\eta}$. We rewrite the gradient $\mathcal{G}(p;\eta)$ as follows:
    \[\begin{gathered}
  \frac{{V_{\overline{q}^1}(p,\m(p)) - V_{\overline{q}^1}(p,{w(p;\eta) })}}{\eta } \\
  = \frac{{V_{\overline{q}^1}(p,\m(p)) - V_{\overline{q}^1}(p,w(p;\eta_1(p)))}}{\eta } +
   \frac{{V_{\overline{q}^1}(p,w(p;\eta_1(p))) - V_{\overline{q}^1}(p,w(p;\eta) )}}{\eta } \hfill\\
   = \frac{{{\eta _1}(p)}}{\eta }\frac{{V_{\overline{q}^1}(p,\m(p)) - V_{\overline{q}^1}(p,w(p,{\eta _1}(p)))}}{{{\eta _1}(p)}} + \frac{{\eta  - {\eta _1}(p)}}{\eta }\frac{{V_{\overline{q}^1}(p,w(p;\eta_1(p))) - V_{\overline{q}^1}(p,{w(p;\eta) })}}{{\eta  - {\eta _1}(p)}} \hfill \\
 = \frac{{{\eta _1}(p)}}{\eta }v({a^*},1) + \frac{{\eta  - {\eta _1}(p)}}{\eta }\frac{{\frac{{1 - p}}{{1 - \underline{p} }}\left[ {V_{\overline{q}^1}(\underline{p} ,\m(\underline{p} )) - V_{\overline{q}^1}\left( {\underline{p} ,{w}\left(\underline{p};{\frac{{\eta  - {\eta _1}(p)}}{{\frac{{1 - p}}{{1 - \underline{p} }}}}} \right)} \right)} \right]}}{{\eta  - {\eta _1}(p)}}\hfill \\
   = \frac{{{\eta _1}(p)}}{\eta }v({a^*},1) + \frac{{\eta  - {\eta _1}(p)}}{\eta }\mathcal{G}\left(\underline{p};\frac{\eta-\eta_1(p)}{\frac{1-p}{1-\underline{p}}}\right).\hfill \\ 
   \end{gathered} \]
Since we have already shown that $\mathcal{G}(\underline{p};\eta)$ is increasing in $\eta$ and weakly larger than $v(a^*,1)$, we have that the gradient $\mathcal{G}(p;\eta)$ is also weakly increasing in $\eta$ (and greater than $v(a^*,1)$).\medskip

We now treat the case $P =\emptyset$. Define $\bar{\eta}(p)$ as the solution to ${\varphi _{{\bar{\eta} }(p)}} = \ph (p,w(p;{\bar{\eta}(p))) = \underline{q}}$. Note that for any $p \in  [\underline{q},\overline{q}]$, for any $\eta \leq \bar{\eta}$, $\varphi_{\eta} \in [\underline{q},\overline{q}]$. Therefore, for all $\eta \leq \bar{\eta}$, $\eta  = (1 - \delta )(1 - {\lambda _\eta })$ since the ratio $\frac{\m(1)-\mathbf{w}(\varphi_{\eta})}{1-\varphi_{\eta}}$ is constant in $\eta$ and so is $\ph(\varphi_{\eta},\mathbf{w}(\varphi_{\eta}))$. (Recall that we vary $\eta$ at a fixed $p$.) It follows then from Equation \eqref{eq:derivative} that 
    \begin{small}
        \[\begin{gathered}
  \mathcal{G}(p;\eta ) = \frac{{(1 - \delta )}}{\eta }(1 - {\lambda _\eta })v({a^*},1) + \frac{\delta }{\eta }\left[ {V_{\overline{q}^1}(p,\mathbf{w}(p)) - V_{\overline{q}^1}\left( {p,\mathbf{w}(p) + \frac{{\eta  - (1 - \delta )(1 - {\lambda _\eta })}}{\delta }[m(1) - u({a^*},1)]} \right)} \right], \hfill \\
   = \frac{{(1 - \delta )}}{\eta }(1 - {\lambda _\eta })v({a^*},1) = v({a^*},1). \hfill \\ 
\end{gathered} \]
    \end{small}  
We have that the gradient $\mathcal{G}(p;\eta)$ is equal to $v(a^*,1)$ for all $p\in (\underline{q},\overline{q}]$, for all $\eta \leq \bar{\eta}$. Finally, when $\eta > \bar{\eta}$, the same decomposition as in the case $P \neq \emptyset$ completes the proof.

\subsubsection{For all $p \in \mathcal{I}_3$, the gradient $\mathcal{G}(p;\eta)$ is increasing in $\eta$.}\hfill

We only treat the case $P \neq \emptyset$. (The case $P = \emptyset$ is treated analogously.) Define $\bar{\eta}(p)$ as the solution to ${\varphi _{{\bar{\eta} }(p)}} = \ph (p,w(p;{\bar{\eta}(p))) = \overline{q}^{\infty}}$. By construction, for all $p \in (\overline{q}^{\infty},1]$, for all $\eta \leq \bar{\eta}(p)$, we have that $\varphi_{\eta} \in (\overline{q}^\infty,1]$. Therefore, $\varphi_{\eta} > \overline{q}$.

Choose $\bar{\eta}(p) \leq \eta' \leq \eta$. We have that $\varphi_{\eta'} \geq \varphi_{\eta} \geq \overline{q}$ since $\overline{q}^{\infty} \geq \overline{q}$ and, therefore, 
\begin{eqnarray*}
\ph\left(p, \mathbf{w}(p)+\frac{\eta  - (1 - \delta )(1 - \lambda _{\eta})}{\delta }[\m(1) - u({a^*},1)]\right)  =  \ph(\varphi_{\eta},\mathbf{w}(\varphi_{\eta})) \geq \\
 \ph(\varphi_{\eta'},\mathbf{w}(\varphi_{\eta'})=
   \ph\left(p, \mathbf{w}(p)+\frac{\eta'  - (1 - \delta )(1 - \lambda _{\eta'})}{\delta }[\m(1) - u({a^*},1)]\right).
\end{eqnarray*}

Also, since $\overline{q} \leq \varphi_{\eta} \leq p $,  we have that 
$\ph(\varphi_{\eta},\mathbf{w}(\varphi_{\eta})) \geq \ph(p,\mathbf{w}(p))$ and, therefore, $\frac{\eta  - (1 - \delta )(1 - \lambda _{\eta})}{\delta }\leq 0$. The same applies to $\eta'$. Finally, as already shown, 
    \[\frac{{\eta  - (1 - \delta )(1 - {\lambda _\eta })}}{\delta } < \frac{{{\eta ^\prime } - (1 - \delta )(1 - {\lambda _{{\eta ^\prime }}})}}{\delta }.\]
    
To ease notation, define $(\tilde{\lambda}_\eta,\tilde{\varphi}_\eta)$ as follows:
    \begin{equation}
        \label{eq:definition of varphitilde and lambdatilde}
        \left\{ \begin{gathered}
  {{\tilde \lambda }_\eta } = \lambda \left( {p,\mathbf{w}(p) - \frac{{(1 - \delta )(1 - {\lambda _\eta }) - \eta }}{\delta }\left[ {m(1) - u({a^*},1)} \right]} \right) \hfill \\
  {{\tilde \varphi }_\eta } = \varphi \left( {p,\mathbf{w}(p) - \frac{{(1 - \delta )(1 - {\lambda _\eta }) - \eta }}{\delta }\left[ {m(1) - u({a^*},1)} \right]} \right) \hfill \\ 
\end{gathered}  \right.
    \end{equation}
Notice that $\tilde{\varphi}_{\eta} = \ph(\varphi_{\eta},\mathbf{w}(\varphi_{\eta})) \in \mathcal{I}_1$ since $\varphi_{\eta} > \overline{q}^{\infty}$.    
\medskip

The rest of the proof is purely algebraic and mirrors the case $p \in \mathcal{I}_1$. First, we have the following:
    \[\begin{gathered}
  \frac{{V_{\overline{q}^1}(p,\mathbf{w}(p)) - V_{\overline{q}^1}\left( {p,\mathbf{w}(p) - \frac{{(1 - \delta )(1 - {\lambda _\eta }) - \eta }}{\delta }\left[ {\m(1) - u({a^*},1)} \right]} \right)}}{{\frac{{(1 - \delta )(1 - {\lambda _\eta }) - \eta }}{\delta }}} \hfill \\
   = \frac{{{{\tilde \lambda }_\eta }V_{\overline{q}^1}\left( {{{\tilde \varphi }_\eta },\m({{\tilde \varphi }_\eta }) + \frac{{(1 - \delta )(1 - {\lambda _\eta }) - \eta }}{{\delta {{\tilde \lambda }_\eta }}}\left[ {\m(1) - u({a^*},1)} \right]} \right) - {{\tilde \lambda }_\eta }V_{\overline{q}^1}\left( {{{\tilde \varphi }_\eta },\m({{\tilde \varphi }_\eta })} \right)}}{{\frac{{(1 - \delta )(1 - {\lambda _\eta }) - \eta }}{\delta }}} \hfill \\
   = \frac{{V_{\overline{q}^1}\left( {{{\tilde \varphi }_\eta },{w}\left( \tilde{\varphi}_{\eta};{\frac{{(1 - \delta )(1 - {\lambda _\eta }) - \eta }}{{\delta {{\tilde \lambda }_\eta }}}} \right)} \right) - V_{\overline{q}^1}\left( {{{\tilde \varphi }_\eta },\m({{\tilde \varphi }_\eta })} \right)}}{{\frac{{(1 - \delta )(1 - {\lambda _\eta }) - \eta }}{{\delta {{\tilde \lambda }_\eta }}}}}, \hfill \\ 
\end{gathered} \]
where we again use Observation A. Similarly, we have:
\[\begin{gathered}
  \frac{{V_{\overline{q}^1}(p,w(p)) - V_{\overline{q}^1}\left( {p,w(p) - \frac{{(1 - \delta )(1 - {\lambda _{{\eta ^\prime }}}) - {\eta ^\prime }}}{\delta }\left[ {\m(1) - u({a^*},1)} \right]} \right)}}{{\frac{{(1 - \delta )(1 - {\lambda _{{\eta ^\prime }}}) - {\eta ^\prime }}}{\delta }}} \hfill \\
   = \frac{{{{\tilde \lambda }_\eta }V_{\overline{q}^1}\left( {{{\tilde \varphi }_\eta },{w}\left(\tilde{\varphi}_\eta; {\frac{{(1 - \delta )(1 - {\lambda _\eta }) - \eta }}{{\delta {{\tilde \lambda }_\eta }}}} \right)} \right) - {{\tilde \lambda }_\eta }V_{\overline{q}^1}\left( {{{\tilde \varphi }_\eta },{w}\left(\tilde{\varphi}_\eta; {\frac{{(1 - \delta )(1 - {\lambda _\eta }) - \eta }}{{\delta {{\tilde \lambda }_\eta }}} - \frac{{(1 - \delta )(1 - {\lambda _{{\eta ^\prime }}}) - {\eta ^\prime }}}{{\delta {{\tilde \lambda }_\eta }}}} \right)} \right)}}{{\frac{{(1 - \delta )(1 - {\lambda _{{\eta ^\prime }}}) - {\eta ^\prime }}}{\delta }}} \hfill \\
   = \frac{{V_{\overline{q}^1}\left( {{{\tilde \varphi }_\eta },{w}\left(\tilde{\varphi}_\eta; {\frac{{(1 - \delta )(1 - {\lambda _\eta }) - \eta }}{{\delta {{\tilde \lambda }_\eta }}}} \right)} \right) - V_{\overline{q}^1}\left( {{{\tilde \varphi }_\eta },{w}\left(\tilde{\varphi}_\eta; {\frac{{(1 - \delta )(1 - {\lambda _\eta }) - \eta }}{{\delta {{\tilde \lambda }_\eta }}} - \frac{{(1 - \delta )(1 - {\lambda _{{\eta ^\prime }}}) - {\eta ^\prime }}}{{\delta {{\tilde \lambda }_\eta }}}} \right)} \right)}}{{\frac{{(1 - \delta )(1 - {\lambda _{{\eta ^\prime }}}) - {\eta ^\prime }}}{{\delta {{\tilde \lambda }_\eta }}}}}, \hfill \\ 
\end{gathered} \]
where again we use Observation A and the fact
\[{\frac{{(1 - \delta )(1 - {\lambda _\eta }) - \eta }}{{\delta {{\tilde \lambda }_\eta }}} > \frac{{(1 - \delta )(1 - {\lambda _{{\eta ^\prime }}}) - {\eta ^\prime }}}{{\delta {{\tilde \lambda }_\eta }}}}.\]
\medskip

Since $\tilde{\varphi}_{\eta} \in \mathcal{I}_1$, we have that: 
\[\begin{gathered}
  \frac{{V_{\overline{q}^1}\left( {{{\tilde \varphi }_\eta },w\left( {{{\tilde \varphi }_\eta };\frac{{(1 - \delta )(1 - {\lambda _\eta }) - \eta }}{{\delta {{\tilde \lambda }_\eta }}}} \right)} \right) - V_{\overline{q}^1}\left( {{{\tilde \varphi }_\eta },w\left( {{{\tilde \varphi }_\eta };\frac{{(1 - \delta )(1 - {\lambda _\eta }) - \eta }}{{\delta {{\tilde \lambda }_\eta }}} - \frac{{(1 - \delta )(1 - {\lambda _{{\eta ^\prime }}}) - {\eta ^\prime }}}{{\delta {{\tilde \lambda }_\eta }}}} \right)} \right)}}{{\frac{{(1 - \delta )(1 - {\lambda _{{\eta ^\prime }}}) - {\eta ^\prime }}}{{\delta {{\tilde \lambda }_\eta }}}}} \hfill \\
   \leqslant \frac{{V_{\overline{q}^1}\left( {{{\tilde \varphi }_\eta },w\left( {{{\tilde \varphi }_\eta };\frac{{(1 - \delta )(1 - {\lambda _\eta }) - \eta }}{{\delta {{\tilde \lambda }_\eta }}}} \right)} \right) - V_{\overline{q}^1}\left( {{{\tilde \varphi }_\eta },\m({{\tilde \varphi }_\eta })} \right)}}{{\frac{{(1 - \delta )(1 - {\lambda _\eta }) - \eta }}{{\delta {{\tilde \lambda }_\eta }}}}},  \hfill \\ 
\end{gathered} \]
where the inequality follows from our previous argument on the interval $\mathcal{I}_1$.

It follows that:
          \begin{eqnarray*}
          \frac{{V_{\overline{q}^1}(p,w(p)) - V_{\overline{q}^1}\left( {p,w(p) - \frac{{(1 - \delta )(1 - {\lambda _{{\eta ^\prime }}}) - {\eta ^\prime }}}{\delta }\left[ {\m(1) - u({a^*},1)} \right]} \right)}}{{\frac{{(1 - \delta )(1 - {\lambda _{{\eta ^\prime }}}) - {\eta ^\prime }}}{\delta }}}\\
           \leqslant \frac{{V_{\overline{q}^1}(p,w(p)) - V_{\overline{q}^1}\left( {p,w(p) - \frac{{(1 - \delta )(1 - {\lambda _\eta }) - \eta }}{\delta }\left[ {\m(1) - u({a^*},1)} \right]} \right)}}{{\frac{{(1 - \delta )(1 - {\lambda _\eta }) - \eta }}{\delta }}}.\end{eqnarray*}
From Equation \eqref{eq:derivative}, we then have that
\begin{scriptsize}
    \[\begin{gathered}
  \frac{1}{\eta }\left[ {V_{\overline{q}^1}(p,\m(p)) - V_{\overline{q}^1}(p,\mathbf{w}(p;\eta))} \right] = \\
  \frac{{\left( {1 - \delta } \right)(1 - {\lambda _\eta })}}{\eta }v({a^*},1) + \left[ {\frac{{\left( {1 - \delta } \right)(1 - {\lambda _\eta })}}{\eta } - 1} \right]\frac{{V_{\overline{q}^1}(p,w(p)) - V_{\overline{q}^1}\left( {p,w(p) - \frac{{(1 - \delta )(1 - {\lambda _\eta }) - \eta }}{\delta }\left[ {m(1) - u({a^*},1)} \right]} \right)}}{{\frac{{(1 - \delta )(1 - {\lambda _\eta }) - \eta }}{\delta }}} \hfill \\
   \geqslant \frac{{\left( {1 - \delta } \right)(1 - {\lambda _\eta })}}{\eta }v({a^*},1) + \left[ {\frac{{\left( {1 - \delta } \right)(1 - {\lambda _\eta })}}{\eta } - 1} \right]\frac{{V_{\overline{q}^1}(p,w(p)) - V_{\overline{q}^1}\left( {p,w(p) - \frac{{(1 - \delta )(1 - {\lambda _{{\eta ^\prime }}}) - {\eta ^\prime }}}{\delta }\left[ {\m(1) - u({a^*},1)} \right]} \right)}}{{\frac{{(1 - \delta )(1 - {\lambda _{{\eta ^\prime }}}) - {\eta ^\prime }}}{\delta }}} \hfill \\
   = \frac{{\left( {1 - \delta } \right)(1 - {\lambda _\eta })}}{\eta }v({a^*},1) + \left[ {1 - \frac{{\left( {1 - \delta } \right)(1 - {\lambda _\eta })}}{\eta }} \right]\frac{{V_{\overline{q}^1}\left( {p,w(p) - \frac{{(1 - \delta )(1 - {\lambda _{{\eta ^\prime }}}) - {\eta ^\prime }}}{\delta }\left[ {\m(1) - u({a^*},1)} \right]} \right) - V_{\overline{q}^1}(p,w(p))}}{{\frac{{(1 - \delta )(1 - {\lambda _{{\eta ^\prime }}}) - {\eta ^\prime }}}{\delta }}} \hfill \\
   = \frac{{\left( {1 - \delta } \right)(1 - {\lambda _{{\eta ^\prime }}})}}{{{\eta ^\prime }}}v({a^*},1) + \left[ {1 - \frac{{\left( {1 - \delta } \right)(1 - {\lambda _{{\eta ^\prime }}})}}{{{\eta ^\prime }}}} \right]\frac{{V_{\overline{q}^1}\left( {p,w(p) - \frac{{(1 - \delta )(1 - {\lambda _{{\eta ^\prime }}}) - {\eta ^\prime }}}{\delta }\left[ {\m(1) - u({a^*},1)} \right]} \right) - V_{\overline{q}^1}(p,w(p))}}{{\frac{{(1 - \delta )(1 - {\lambda _{{\eta ^\prime }}}) - {\eta ^\prime }}}{\delta }}} \hfill \\
   + \left[ { \frac{{\left( {1 - \delta } \right)(1 - {\lambda _{{\eta ^\prime }}})}}{{{\eta ^\prime }}}-\frac{{\left( {1 - \delta } \right)(1 - {\lambda _\eta })}}{\eta } } \right]\left[ {\frac{{V_{\overline{q}^1}\left( {p,w(p) - \frac{{(1 - \delta )(1 - {\lambda _{{\eta ^\prime }}}) - {\eta ^\prime }}}{\delta }\left[ {\m(1) - u({a^*},1)} \right]} \right) - V_{\overline{q}^1}(p,w(p))}}{{\frac{{(1 - \delta )(1 - {\lambda _{{\eta ^\prime }}}) - {\eta ^\prime }}}{\delta }}} - v({a^*},1)} \right] \hfill \\
   \geqslant \frac{1}{{{\eta ^\prime }}}\left[ {V_{\overline{q}^1}(p,\m(p)) - V_{\overline{q}^1}(p,w(p;\eta^\prime))} \right], \hfill \\ 
\end{gathered} \]
\end{scriptsize}
where the last inequality follows from:
    \[\begin{gathered}
  \frac{{V_{\overline{q}^1}\left( {p,w(p) - \frac{{(1 - \delta )(1 - {\lambda _{{\eta ^\prime }}}) - {\eta ^\prime }}}{\delta }\left[ {\m(1) - u({a^*},1)} \right]} \right) - V_{\overline{q}^1}(p,w(p))}}{{\frac{{(1 - \delta )(1 - {\lambda _{{\eta ^\prime }}}) - {\eta ^\prime }}}{\delta }}} \hfill \\
   = \frac{{{{\tilde \lambda }_{{\eta ^\prime }}}V_{\overline{q}^1}({{\tilde \varphi }_{{\eta ^\prime }}},\m({{\tilde \varphi }_{{\eta ^\prime }}})) - {{\tilde \lambda }_{{\eta ^\prime }}}V_{\overline{q}^1}\left( {{{\tilde \varphi }_{{\eta ^\prime }}},w\left( {{{\tilde \varphi }_{{\eta ^\prime }}};\frac{{(1 - \delta )(1 - {\lambda _{{\eta ^\prime }}}) - {\eta ^\prime }}}{{\delta {{\tilde \lambda }_{{\eta ^\prime }}}}}} \right)} \right)}}{{\frac{{(1 - \delta )(1 - {\lambda _{{\eta ^\prime }}}) - {\eta ^\prime }}}{\delta }}} \geqslant v({a^*},1). \hfill \\ 
\end{gathered} \]

We now show that the the gradient $\mathcal{G}(p;\eta)$ is smaller than $v(a^*,1)$ for any $\eta\leq \bar{\eta} (p)$. From Equation \eqref{eq:derivative}, we have that: 
\begin{scriptsize}
    \[\begin{gathered}
  \frac{1}{\eta }\left[ {V_{\overline{q}^1}(p,\m(p)) - V_{\overline{q}^1}(p,\mathbf{w}(p;\eta))} \right] \hfill \\
   = \frac{{(1 - \delta )(1 - {\lambda _\eta })}}{\eta }v({a^*},1) - \left[ {\frac{{\left( {1 - \delta } \right)(1 - {\lambda _\eta })}}{\eta } - 1} \right]\frac{{V_{\overline{q}^1}\left( {p,w(p) - \frac{{(1 - \delta )(1 - {\lambda _\eta }) - \eta }}{\delta }\left[ {m(1) - u({a^*},1)} \right]} \right) - V_{\overline{q}^1}(p,w(p))}}{{\frac{{(1 - \delta )(1 - {\lambda _\eta }) - \eta }}{\delta }}} \hfill \\
   = v({a^*},1) - \left[ {\frac{{\left( {1 - \delta } \right)(1 - {\lambda _\eta })}}{\eta } - 1} \right]\left[ {\frac{{V_{\overline{q}^1}\left( {p,w(p) - \frac{{(1 - \delta )(1 - {\lambda _\eta }) - \eta }}{\delta }\left[ {m(1) - u({a^*},1)} \right]} \right) - V_{\overline{q}^1}(p,w(p))}}{{\frac{{(1 - \delta )(1 - {\lambda _\eta }) - \eta }}{\delta }}} - v({a^*},1)} \right] \hfill \\
   =v({a^*},1) - \left[ {\frac{{\left( {1 - \delta } \right)(1 - {\lambda _\eta })}}{\eta } - 1} \right]\left[ {\frac{{{{\tilde \lambda }_\eta }V_{\overline{q}^1}({{\tilde \varphi }_\eta },\m({{\tilde \varphi }_\eta })) - {{\tilde \lambda }_\eta }V_{\overline{q}^1}\left( {{{\tilde \varphi }_\eta },w\left( {{{\tilde \varphi }_\eta };\frac{{(1 - \delta )(1 - {\lambda _\eta }) - \eta }}{{\delta {{\tilde \lambda }_\eta }}}} \right)} \right)}}{{\frac{{(1 - \delta )(1 - {\lambda _\eta }) - \eta }}{\delta }}} - v({a^*},1)} \right] \hfill \\
    = v({a^*},1) - \underbrace{\left[ {\frac{{\left( {1 - \delta } \right)(1 - {\lambda _\eta })}}{\eta } - 1} \right]}_{\geq 0}\underbrace{\left[ {\frac{{V_{\overline{q}^1}({{\tilde \varphi }_\eta },\m({{\tilde \varphi }_\eta })) - V_{\overline{q}^1}\left( {{{\tilde \varphi }_\eta },w\left( {{{\tilde \varphi }_\eta };\frac{{(1 - \delta )(1 - {\lambda _\eta }) - \eta }}{{\delta {{\tilde \lambda }_\eta }}}} \right)} \right)}}{{\frac{{(1 - \delta )(1 - {\lambda _\eta }) - \eta }}{{\delta {{\tilde \lambda }_\eta }}}}} - v({a^*},1)} \right]}_{\geq 0} \hfill \\
   \leqslant v({a^*},1), \hfill \\ 
\end{gathered} \]
\end{scriptsize}
where the inequality follows from the fact that $\tilde{\varphi}_{\eta} \leq \underline{p}$ (therefore, from our arguments on the interval $
\mathcal{I}_1$, where we show that the gradient is larger than $v(a^*,1)$). 

Finally, we can use a similar decomposition as in the case $p \in \mathcal{I}_2$ to prove that the gradient is increasing for all $\eta$.

\section{A Formal Discussion of Other Policies}
\label{app:other-policies}

\subsection{Non-uniqueness and comparison with the KG's policy.} Our policy is not always uniquely optimal. We demonstrate the non-uniqueness with the help of a simple example and then discuss how our policy compares with the KG's policy (for Kamenica-Gentzkow's policy). 

\textit{\textbf{Example 2.}} The agent has two possible actions $a_0$ and $a_1$, with $a_0$ (resp., $a_1$) the agent's optimal action when the state is $\omega_0$ (resp., $\omega_1$). The principal wants to induce $a_0$ as often as possible, i.e., $a^*=a_0$. The discount factor is $1/2$. The payoffs are in Table \ref{tab:ex2}, with the first coordinate corresponding to the principal's payoff.
\begin{table}[h]
\centering \caption{Payoff table of Example 2}\label{tab:ex2}
\begin{tabular}{|c|c|c|}
\hline
 & $a_0$ & $a_1$ \\  \hline
 $\omega_0$ & $1,1$ & $0,0$ \\ \hline
 $\omega_1$ & $1,0$ & $0,1$  \\ \hline
 \end{tabular}
 \medskip

\end{table}

In Example 2, we have that: $m(p)=\max(1-p,p)$, $M(p)=1$ and $u(a^*,p)=1-p$. Thus, $a^*$ is optimal for all $p \in P=[0,1/2]$. Moreover, $Q^1= [0,2/3]$ and $\bold{w}(p)=3p-1$ for $p \in( 1/2,2/3]$. 

We now provide an explicit characterization of the value function.  We first compute the value function $V_{\overline{q}^1}(\cdot,m(\cdot))$ and check whether it is concave. For $p \in [0,1/2]$, the policy recommends $a^*$ and promises a continuation payoff of $m(p)$.  That is, since $a^*$ is optimal, the principal does not need to incentivize the agent. For $p \in (1/2,2/3]$, the policy recommends $a^*$ and promises a continuation payoff of $\bold{w}(p)$. At $(p,\bold{w}(p))$ with $p \in (1/2,2/3]$, the policy splits $p$ into $\varphi(p,\bold{w}(p))$ and $1$, with probability $\lambda(p,\bold{w}(p))$ and $1-\lambda(p,\bold{w}(p))$ respectively. (See Equation (\ref{eq1}).) 

We obtain that $\lambda(p,\bold{w}(p))= (3-4p)$ and $\varphi(p,\bold{w}(p))=\frac{2-3p}{3-4p}$. Note that $\varphi(p,\bold{w}(p))=\frac{2-3p}{3-4p}<\frac{1}{2}$  since $p \in (1/2,2/3]$. After splitting $p$ into $\varphi(p,\bold{w}(p))$, the principal therefore obtains a payoff of 1 in all subsequent periods. It follows that the principal's expected payoff is
\[\frac{1}{2}+\frac{1}{2}\lambda(p,\bold{w}(p))= 2(1-p).\]
Finally, if $p\in (2/3,1]$, the policy splits $p$ into $2/3$ and $1$ with probability $3(1-p)$ and $(1-3(1-p))$,  respectively. The principal's expected payoff is then
\[3(1-p) \times \Big[\frac{1}{2}+\frac{1}{2}\lambda\left(\frac{2}{3},\bold{w}\left(\frac{2}{3}\right)\right)\Big]=3(1-p)\times 2\left(1-\frac{2}{3}\right)=2(1-p). \]
So, the value function $V_{\overline{q}^1}$ induced by the policy $\tau_{\q}$ is such $V_{\overline{q}^1}(p,m(p))= 1$ for all $p \in [0,1/2]$ and $V_{\overline{q}^1}(p,m(p))= 2(1-p)$ for all $p \in(1/2,1]$. Since it is concave in $p$, this guarantees that $q^*=\overline{q}^1$ and, thus, the policy is indeed optimal.

We now consider another policy, which we call the KG's policy. The aim of the KG's policy is to persuade the agent to choose $a^*$ as often as possible by disclosing information at the initial stage only. The best payoff the principal can obtain with a KG's policy is:
\[\max_{(\lambda_s,p_s,a_s)}\sum_{s}\lambda_s v(a_s,p_s), \]
subject to \[\forall s, u(a_s,p_s) \geq m(p_s), \text{and} \sum_s \lambda_s p_s=p.\]

In Example 2, the KG's policy differs from our policy only when $p \geq 1/2$, and consists in  splitting $p$ into $1/2$ and $1$, with probability $2(1-p)$ and $1-2(1-p)$ respectively. The KG's policy induces the same value function as our policy, hence is also optimal.  We now prove that this is not accidental. \medskip

Suppose that there are only two actions, $a_0$ and $a_1$, such that $a_0$ (resp., $a_1$) is optimal at state $\omega_0$ (resp., $\omega_1$). The principal aims at implementing $a_0$ as often as possible, i.e., $a^*=a_0$.\footnote{If $a^*=a_1$, then $0=m(1)-u(a_1,1) \geq (m(0)-u(a_1,0))\frac{v(a^*,1)}{v(a^*,0)}=(u(a_0,0)-u(a_1,0))\frac{v(a^*,1)}{v(a^*,0)}\geq 0$, i.e., $a_1$ is also optimal when the agent believes that the state is $\omega_0$ with probability 1.} Remember that $a_0$ is optimal at all beliefs in 
$[\underline{p},\overline{p}]$. Since $a_0$ is optimal at $0$, $\underline{p}=0$.  To streamline the exposition, assume that the prior $p_0 > \overline{p}$. (If $p_0 \leq \overline{p}$, an optimal policy is to never reveal any information.)  It is then immediate to see that  the KG's policy consists in splitting the prior $p_0$ into $\overline{p}$ and $1$, with probability $\frac{1-p_0}{1-\overline{p}}$ and $1- \frac{1-p_0}{1-\overline{p}}$, respectively. Intuitively, the principal designs a binary experiment, with one signal perfectly informing the agent that the state is $\omega_1$ and the other partially informing the agent so that his posterior beliefs is $\overline{p}$. 

We can contrast the KG's policy with our policy. Unlike the KG's policy, our policy does not reveal information to the agent at the first period, and only reveals information to the agent if he plays $a_0$. If the agent plays $a_0$ at the first period, the policy splits $p_0$ into $\varphi(p_0,\bold{w}(p_0))$ and $1$ with probability $\lambda(p_0,\bold{w}(p_0))$ and $1-\lambda(p_0,\bold{w}(p_0))$, respectively. Note that $\varphi(p_0,\bold{w}(p_0)) \leq \overline{p}$ since $\bold{w}(p_0)  \geq m(p_0)$. Thus, our policy guarantees that the agent plays $a^*$ for sure at the first period. However, this comes at a cost: the principal needs to reveal more information to the agent at the next period and, consequently, inducing the agent to play $a_0$ with a lower probability. Somewhat surprisingly, both policies are optimal, regardless of the discount factor. 
\begin{corollary}\label{KG-policy}
If there are only two actions, then the KG's policy is also optimal. 
\end{corollary}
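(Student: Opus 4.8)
The plan is to show that, in the two-action case, the policy $\tau_{\overline{q}^1}$ already has a concave diagonal value function, so that $q^*=\overline{q}^1$ and hence $V^*=V_{\overline{q}^1}$ by Theorem~\ref{theo2:opti}, and then to check that $V_{\overline{q}^1}(p_0,m(p_0))$ equals the value $\cav\,g(p_0)$ of the best KG policy, where $g(p):=\max\{v(a,p):u(a,p)\ge m(p)\}$ is the objective of the static concavification problem. Since any KG policy is a particular feasible contract, the inequality $\cav\,g(p_0)\le V^*(p_0,m(p_0))$ is automatic, so the equality yields the claim. I would first reduce to the case $a^*=a_0$: if instead $a^*=a_1$, the argument in the footnote to the statement shows $a_1$ is optimal at $\omega_0$ as well, hence (since $u(a_1,\cdot)$ is affine, $m$ is convex, and the two agree at both extreme beliefs) $a_1$ is optimal at every belief, so $P=[0,1]$ and every policy — the KG policy included — is trivially optimal.

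The key reduction is that $\underline{q}^1=0$. Indeed $a^*=a_0$ is optimal at $p=0$, so $0\in P\subseteq Q^1$ and thus $\inf Q^1=0$. Now $M(q)=(1-q)m(0)+qm(1)$ for every $q$ (at a degenerate belief the maximum over actions equals $m(0)$, resp.\ $m(1)$), so when $\underline{q}^1=0$ the threshold function satisfies $U^1\equiv M=U^0$; by the inductive construction of the sets $Q^k$ this gives $Q^{k+1}=Q^k$ for all $k$, hence $Q^\infty=Q^1=[0,\overline{q}^1]$. The payoff of this step is that the closed-form expression $V_{\overline{q}^1}(p,m(p))=v(a^*,p)-\frac{m(p)-u(a^*,p)}{m(1)-u(a^*,1)}v(a^*,1)$, established in Section~\ref{app:value-function} for $p\in Q^\infty$, is now valid on all of $[0,\overline{q}^1]$, so no separate treatment of the annuli $Q^k\setminus Q^{k+1}$ is needed.

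I would then identify $V_{\overline{q}^1}(\cdot,m(\cdot))$ piecewise. On $[0,\overline{p}]=P$ one has $m(p)=u(a^*,p)$, so $V_{\overline{q}^1}(p,m(p))=v(a^*,p)$; since $v(a_1,\cdot)\equiv 0$ this is exactly $\cav\,g$ on $[0,\overline{p}]$. On $(\overline{p},\overline{q}^1]$ the difference $m(p)-u(a^*,p)=u(a_1,p)-u(a^*,p)$ is affine in $p$ and vanishes at $\overline{p}$, hence equals $\frac{p-\overline{p}}{1-\overline{p}}\,(m(1)-u(a^*,1))$; substituting this and using linearity of $v(a^*,\cdot)$ together with $m(\overline{p})=u(a^*,\overline{p})$ gives, after a short computation, $V_{\overline{q}^1}(p,m(p))=\frac{1-p}{1-\overline{p}}v(a^*,\overline{p})$. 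Finally, for $p\in(\overline{q}^1,1]$ the pair $(p,m(p))$ lies in $\mathcal{W}^3_{\overline{q}^1}$, so $V_{\overline{q}^1}(p,m(p))=\frac{1-p}{1-\overline{q}^1}V_{\overline{q}^1}(\overline{q}^1,m(\overline{q}^1))=\frac{1-p}{1-\overline{p}}v(a^*,\overline{p})$ by the previous case. Thus $V_{\overline{q}^1}(\cdot,m(\cdot))$ coincides with $\cav\,g$ on $[0,1]$, and in particular it is concave.

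Concavity of the diagonal then gives $q^*=\overline{q}^1$ (as noted after Theorem~\ref{theo2:opti}), so $V^*=V_{\overline{q}^1}$ and $V^*(p_0,m(p_0))=\cav\,g(p_0)$, which is precisely the payoff of the optimal KG policy; together with the trivial feasibility inequality this proves that the KG policy is optimal. I do not expect a substantive obstacle here: the only points requiring care are the bookkeeping of the first step — checking that $\underline{q}^1=0$ really collapses the whole sequence $(Q^k)$ down to $Q^1$ — and the elementary piecewise computation of $V_{\overline{q}^1}(\cdot,m(\cdot))$.
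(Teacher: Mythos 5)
Your proof is correct and reaches the paper's conclusion by the same overall strategy --- compute the diagonal $V_{\overline{q}^1}(\cdot,m(\cdot))$, observe that it is concave so that $q^*=\overline{q}^1$ and Theorem \ref{theo2:opti} applies, and match it with the concavified static (KG) value --- but the key computation is done by a genuinely different route. The paper's proof exploits the fact that with two actions $\varphi(p,\mathbf{w}(p))\le\overline{p}$, so after a single split the continuation belief lies in $P$ and the recursion unrolls in one step, $V_{\overline{q}^1}(p,m(p))=(1-\delta)v(a^*,p)+\delta\tfrac{1-p}{1-\varphi}v(a^*,\varphi)$, followed by explicit algebra on the coefficient $\delta\tfrac{1-p}{1-\varphi}$ and the formula for $\overline{p}$. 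You instead observe that $a^*=a_0$ being optimal at $p=0$ forces $\underline{q}^1=0$, which makes $U^1=M=U^0$ and collapses the entire hierarchy $(Q^k)$ to $Q^1=Q^{\infty}$; the closed form $V_{\overline{q}^1}(p,m(p))=v(a^*,p)-\frac{m(p)-u(a^*,p)}{m(1)-u(a^*,1)}v(a^*,1)$ from Section \ref{app:value-function} then applies on all of $[0,\overline{q}^1]$, and the rest reduces to the affine identity $m(p)-u(a^*,p)=\frac{p-\overline{p}}{1-\overline{p}}\left(m(1)-u(a^*,1)\right)$ together with the $\mathcal{W}^3$ scaling above $\overline{q}^1$. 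Your route is arguably cleaner and makes transparent \emph{why} two actions are special (the sets $Q^k$ never shrink past $Q^1$), at the cost of invoking the appendix's fixed-point formula on $Q^{\infty}$; the paper's route is more self-contained but hides the structure in the algebra. Both arguments rely equally on the assertion, stated after Theorem \ref{theo2:opti} and underpinned by Proposition \ref{V-concave}, that concavity of $V_{\overline{q}^1}(\cdot,m(\cdot))$ yields $q^*=\overline{q}^1$, so this is not a gap relative to the paper's own standard.
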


\medskip 
%

As Example 1 shows, the KG's policy is not always optimal. Yet, if $a^*$ is not strictly dominated and the function $m$ is linear from $\overline{p}$ to 1, then the KG's policy is also optimal at all priors above $\overline{p}$. (A proof is available upon request.) More generally, whenever the value function $V^*$ is linear in $(p,w)$, the KG's policy is also optimal. We conjecture, however, that the value function $V^*$ is generically non-linear. 

\medskip

\subsection{Comparison with the ``random disclosure'' policy.}  Remember that the policy of fully disclosing the state with delay plays a prominent role in the work of \citet{Ball2019dynamic} and \citet{orlov2018persuading}. Since we study a discrete time model, we do not directly compare our policy with the policy of fully disclosing the state with delay, but with the ``random disclosure'' policy. The ``random disclosure'' policy consists in fully disclosing the state with probability $\alpha$ at period $t+1$ (and to withhold all information with the complementary probability) if the agent plays $a^*$ at period $t$.\footnote{In continuous time, the policy of fully disclosing the state with delay yields the same payoff as the ``random disclosure'' policy.} 

We first compute the principal's payoff if he commits to the best ``random disclosure'' policy. To ease the exposition, we assume that $a^*$ is not optimal at the belief $p=0$.\footnote{When $a^*$ is optimal at $p=0$, we need to add the term  $\delta \alpha (1-p)v(a^*,p)$ to the objective, which corresponds to the payoff the principal obtains when the disclosed state is $\omega_0$.} Assume that $p \in Q^1$. The best ``random disclosure'' policy is solution to the maximization problem:
\[V= \max_{\alpha \in [0,1]} (1-\delta)v(a^*,p)+ \delta(1-\alpha) V,\]
subject to 
\[U=(1-\delta)u(a^*,p)+ \delta \left[\alpha M(p) + (1-\alpha)U\right] \geq m(p).\]
The optimal solution is 
\[\alpha^*=\frac{\mathbf{w}(p)-m(p)}{M(p)-m(p)}= \frac{1-\delta}{\delta}\frac{m(p)-u(a^*,p)}{M(p)-m(p)}, \] 
inducing the value
\[(1-\delta)\sum_{t}\delta^t \left(\frac{M(p)-\mathbf{w}(p)}{M(p)-m(p)}\right)^t v(a^*,p)=\frac{M(p)-m(p)}{M(p)-u(a^*,p)}v(a^*,p). \]
The formula has a natural interpretation. Whenever the agent is recommended to play $a^*$, no information has been revealed yet, so that the maximal value of information the principal can create is $M(p)-m(p)$.  To incentivize the agent, the principal needs to promise a continuation payoff of $\mathbf{w}(p)$ in the future and thus needs to create an information value of $\mathbf{w}(p) -m(p)$. To create an information value of $\mathbf{w}(p) -m(p)$, the principal commits to fully disclose the state with some probability, hence foregoing the opportunity to incentivize the agent to play $a^*$ in the future. Therefore, the highest probability with which the principal can incentivize the agent to play $a^*$ is $(M(p)-\mathbf{w}(p))/(M(p)-m(p))$.\medskip

To understand why and when the principal can do better than following the ``random recursive policy,'' we study the \emph{relaxed} version of our problem, where only the (ex-ante) participation constraint needs to be satisfied. Consider the following policy. The principal discloses information at the ex-ante stage, i.e., chooses a splitting $(\lambda_s,p_s)_s$ of $p$, and recommends the agent to play $a^*$ at all periods with probability $\beta_s$ when the realized signal is $s$. We continue to assume that $p \in Q^1$. The policy satisfies the participation constraint if
\begin{equation*}
\sum_{s}\lambda_s\left[\beta_s u(a^*,p_s)+ (1-\beta_s)m(p_s)\right] \geq m(p). 
\end{equation*}
We can rewrite the participation constraint as:
\begin{equation}
\sum_{s} \lambda_s(1-\beta_s)(m(p_s)-u(a^*,p_s)) \geq m(p)-u(a^*,p), \label{eq2}
\end{equation}
where $m(p_s)-u(a^*,p_s)$ is the opportunity cost of following the recommendation at belief $p_s$. The principal maximizes $\sum_s
\lambda_s \beta_s v(a^*,p_s)$ subject to the participation constraint. Clearly, the  participation constraint binds at a maximum. Moreover, since $m$ is convex, the best for the principal is to fully disclose all information, i.e., to split $p$ into $0$ and $1$. 

Note that if the principal recommends $a^*$ with the same probability at all $s$, his payoff is 
\[\frac{M(p)-m(p)}{M(p)-u(a^*,p)}v(a^*,p),\]
which is precisely the payoff of the ``random recursive'' policy.\footnote{When $a^*$ is optimal at $p=0$, we need to add the term $(1-p)\left(1-\frac{M(p)-m(p)}{M(p)-u(a^*,p)}\right)v(a^*,p)$.}  

The principal can do better by exploiting the difference in opportunity costs at the two extreme beliefs $0$ and $1$. Writing $\beta_1$ (resp., $\beta_0$) for the probability of recommending $a^*$ conditional on the posterior being $1$ (resp., $0$), the principal maximizes $p \beta_1v(a^*,1)+ (1-p)\beta_0v(a^*,0)$ subject to:
\[p \beta_1 (m(1)-u(a^*,1))+ (1-p) \beta_0 (m(0)-u(a^*,0)) \leq M(p)-m(p). \] 
The right-hand side is the maximal value of information the principal can create, while the left-hand side is the expected opportunity cost of following the recommendation. As with the ``random disclosure'' policy, the principal needs to generate the maximal value of information; this is the maximal value the principal can use to incentivize the agent. However, unlike the ``random disclosure'' policy, the principal needs to use the surplus created asymmetrically, as it is easier to incentivize the agent in state $\omega_0$ than $\omega_1$.  

More precisely, the problem is linear in $(\beta_0,\beta_1)$. Therefore, since the slope $\frac{v(a^*,0)}{v(a^*,1)}$ is larger than the slope $\frac{m(0)-u(a^*,0)}{m(1)-u(a^*,1)}$, the optimal solution is to set $\beta_0$ as high as possible. For instance, if $M(p)-m(p) \leq (1-p)(m(0)-u(a^*,0))$, the best is to set $(\beta_0,\beta_1)= ( \frac{M(p)-m(p)}{(1-p)(m(0)-u(a^*,0))},0)$, resulting in a payoff of 
\[\frac{M(p)-m(p)}{m(0)-u(a^*,0)}v(a^*,0) \geq \frac{M(p)-m(p)}{M(p)-u(a^*,p)}v(a^*,p), \]
with a strict inequality if the opportunity cost is strictly higher in state $\omega_1$.\footnote{See Appendix \ref{app-first-best} for the full characterization.}  This is the solution to the relaxed constraint.

\medskip

While our policy also needs to incentivize the agent to follow the recommendation, it exploits the same asymmetries in opportunity costs as the above policy, which explains why it outperforms the ``random disclosure'' policy. 

To conclude, note that if $\frac{v(a^*,0)}{v(a^*,1)} = \frac{m(0)-u(a^*,0)}{m(1)-u(a^*,1)}$, then the recursive policy solves the relaxed problem and, therefore, is also optimal. 

\subsection{Proof of Corollary \ref{KG-policy}}

We first compute the principal's payoff induced by our policy. To ease notation, we write $\varphi$ for $\varphi(p,\bold{w}(p))$.  We first assume that $q^*=\q$, compute the value function $V_{\overline{q}^1}(p,m(p))$ for all $p$ and check that it is concave. By construction, the principal's payoff satisfies: 
\[V_{\overline{q}^1}(p,m(p))= (1-\delta)v(a^*,p)+ \delta V_{\overline{q}^1}(p,\bold{w}(p))= (1-\delta)v(a^*,p)+ \delta \frac{1-p}{1-\varphi}v(a^*,\varphi).\]
Remember that
\begin{eqnarray*}
\bold{w}(p)=\frac{m(p)-(1-\delta)u(a_0,p)}{\delta}= \frac{1-p}{1-\varphi}m(\varphi)+\frac{p-\varphi}{1-\varphi}m(1).
\end{eqnarray*}
Since $\bold{w}(p) = m(p)=u(a_0,p)$ when $p \leq \overline{p}$, we have that $\varphi =p$ and, therefore, the principal payoff is $1$ when $p \leq \overline{p}$. Assume that $p > \overline{p}$. We have that: 
\[\bold{w}(p)= \frac{u(a_1,p)-(1-\delta)u(a_0,p)}{\delta}= \frac{1-p}{1-\varphi}u(a_0,\varphi)+\frac{p-\varphi}{1-\varphi}u(a_1,1),\]
since $m(\varphi)=u(a_0,\varphi)$ and $\varphi \leq \overline{p}$. (To see this, if $\varphi > \overline{p}$, then $m(\varphi)= u(a_1,\varphi)$, hence $\bold{w}(p)=m(p)$, a contradiction with $\bold{w}(p) > m(p)$ when $p > \overline{p}$.) 
The above equation is equivalent to:
\[(1-\varphi)[u(a_1,p) - (1-\delta)u(a_0,p)]  = \delta[(1-p)u(a_0,\varphi)+(p-\varphi)u(a_1,1)]. \]
Observing that $u(a,p) = (1-p)(u(a,0)-u(a,1))+u(a,1)$ for all $a$ and, similarly, for $\varphi$, we can simplify the above expression to 
\[ \delta\frac{1-p}{1-\varphi}= \delta-p + (1-p)\frac{u(a_0,0)-u(a_1,0)}{u(a_1,1)-u(a_0,1)}.\]
Lastly, remember that the threshold $\overline{p}$ is given by:  
\[1-\overline{p}=\frac{u(a_1,1)-u(a_0,1)}{u(a_0,0)-u(a_0,1) + u(a_1,1)-u(a_1,0)}, \]
and, therefore, 
\begin{eqnarray*}
V_{\overline{q}^1}(p,m(p))& = & v(a^*,p) + \delta \left(1- \frac{1-p}{1-\varphi}\right)v(a^*,1) \\
& = & \frac{1-p}{1-\overline{p}}v(a^*,\overline{p}) + \left[1-\frac{1-p}{1-\overline{p}}+ \delta \left(1- \frac{1-p}{1-\varphi}\right) \right]v(a^*,1)\\
& = &  \frac{1-p}{1-\overline{p}}v(a^*,\overline{p}). 
\end{eqnarray*}

Since the KG's policy induces the same payoff, it is also optimal. 

\subsection{First best}\label{app-first-best}
This section provides detail about the solution to the first-best problem, which we study when comparing our policy with the random disclosure policy.  Let 
\[\alpha_1^*=1-\frac{m(p)-u(a^*,p)}{p(m(1)-u(a^*,1))}= \frac{M(p)-m(p) - (1-p)(m(0)-u(a^*,0))}{p(m(1)-u(a^*,1))}.\]
Note that $\alpha_1^* \leq 1$, with equality if $m(p)=u(a^*,p)$), and $\alpha_1^* <0$ if $M(p)-m(p) - (1-p)(m(0)-u(a^*,0)) <0$.

At an optimum, the participation constraint clearly binds. If $m(0)-u(a^*,0)=0$, the solution is clearly $(1,\frac{M(p)-m(p)}{p(m(1)-u(a^*,1))})$. Assume that $m(0)-u(a^*,0)>0$. We can rewrite the principal's objective as a function of $\alpha_1$: 
\begin{eqnarray*}
\begin{cases}
p\alpha_1v(a^*,1)+(1-p)v(a^*,0) &\text{\;if\;}\alpha_1 \leq \max(0,\alpha_1^*),\\
p \alpha_1\left(v(a^*,1)- v(a^*,0)\frac{m(1)-u(a^*,1)}{m(0)-u(a^*,0)}\right )+ \frac{M(p)-m(p)}{m(0)-u(a^*,0)}v(a^*,0) & \text{\;if\;} \max(0,\alpha^*_1) \leq \alpha_1 \leq \frac{M(p)-m(p)}{p(m(1)-u(a^*,1))},\\
-\infty & \text{otherwise}.
\end{cases}
\end{eqnarray*}
Note that the objective is continuous in $\alpha_1$. The optimal payoff is therefore:
\[p \max(0,\alpha_1^*)v(a^*,1)+ (1-p)\max\left(\frac{M(p)-m(p)}{(1-p)(m(0)-u(a^*,0))}, 1\right)v(a^*,0), \]
obtained with $(\alpha_0,\alpha_1)= \left(\frac{M(p)-m(p)}{(1-p)(m(0)-u(a^*,0))}, 0\right)$ if  $\frac{M(p)-m(p)}{(1-p)(m(0)-u(a^*,0))} \leq 1$ and 
$(\alpha_0,\alpha_1)= (1,\alpha_1^*)$, otherwise.

\bibliographystyle{ecta}
\bibliography{references-renou.bib}

\end{document}